  \newtheorem{definition}{Definition}[section]
  \newtheorem{lemma}[definition]{Lemma}
  \newtheorem{theorem}[definition]{Theorem}
  \newtheorem{corollary}[definition]{Corollary}
  \newtheorem{remark}[definition]{Remark}
  \newtheorem{property}[definition]{Property}
  \newtheorem{example}[definition]{Example}
\def\pushright#1{{              
    \parfillskip=0pt            
    \widowpenalty=1000000000         
    \displaywidowpenalty=10000  
    \finalhyphendemerits=0      
    \leavevmode                 
    \unskip                     
    \nobreak                    
    \hfil                       
    \penalty50                  
    \hskip.2em                  
    \null                       
    \hfill                      
    {#1}                        
    \par}}                      
\newenvironment{proof}{{\bf Proof: }}
{\pushright{\rule{.5em}{.5em}}\penalty-700 \smallskip}
\def\ra{\rightarrow}
\def\Ra{\Rightarrow}
\def\la{\leftarrow}
\def\da{\downarrow}
\def\Da{\Downarrow}
\def\ua{\uparrow}
\def\Ua{\Uparrow}
\def\lbd{\lambda}
\def\lam{\lambda}
\def\ul{\underbar}
\def\lws{\lambda_{ws}}
\def\lwsn{\lambda_{wsn}}
\def\ls{\lambda\sigma}
\def\lsigma{\lambda\sigma}
\def\lsn{\lambda\sigma_n}
\def\lbdch{\overline\lbd\mu\tilde\mu}
\def\es{\emptyset}
\def\sm{\setminus}
\def\bigspace{\;\;\;\;\;\;\;\;\;\;\;\;\;\;\;\;\;\;\;\;\;}
\newcommand{\lab}[1]{\langle #1 \rangle}
\newcommand{\Upshift}[3]{\mathcal{U}_{#1}^{#2}(#3)}
\newcommand{\Fliftcons}[2]{I_{#1}(#2)}
\newcommand{\Fliftshift}[2]{J_{#1}(#2)}
\newcommand{\Fshift}[2]{K_{#1}(#2)}
\newcommand{\ol}[1]{\overline{#1}}
\newcommand{\ub}[1]{\underline{#1}}
\def\skelin{\preccurlyeq}
\def\rel{\lessdot}
\def\relsim{\mbox{\rotatebox{90}{\!$\lessdot$}}}
\def\trg{\triangleright}
\def\espace{\;\;\;\;\;\;\;\;\;\;\;\;\;\;\;\;\;\;}
\newcommand{\indice}[1]{1\underset{#1}{\underbrace{[\ua]...[\ua]}}}
\title{Deriving SN from PSN: a general proof technique}
\author{Emmanuel Polonowski}
\date{}
\begin{document}

  \thispagestyle{empty}

\vspace*{-3.5cm}
\hspace*{-3.5cm}
\begin{minipage}{18cm}

  \noindent\includegraphics[height=2.1cm]{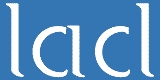}\hfill
  \includegraphics[height=2.1cm]{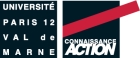}

  \vspace*{0.5cm}

  \hrule

  \vspace*{1cm}

  \begin{center}
    {\LARGE\bf Deriving SN from PSN:}
  \end{center}

  \begin{center}
    {\LARGE\bf a general proof technique}
  \end{center}

  \vspace*{1cm}

  \begin{center}
    {\bf Emmanuel Polonowski}
  \end{center}

  \vspace*{5cm}

  \begin{center}
    {\it April\ \ 2006}\\[1ex]
    {\large\bf TR--LACL--2006--5}
  \end{center}

  \vspace*{6cm}

  \hrule

  {\begin{center}
      \textbf{Laboratoire d'Algorithmique, Complexit\'e et Logique (LACL)}\\
	     {\small{\bf  D\'epartement d'Informatique}\\
	       {\bf Universit\'e Paris~12 -- Val~de~Marne, Facult\'e des Science et
		 Technologie}\\
	       61, Avenue du G\'en\'eral de Gaulle, 94010 Cr\'eteil cedex, France\\
	       Tel.: (33)(1) 45 17 16 47,\ \  Fax: (33)(1) 45 17 66 01}
  \end{center}}

\end{minipage}

  \clearpage

  \thispagestyle{empty}

  \vspace*{3cm}

  \begin{center}
    Laboratory of Algorithmics, Complexity and Logic (LACL)\\
    University Paris 12 (Paris Est)\\[1ex]
    Technical Report {\bf TR--LACL--2006--5}\\[2ex]
    E.~Polonowski.\\ {\it Deriving SN from PSN: a general proof technique}\\
  \end{center}

  \copyright\ \  E.~Polonowski,  April 2006.

  \clearpage

\setcounter{page}{1}

\maketitle






\begin{abstract}
  In the framework of explicit substitutions there is two termination
  properties: preservation of strong normalization (PSN), and
  strong normalization (SN). Since there are not easily proved,
  only one of them is usually established (and sometimes none).
  We propose here a connection between them which helps to get SN
  when one already has PSN. For this purpose, we formalize
  a general proof technique of SN
  which consists in expanding substitutions into ``pure'' $\lambda$-terms
  and to inherit SN of the whole calculus by SN of the ``pure'' calculus
  and by PSN. We apply it successfully to a large set of calculi with
  explicit substitutions, allowing us to establish SN, or, at least, to
  trace back the failure of SN to that of PSN.
\end{abstract}

\tableofcontents

\section{Introduction}

Calculi with explicit substitutions were
introduced~\cite{AbaCarCurLev91} as a bridge
between $\lambda$-calculus~\cite{Chu41,Bar81} and concrete implementations of functional
programming languages. Those calculi intend to refine the evaluation
process by proposing reduction rules to deal with the substitution
mechanism --  a \textit{meta}-operation in the traditional $\lambda$-calculus.
It appears that, with those new rules, it was much harder (and sometimes impossible)
to get termination properties.

The two main termination properties of calculi with explicit substitutions are:
\begin{list}{\textbullet}{}
\item \textbf{Preservation of strong normalization} (PSN), which says that if a
  pure term (\textit{i.e.} without explicit substitutions)
  is strongly normalizing (\textit{i.e.} cannot be
  infinitely reduced) in the pure calculus (\textit{i.e.} the calculus
  without explicit substitutions),
  then this term is also strongly normalizing with respect to
  the calculus with explicit substitutions.
\item \textbf{Strong normalization} (SN), which says that, with respect to a
  typing system, every typed term is strongly normalizing in the
  calculus with explicit substitutions, \textit{i.e.} every terms in
  the subset of typed terms cannot be infinitely reduced.
\end{list}

These two properties are not redundant, and Fig.~\ref{fig:lambdaPSN}
shows the differences between them. PSN says that the horizontally
and diagonally hatched rectangle is included in the diagonally hatched rectangle.
SN says that the vertically hatched rectangle is included in
the diagonally hatched rectangle. Even if they work on a different set of terms,
there is a common part: the vertically and horizontally hatched rectangle, which
represent the typed pure terms.

\begin{figure}[htb]
\begin{center}
  \includegraphics{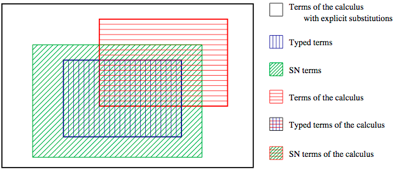}
\end{center}
\caption{Termination properties}
\label{fig:lambdaPSN}
\end{figure}

SN and PSN are both termination properties, although their proofs
are not always clearly related: sometimes
SN is shown independently of PSN (directly,
by simulation, \emph{etc.}, see for
example~\cite{CosKesPol03,DavGui03}),
sometimes SN proofs uses PSN (see for example~\cite{Blo97}).
We present here a general proof
technique of SN via PSN, initially suggested by H. Herbelin,
which uses that common part of typed pure terms.

\bigskip

More formally, we may introduce the following notations: we denote $\Lambda$ the set
of $\lambda$-terms, $\Lambda_T$ the set of typed $\Lambda$-terms with a given typing system $T$,
$\Lambda_{SN}$ the set of terminating $\Lambda$-terms ({\em i.e.} with a finite derivation tree);
we denote $\Lambda^X$, $\Lambda^X_T$, $\Lambda^X_{SN}$ the corresponding set for calculi with
eXplicit substitutions.

By definition, we have the following set inclusions:

$$
\Lambda_T \subset \Lambda \;\;\;\; \mbox{and} \;\;\;\; \Lambda_{SN} \subset \Lambda
$$
$$
\Lambda^X_T \subset \Lambda^X \;\;\;\; \mbox{and} \;\;\;\; \Lambda^X_{SN} \subset \Lambda^X
$$
$$
\Lambda \subset \Lambda^X \;\;\;\; \mbox{and} \;\;\;\; \Lambda_T \subset \Lambda^X_T
$$

The usual strong normalisation property of typed $\lambda$-calculus gives

$$
\Lambda_T \subset \Lambda_{SN}
$$



As regard to calculi with explicit substitutions, we have the following properties.
At first, the property PSN gives

$$
\Lambda_{SN} \subset \Lambda^X_{SN}
$$

At last, the strong normalization property of typed $\Lambda^X$-terms completes
with the following inclusion:

$$
\Lambda^X_T \subset \Lambda^X_{SN}
$$


\bigskip

In the following section, we formalize a proof technique that exploits this diagram
and in
the remaining sections we apply this technique to a set of calculi.
This set has been chosen
for the variety of their definitions: with or without De Bruijn indices,
unary or multiple substitutions, with or without composition of
substitutions, and even a symmetric non-deterministic calculus.
In the last section, we briefly talk about perspectives in this
framework.

\section{Proof Technique}

The idea of this technique is the following. Let $t$ be a typed term
with explicit substitutions for which we want to show termination.
With the help of its typing judgment, we build a typed pure term
$t'$ which can be reduced to $t$. For that purpose, we expand
the substitutions of $t$ into redexes. We call this expansion $\mathit{Ateb}$
(the opposite of $\mathit{Beta}$ which is usually the name of the rule
which creates explicit substitutions).
Then, with SN of the pure calculus and PSN, we can export the strong normalization
of $t'$ (in the pure calculus) to $t$ (in the calculus with explicit substitutions).

\bigskip

In practice, this sketch will only apply in some cases, and some others
will require some adjustment to this technique. For our technique to work, we need
that the $\mathit{Ateb}$ expansion satisfies some properties. The first one
is always easily checked.

\begin{property}[Preservation of typability]\label{prop:type-direct}
  If $t$ is typable, with respect to a typing system $T$,
  in the calculus with explicit substitution,
  then $\mathit{Ateb}(t)$ is typable, with respect to a typing system $T'$ (possibly
  $T'=T$) in the pure calculus.  
\end{property}

Only some calculi can exhibit an $\mathit{Ateb}$ function which satisfies the
second one.

\begin{property}[Initialization]\label{prop:init-direct}
  $\mathit{Ateb}(t)$ reduces to $t$ in zero or more steps in the calculus with
  explicit substitutions.
\end{property}

If we can get it, then we use the direct proof to be presented in section~\ref{sect:direct}.
Otherwise, we need to use the simulation proof to be presented in section~\ref{sect:simul}.

\subsection{Direct proof}\label{sect:direct}

We can immediately establish the theorem.

\begin{theorem}\label{th:direct}
For all typing systems $T$ and $T'$ such that, in the pure calculus,
all typable terms with respect to $T$ are strongly
normalizing, if there exists a function $\mathit{Ateb}$ from explicit
substitution terms to pure terms satisfying
properties~\ref{prop:type-direct} and~\ref{prop:init-direct}
then PSN implies SN (with respect to $T'$).
\end{theorem}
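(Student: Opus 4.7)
The plan is a short diagram chase through the four assumed ingredients. Take any term $t$ that is typable in the calculus with explicit substitutions (with respect to the relevant typing system). First, Property~\ref{prop:type-direct} applied to $t$ gives that $\mathit{Ateb}(t)$ is typable as a pure $\lambda$-term. Second, the strong normalization hypothesis on the pure typing system gives that $\mathit{Ateb}(t)$ is strongly normalizing under pure $\beta$-reduction. Third, PSN applied to the pure SN term $\mathit{Ateb}(t)$ gives that $\mathit{Ateb}(t)$ is also strongly normalizing in $\Lambda^X$. Finally, Property~\ref{prop:init-direct} supplies a reduction $\mathit{Ateb}(t) \to^* t$ in $\Lambda^X$, so $t$ inherits strong normalization from its ancestor $\mathit{Ateb}(t)$.

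The only sub-lemma needed is the elementary closure of SN under reducts: if $u$ is SN and $u \to^* v$ in some rewriting system, then $v$ is SN in that system. This is immediate by contradiction, since any infinite reduction starting from $v$ could be prepended with the finite reduction $u \to^* v$ to yield an infinite reduction from $u$.

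I do not expect any real obstacle inside the proof. The statement is essentially bookkeeping that stitches together the four hypotheses, and all mathematical content has been externalized: into the design of the $\mathit{Ateb}$ function (through Properties~\ref{prop:type-direct} and~\ref{prop:init-direct}), into the SN theorem for the pure typing system, and into the PSN assumption. When the theorem is later applied to a particular calculus, the real work will lie in exhibiting an $\mathit{Ateb}$ for which Property~\ref{prop:init-direct} holds; the introduction already flags that this is not always possible, which is precisely the motivation for the simulation variant to be developed in Section~\ref{sect:simul}.
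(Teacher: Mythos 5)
Your proof is correct and follows exactly the same diagram chase as the paper: typability of $\mathit{Ateb}(t)$ via Property~\ref{prop:type-direct}, SN in the pure calculus, transfer by PSN, and then closure of SN under the reduction $\mathit{Ateb}(t)\ra^* t$ from Property~\ref{prop:init-direct}. Making the small closure-under-reducts lemma explicit is a harmless addition; the paper treats it as immediate.
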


\begin{proof}
  For every typed term $t$ of the calculus with explicit substitution, $\mathit{Ateb}(t)$
  is a pure typed term (by property~\ref{prop:type-direct}). By
  the strong normalization hypothesis of the typed pure
  calculus, we have $\mathit{Ateb}(t)\in\Lambda_{SN}$.
  By hypothesis of PSN
  we obtain that $\mathit{Ateb}(t)$ is in $\Lambda^X_{SN}$. By
  property~\ref{prop:init-direct}, we get $\mathit{Ateb}(t)\ra^* t$, which
  gives us directly $t\in\Lambda^X_{SN}$.
\end{proof}

\subsection{Simulation proof}\label{sect:simul}

We must relax some constraints on $\mathit{Ateb}$. We will try to find
an expansion of $t$ to $t'$ such that $t'$ reduces to a term $u$
and there exists a relation $\mathcal{R}$ with $u\mathcal{R} t$. The chosen relation
must, in addition, enable a simulation of the reductions of $t$
by the reduction of $u$. If it is possible, we can infer
strong normalization of $t$ from strong normalization of $u$.

To proceed with the simulation, we first split the reduction
rules of the calculus with explicit substitutions into two
disjoints sets. The set $R_1$ contains rules which are
trivially terminating, and $R_2$ contains the others. Secondly,
we build a relation $\mathcal{R}$ which satisfies the following properties.

\begin{property}[Initialisation]\label{prop:init-simul}
  For every typed term $t$, there exists a term $u\mathcal{R} t$ such that
  $\mathit{Ateb}(t)$ reduces in 0 or more steps to $u$ in the calculus
  with explicit substitutions.
\end{property}

\begin{property}[Simulation $^*$]\label{prop:simul-*}
  For every term $t$, if $t\ra_{R_1} t'$ then, for every $u\mathcal{R} t$,
  there exists $u'$ such that $u\ra^*u'$ and $u'\mathcal{R} t'$.
\end{property}

\begin{property}[Simulation $^+$]\label{prop:simul-+}
  For every term $t$, if $t\ra_{R_2} t'$ then, for every $u\mathcal{R} t$,
  there exists $u'$ such that $u\ra^+u'$ and $u'\mathcal{R} t'$.
\end{property}

We display those properties as diagrams :
$$
\begin{array}{r@{\ }l@{\ }c}
  \multicolumn{3}{c}{\mbox{Initialisation}} \\ \\
  & & t \\
  & \swarrow & \mathcal{R} \\
  \mathit{Ateb}(t) & \ra^* & u
\end{array}
\hspace{2cm}
\begin{array}{ccc}
  \multicolumn{3}{c}{\mbox{Simulation $^*$}} \\ \\
  t & \;\;\ra_{R_1}\;\; & t' \\
  \mathcal{R} & & \mathcal{R} \\
  u & \ra^* & u'
\end{array}
\hspace{2cm}
\begin{array}{ccc}
  \multicolumn{3}{c}{\mbox{Simulation $^+$}} \\ \\
  t & \;\;\ra_{R_2}\;\; & t' \\
  \mathcal{R} & & \mathcal{R} \\
  u & \ra^+ & u'
\end{array}
$$

With this material, we can establish the theorem.

\begin{theorem}\label{th:simul}
For all typing systems $T$ and $T'$ such that, in the pure calculus,
all typable terms with respect to $T$ are strongly
normalizing, if there exists a function $\mathit{Ateb}$ from explicit
substitution terms to pure terms and a relation $\mathcal{R}$ on
explicit substitutions terms satisfying
properties~\ref{prop:type-direct}, \ref{prop:init-simul},
\ref{prop:simul-*} and~\ref{prop:simul-+}
then PSN implies SN (with respect to $T'$).
\end{theorem}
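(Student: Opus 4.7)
The plan is to mimic the structure of the proof of Theorem~\ref{th:direct}, but replace the single reduction $\mathit{Ateb}(t)\ra^* t$ by the $\mathcal{R}$-linked ``shadow'' term $u$ supplied by Property~\ref{prop:init-simul}, and argue by contradiction using the simulation diagrams. First I would fix a $T'$-typed term $t$ of the calculus with explicit substitutions. By Property~\ref{prop:type-direct}, $\mathit{Ateb}(t)$ is $T$-typable in the pure calculus, so by the SN hypothesis on $T$ it lies in $\Lambda_{SN}$, and then PSN gives $\mathit{Ateb}(t)\in\Lambda^X_{SN}$. Property~\ref{prop:init-simul} produces $u$ with $\mathit{Ateb}(t)\ra^* u$ and $u\mathcal{R} t$, and since $\Lambda^X_{SN}$ is closed under reduction, $u\in\Lambda^X_{SN}$ too.

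Now suppose, for contradiction, that $t$ admits an infinite reduction $t=t_0\ra t_1\ra t_2\ra\cdots$. Starting from $u_0:=u$, I would inductively construct $u_i$ with $u_i\mathcal{R} t_i$ by repeatedly applying Property~\ref{prop:simul-*} (for $R_1$-steps) or Property~\ref{prop:simul-+} (for $R_2$-steps) to the pair $(u_i,t_i)$; at each stage the required $u_{i+1}$ exists, so a simple dependent choice yields a chain $u=u_0\ra^* u_1\ra^* u_2\ra^*\cdots$ in $\Lambda^X$.

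The core of the argument is then a case split on how many $R_2$-steps occur in $t_0\ra t_1\ra\cdots$. If infinitely many, then Property~\ref{prop:simul-+} ensures that the corresponding transitions $u_i\ra^+ u_{i+1}$ contribute at least one reduction each, so the constructed chain is an infinite $\ra$-reduction out of $u$, contradicting $u\in\Lambda^X_{SN}$. If only finitely many $R_2$-steps occur, then from some index $N$ onward the tail $t_N\ra_{R_1} t_{N+1}\ra_{R_1}\cdots$ is an infinite reduction using only $R_1$, contradicting the standing assumption that $R_1$ is trivially terminating. Either case yields a contradiction, so no infinite reduction from $t$ exists, i.e.\ $t\in\Lambda^X_{SN}$.

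The main obstacle I anticipate is the second case: the statement of the theorem does not explicitly re-list ``$R_1$ is terminating'' as a hypothesis, so I would either have to promote this assumption (stated informally in the text preceding Property~\ref{prop:init-simul}) to a bona fide hypothesis, or else strengthen Property~\ref{prop:simul-*} to require a strict reduction $u\ra^+ u'$ for those $R_1$-rules which are not individually terminating. Everything else is routine bookkeeping on the simulation diagrams.
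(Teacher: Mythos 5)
Your proof is correct and follows essentially the same route as the paper's: establish $u\in\Lambda^X_{SN}$ via typability of $\mathit{Ateb}(t)$, SN of the typed pure calculus and PSN, then derive a contradiction by simulating an alleged infinite reduction of $t$ from $u$ using properties~\ref{prop:simul-*} and~\ref{prop:simul-+}. The only difference is that you make explicit the case analysis (infinitely many $R_2$-steps versus an infinite $R_1$-only tail) and the reliance on termination of $R_1$, which the paper leaves implicit in the phrase ``we can build an infinite reduction from $u$'' and in its standing assumption that $R_1$ contains only trivially terminating rules.
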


\begin{proof}
  We prove it by contradiction. Let $t$ be a typed term with explicit
  substitutions which can be infinitely reduced. By
  property~\ref{prop:init-simul} there exists a term $u$ such that
  $\mathit{Ateb}(t)\ra^* u$, and $\mathit{Ateb}(t)$ is a pure typed term
  (by property~\ref{prop:type-direct}). By the strong
  normalization hypothesis of the typed pure
  calculus, we have $\mathit{Ateb}(t)\in\Lambda_{SN}$.
  By hypothesis of PSN
  we obtain that $\mathit{Ateb}(t)$ is in $\Lambda^X_{SN}$ and
  it follows that $u\in\Lambda^X_{SN}$.

  By property~\ref{prop:init-simul}, we also have $u\mathcal{R} t$, and,
  with properties~\ref{prop:simul-*} and~\ref{prop:simul-+},
  we can build an infinite reduction from $u$, contradicting the
  strong normalization of $u$. 
\end{proof}

\section{$\lambda{\tt x}$-calculus}

The $\lambda{\tt x}$-calculus~\cite{BloRos95,BloGeu99} is probably the simplest
calculus with explicit substitutions.
It only makes the substitution explicit.
Since this calculus provides no rules to deal with substitutions
composition, it preserves strong normalization. It is for this calculus
that the technique has been originate used by Herbelin. Therefore, we can
without surprises apply the direct proof to get strong
normalization.

\subsection{Definition}

Terms of the $\lambda{\tt x}$-calculus are given by the following grammar:

$$
t ::= x\ |\ (t\ t)\ |\ \lambda x.t\ |\ t[t/x]
$$

Here follows the reduction rules:

$$
\begin{array}{rcl}
  (\lambda x.t) u & \ra_{\mathit{Beta}} & t[u/x] \\
  (t\ u)[v/x] & \ra_{\mathit{App}} & (t[v/x])\ (u[v/x]) \\
  (\lambda x.t)[u/y] & \ra_{\mathit{Lambda}} & \lambda x.(t[u/y]) \\
  x[t/x] & \ra_{\mathit{Var1}} & t \\
  y[t/x] & \ra_{\mathit{Var2}} & y \\
\end{array}
$$

The rule $\mathit{Lambda}$ is applied modulo $\alpha$-conversion of the bound variable $x$.

Here follows the typing rules:

\begin{center}
  \begin{tabular}{cc}
    \infer{\Gamma,x:A\vdash x:A}{} &
    \infer
      {\Gamma\vdash t[u/x]:A}
      {\Gamma\vdash u:B & \Gamma,x:B\vdash t:A} \\ & \\
    \infer
      {\Gamma\vdash (t\ u):A}
      {\Gamma\vdash t:B\ra A & \Gamma\vdash u:B} &
    \infer
      {\Gamma\vdash \lambda x.t:B\ra A}
      {\Gamma,x:B\vdash t:A} \\
  \end{tabular}
\end{center}

\subsection{Strong normalisation proof}

We define the $\mathit{Ateb}$ function as follows:

\begin{center}
  \begin{tabular}{lll}
    $\mathit{Ateb}(x)$ & $=$ & $x$ \\
    $\mathit{Ateb}(t\ u)$ & $=$ & $\mathit{Ateb}(t)\ \mathit{Ateb}(u)$ \\
    $\mathit{Ateb}(\lambda x.t)$ & $=$ & $\lambda x.\mathit{Ateb}(t)$ \\
    $\mathit{Ateb}(t[u/x])$ & $=$ & $(\lambda x.\mathit{Ateb}(t))\ \mathit{Ateb}(u)$ \\
  \end{tabular}
\end{center}

Remark that $\mathit{Ateb}$ performs the exact reverse rewriting of the rule $\mathit{Beta}$.
It straightforwardly follows that if $t'=\mathit{Ateb}(t)$ then
$t'\ra_{\mathit{Beta}}^*t$ and $\mathit{Ateb}(t)$ does not contain any substitutions.

We check that the $\mathit{Ateb}(t)$ is typable.

\begin{lemma}
  $$\Gamma\vdash t:A \;\;\Ra\;\; \Gamma\vdash \mathit{Ateb}(t):A$$
\end{lemma}

\begin{proof}
  By induction on the typing derivation of $t$. The only non-trivial case is that of
  substitutions. We have $t=u[v/x]$ and

  $$
  \infer
    {\Gamma\vdash u[v/x]:A}
    {\Gamma\vdash v:B & \Gamma,x:B\vdash u:A}
  $$

  By induction hypothesis, we have $\Gamma,x:B\vdash \mathit{Ateb}(u):A$ and
  $\Gamma\vdash \mathit{Ateb}(v):B$. We can build the typing derivation
  of $\mathit{Ateb}(t)=\lambda x.\mathit{Ateb}(u))\ \mathit{Ateb}(v)$
  as follows

  $$
  \infer
    {\Gamma\vdash (\lambda x.\mathit{Ateb}(u))\ \mathit{Ateb}(v):A}
    {\infer
      {\Gamma\vdash \lambda x.\mathit{Ateb}(u):B\ra A}
      {\Gamma,x:B\vdash \mathit{Ateb}(u):A}
      & \Gamma\vdash \mathit{Ateb}(v):B}
  $$
\end{proof}

We can apply Theorem~\ref{th:direct}.

\begin{corollary}
  Since the $\lambda{\tt x}$-calculus enjoys PSN~\cite{BloRos95} and the
  $\lambda$-calculus enjoys SN of simply-typed terms~\cite{Kri90}, we conclude
  that the $\lambda{\tt x}$-calculus enjoys SN of simply-typed terms.
\end{corollary}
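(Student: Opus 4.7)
The plan is to verify that all hypotheses of Theorem~\ref{th:direct} are met, with $T=T'$ the simply-typed system of $\lambda{\tt x}$, so that the conclusion PSN $\Ra$ SN kicks in, and then combine it with the cited PSN result for $\lambda{\tt x}$.

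First, I would invoke strong normalization of the simply-typed pure $\lambda$-calculus~\cite{Kri90} as the required SN property of typable pure terms. Second, I would check Property~\ref{prop:type-direct} (preservation of typability): this is exactly the lemma just proved, taking the same typing system on both sides. Third, I would check Property~\ref{prop:init-direct} (initialization): this follows from the observation made right after the definition of $\mathit{Ateb}$, namely that $\mathit{Ateb}$ is the exact inverse of the rule $\mathit{Beta}$, so a straightforward induction on the structure of $t$ shows $\mathit{Ateb}(t)\ra_{\mathit{Beta}}^* t$, giving the required reduction in zero or more steps in $\lambda{\tt x}$.

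Having verified both hypotheses on $\mathit{Ateb}$, Theorem~\ref{th:direct} yields that PSN for $\lambda{\tt x}$ implies SN for simply-typed $\lambda{\tt x}$. It remains only to cite PSN for the $\lambda{\tt x}$-calculus~\cite{BloRos95} to close the argument.

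There is essentially no obstacle here: the whole point of the preceding development (the definition of $\mathit{Ateb}$, the remark that $\mathit{Ateb}(t)\ra_{\mathit{Beta}}^* t$, and the typing lemma) has been to make this corollary a one-line application of Theorem~\ref{th:direct}. The only care needed is to spell out which external results play the role of PSN and of pure-calculus SN in the statement of the theorem.
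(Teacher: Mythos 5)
Your proposal is correct and follows exactly the paper's route: the paper establishes Property~\ref{prop:init-direct} via the remark that $\mathit{Ateb}$ exactly reverses $\mathit{Beta}$ (so $\mathit{Ateb}(t)\ra_{\mathit{Beta}}^*t$), establishes Property~\ref{prop:type-direct} via the typing lemma, and then applies Theorem~\ref{th:direct} together with the cited PSN and pure-calculus SN results. Nothing is missing and no different ideas are introduced.
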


\section{$\lambda\upsilon$-calculus}

The $\lambda\upsilon$-calculus~\cite{Les94,Les96} is the De Bruijn counterpart of $\lambda{\tt x}$.
As $\lambda{\tt x}$, it has no composition rules, and therefore satisfies PSN.
For this calculus, we must use the simulation proof to deal with
indices modification operators. We succeed to use it and
it is, as far as we know, the first proof of SN for a simply typed version
of $\lambda\upsilon$ (see~\cite{PolPhD04}).

\subsection{Definition}

Terms of $\lambda\upsilon$-calculus are given by the following grammar:

$$
\begin{array}{l}
  t ::= n\ |\ (t\ t)\ |\ \lambda t\ |\ t[s] \\
  s ::= a/\ |\ \Ua(s)\ |\ \ua \\
\end{array}
$$

Remark that a substitution is always build from a (possibly empty) list of
$\Ua$ followed by either a $t/$, or a $\ua$.
We will then write substitutions in a more general form: either $t[\Ua^i(t/)]$,
or $t[\Ua^i(\ua)]$, where $\Ua^i(s)$
denotes
$\underset{i}{\underbrace{\Ua(\Ua(...(\Ua}}(s))...))$.

Here follows the reduction rules:

$$
\begin{array}{lll}
  (\lambda t) u & \ra_{B} & t[u/] \\
  (t\ u)[s] & \ra_{\mathit{App}} & (t[s])\ (u[s]) \\
  (\lambda t)[s] & \ra_{\mathit{Lambda}} & \lambda (t[\Ua(s)]) \\
  1[t/] & \ra_{\mathit{FVar}} & t \\
  n+1[t/] & \ra_{\mathit{RVar}} & n \\
  1[\Ua(s)] & \ra_{\mathit{FVarLift}} & 1 \\
  n+1[\Ua(s)] & \ra_{\mathit{RVarLift}} & n[s][\ua] \\
  n[\ua] & \ra_{\mathit{VarShift}} & n+1 \\
\end{array}
$$

Here follows the typing rules (where $n=|\Gamma|+1$) :

\begin{center}
  \begin{tabular}{cc}
    \infer{\Gamma,A,\Delta\vdash n:A}{} &
    \infer
      {\Gamma\vdash t[s]:A}
      {\Gamma\vdash s\trg \Gamma' & \Gamma'\vdash t:A} \\ & \\
    \infer
      {\Gamma\vdash (t\ u):A}
      {\Gamma\vdash t:B\ra A & \Gamma\vdash u:B} &
    \infer
      {\Gamma\vdash \lambda t:B\ra A}
      {B,\Gamma\vdash t:A} \\
  \end{tabular}
\end{center}

\begin{center}
  \begin{tabular}{c@{\espace}c@{\espace}c}
    \infer
      {\Gamma\vdash t/\trg A,\Gamma}
      {\Gamma\vdash t:A} &
    \infer
      {A,\Gamma\vdash \ua\trg\ \Gamma}{} &
    \infer
      {A,\Gamma\vdash \Ua(s)\trg A,B,\Gamma}
      {\Gamma\vdash s\trg B,\Gamma} \\
  \end{tabular}
\end{center}

\subsection{Strong normalisation proof}\label{sect:lu-sn}

We define the $\mathit{Ateb}$ function as follows:

$$
\begin{array}{lll}
  \mathit{Ateb}(n) & = & n \\
  \mathit{Ateb}(t\ u) & = & \mathit{Ateb}(t)\ \mathit{Ateb}(u) \\
  \mathit{Ateb}(\lambda t) & = & \lambda \mathit{Ateb}(t) \\
  \mathit{Ateb}(t[u/]) & = & (\lambda \mathit{Ateb}(t))\ \mathit{Ateb}(u) \\
  \mathit{Ateb}(t[\Ua^i(u/)]) & = & (\lambda \Fliftcons{i}{\mathit{Ateb}(t)})\
  \Fshift{i}{\mathit{Ateb}(u)} \\
  \mathit{Ateb}(t[\Ua^i(\ua)]) & = & \Fliftshift{i}{\mathit{Ateb}(t)} \\
\end{array}
$$

\begin{example}
  For instance, if we suppose that for any $tt$ among $t$, $u$, $v$,
  $w$ we have $tt=\mathit{Ateb}(tt)$, then we get
  $$
  \mathit{Ateb}((t[u/]\ v[\Ua(\Ua(\Ua(w/)))])[\Ua(\Ua(\ua))]) \;\;\; = \;\;\;
  \Fliftshift{2}{((\lambda t)u)\ ((\lambda\Fliftcons{3}{v})\Fshift{3}{w})}
  $$
\end{example}

Where $\Fliftcons{i}{t})$, $\Fshift{i}{t}$ and $\Fliftshift{i}{t}$ are functions
that we will define in the sequel. The intuition about those function is the following:
substitutions perform some re-indexing of terms upon which they are applied, those functions
intend to anticipate this re-indexing. To understand the necessity of those functions,
let us look at some typing derivation. To begin with, we take $t[\Ua^i(u/)]$, where
$\Delta=D_i,...,D_1$ ($i=|\Delta|$) :

$$
\infer
  {\Delta,\Gamma\vdash t[\Ua^i(u/)]:A}
  {\infer
    {D_i,D_{i-1},...,D_1,\Gamma\vdash\Ua^i(u/)\trg D_i,D_{i-1},...,D_1,B,\Gamma}
    {\infer
      {D_{i-1},...,D_1,\Gamma\vdash\Ua^{i-1}(u/)\trg
        D_{i-1},...,D_1,B,\Gamma}
      {\infer
        {\vdots}
        {\infer
          {D_1,\Gamma\vdash\Ua(u/)\trg D_1,B,\Gamma}
          {\infer
            {\Gamma\vdash u/\trg B,\Gamma}
            {\Gamma\vdash u:B}}}}} &
   \Delta,B,\Gamma\vdash t:A}
$$

We would like to type a term of the form $(\lambda t') u'$,
that is

$$
\infer
  {\Delta,\Gamma\vdash(\lambda t') u':A}
  {\infer
    {\Delta,\Gamma\vdash\lambda t':B\ra A}
    {B,\Delta,\Gamma\vdash t':A} &
   \Delta,\Gamma\vdash u':B}
$$

The problem is to build a term $t'$ from $t$ which would be
typeable in the environment $B,\Delta,\Gamma$ instead of
$\Delta,B,\Gamma$ and a term $u'$ from $u$ which would be
typeable in the environment $\Delta,\Gamma$ instead of $\Gamma$. This
is exactly the work of the functions $\Fliftcons{i}{\ }$ and
$\Fshift{i}{\ }$ respectively. Look now at the typing derivation
of $t[\Ua^i(\ua)]$, where $\Delta=D_i,...,D_1$ ($i=|\Delta|$) :

$$
\infer
  {\Delta,B,\Gamma\vdash t[\Ua^i(\ua)]:A}
  {\infer
    {D_i,D_{i-1},...,D_1,B,\Gamma\vdash\Ua^i(\ua)\trg D_i,D_{i-1},...,D_1,\Gamma}
    {\infer
      {D_{i-1},...,D_1,B,\Gamma\vdash\Ua^{i-1}(\ua)\trg
        D_{i-1},...,D_1,\Gamma}
      {\infer
        {\vdots}
        {\infer
          {D_1,B,\Gamma\vdash\Ua(\ua)\trg D_1,\Gamma}
          {B,\Gamma\vdash\ua\trg\Gamma}}}} &
   \Delta,\Gamma\vdash t:A}
$$

The problem here is to build a term $t'$ from
$t$ which would be
typeable in the environment $\Delta,B,\Gamma$ instead of
$\Delta,\Gamma$. This is done by the function $\Fliftshift{i}{\ }$. We can state
the property that should verify those functions.

\begin{property}\label{p:fct-lu}
  For any term $t$ we have (with $i=|\Delta|$) :
  \begin{itemize}
  \item $\Gamma\vdash t:A \;\;\;\Ra\;\;\;
    \Delta,\Gamma\vdash\Fshift{i}{t}:A$
  \item $\Delta,B,\Gamma\vdash t:A \;\;\;\Ra\;\;\;
    B,\Delta,\Gamma\vdash\Fliftcons{i}{t}:A$
  \item $\Delta,\Gamma\vdash t:A \;\;\;\Ra\;\;\;
    \Delta,B,\Gamma\vdash\Fliftshift{i}{t}:A$
  \end{itemize}
\end{property}

We can then check that the term obtained by the function $\mathit{Ateb}$ is typeable.

\begin{lemma}
  $$\Gamma\vdash t:A \;\;\Ra\;\; \Gamma\vdash\mathit{Ateb}(t):A$$
\end{lemma}

\begin{proof}
  By induction on the typing derivation of $t$.

  \begin{itemize}
  \item $t=n$ and

    $$
    \infer{\Gamma,A,\Delta\vdash n:A}{}
    $$

    We then have $\mathit{Ateb}(t)=n$ and the same typing derivation.

  \item $t=(u\ v)$ and

    $$
    \infer
      {\Gamma\vdash (u\ v):A}
      {\Gamma\vdash u:B\ra A & \Gamma\vdash v:B}
    $$

    By induction hypothesis, we have $\Gamma\vdash \mathit{Ateb}(u):B\ra A$ and
    $\Gamma\vdash \mathit{Ateb}(v):B$. We can type $\mathit{Ateb}(t)=\mathit{Ateb}(u)\
    \mathit{Ateb}(v)$ as follows

    $$
    \infer
      {\Gamma\vdash (\mathit{Ateb}(u)\ \mathit{Ateb}(v)):A}
      {\Gamma\vdash \mathit{Ateb}(u):B\ra A & \Gamma\vdash \mathit{Ateb}(v):B}
    $$

  \item $t=\lambda u$ and

    $$
    \infer
      {\Gamma\vdash \lambda u:B\ra A}
      {B,\Gamma\vdash u:A}
    $$

    By induction hypothesis, we have $\Gamma,x:B\vdash \mathit{Ateb}(u):A$. We can type
    $\mathit{Ateb}(t)=\lambda \mathit{Ateb}(u)$ as follows

    $$
    \infer
      {\Gamma\vdash \lambda \mathit{Ateb}(u):A}
      {B,\Gamma\vdash \mathit{Ateb}(u):A}
    $$

  \item Cases for $t=u[\Ua^i(v/)]$ and $t=u[\Ua^i(\ua)]$ are treated as discussed above,
    using property~\ref{p:fct-lu}.
  \end{itemize}
\end{proof}

Of course, for any $t$, $\mathit{Ateb}(t)$ does not contain any substitutions.

\subsubsection{Functions definitions}

The function $\Fliftshift{i}{t}$ performs the re-indexing of $t$
as if a substitution $[\Ua^i(\ua)]$ has been propagated. Since it is applied
to terms obtained by the $\mathit{Ateb}$ function, only terms without substitutions
are concerned.

Here follows its definition:

$$
\begin{array}{llll}
  \Fliftshift{i}{n} & = & n+1 & \mbox{if } n>i \\
  \Fliftshift{i}{n} & = & n & \mbox{if } n\leq i \\
  \Fliftshift{i}{t\ u} & = & \Fliftshift{i}{t}\
  \Fliftshift{i}{u} & \\
  \Fliftshift{i}{\lambda t} & = & \lambda\Fliftshift{i+1}{t} & \\
\end{array}
$$

The function $\Fshift{i}{t}$ performs the re-indexing of $t$ as if $i$
substitutions $[\ua]$ have been propagated. We can define it from with the
help of the function $\Fliftshift{i}{t}$ :

$$
\Fshift{i}{t} =
\underset{i}{\underbrace{\Fliftshift{0}{\Fliftshift{0}{...\Fliftshift{0}{t}}}}}
$$

When this function is applied to a variable, we obtain $\Fshift{i}{n}=n+i$.

The function $\Fliftcons{i}{t}$ prepares a term $t$ to be applied to a
substitution that has lost its $\Ua$. It deals also with substitution-free
terms.

Here follows its definition:

$$
\begin{array}{llll}
  \Fliftcons{i}{n} & = & n & \mbox{si } n>i+1 \\
  \Fliftcons{i}{n} & = & 1 & \mbox{si } n=i+1 \\
  \Fliftcons{i}{n} & = & n+1 & \mbox{si } n\leq i \\
  \Fliftcons{i}{t\ u} & = & \Fliftcons{i}{t}\
  \Fliftcons{i}{u} & \\
  \Fliftcons{i}{\lambda t} & = & \lambda\Fliftcons{i+1}{t} & \\
\end{array}
$$

Indices are transformed as follows: since we have deleted $i$ $\Ua$,
the index $i+1$ must become $1$. To reflect this change, every index $j$
smaller than $i+1$ must become $j+1$. The others are let unchanged.

\bigskip

Here follows several useful properties.

\begin{property}\label{p:funshift+1-lu}
  For all $t$, $u$, $i$, $j$, We have

  $$
  \Fshift{i}{t} = \Fshift{j}{u} \;\;\;\;\; \Ra \;\;\;\;\;
  \Fshift{i+1}{t} = \Fshift{j+1}{u}
  $$
\end{property}

\begin{proof}
  Indeed,

  $$
  \Fshift{i}{t} = \Fshift{j}{u} \;\;\;\;\; \Ra \;\;\;\;\;
  \Fliftshift{0}{\Fshift{i}{t}} = \Fliftshift{0}{\Fshift{j}{u}}
  $$
\end{proof}

\begin{property}\label{p:funshift-comp-lu}
  For all $n$ and $i$, we have

  $$
  \Fliftshift{i+1}{n} \;\;\;\;\; = \;\;\;\;\; \Fshift{1}{\Fliftshift{i}{n-1}}
  $$
\end{property}

\begin{proof}
  We calculate the values accordingly to $n$ and $i$.

  \begin{itemize}
  \item if $n>i+1$ then $\Fliftshift{i+1}{n}=n$,
    $\Fliftshift{i}{n-1}=n-1$ and $\Fshift{1}{n-1}=n$.
  \item if $n\leq i+1$ then $\Fliftshift{i+1}{n}=n+1$,
    $\Fliftshift{i}{n-1}=n$ and $\Fshift{1}{n}=n+1$.
  \end{itemize}
\end{proof}

\begin{property}\label{p:funcons-comp-lu}
  For all $n>1$ and $i$, we have

  $$
  \Fliftcons{i+1}{n} \;\;\;\;\; = \;\;\;\;\; \Fliftshift{1}{\Fliftcons{i}{n-1}}
  $$
\end{property}

\begin{proof}
  We calculate the values accordingly to $n$ and $i$.

  \begin{itemize}
  \item if $n>i+2$ then $\Fliftcons{i+1}{n}=n$,
    $\Fliftcons{i}{n-1}=n-1$ and $\Fliftshift{1}{n-1}=n$.
  \item if $n=i+2$ then $\Fliftcons{i+1}{n}=1$,
    $\Fliftcons{i}{n-1}=1$ and $\Fliftshift{1}{1}=1$.
  \item if $n<i+2$ then $\Fliftcons{i+1}{n}=n+1$,
    $\Fliftcons{i}{n-1}=n$ and $\Fliftshift{1}{n}=n+1$.
  \end{itemize}
\end{proof}

\begin{example}
  We apply those function to our example, and we obtain
  $$
  \Fliftshift{2}{((\lambda t)u)\
    ((\lambda\Fliftcons{3}{v})\Fshift{3}{w})} \;\;\; = \;\;\;
  ((\lambda \Fliftshift{3}{t})\Fliftshift{2}{u}) {\ }
  ((\lambda \Fliftshift{3}{\Fliftcons{3}{v}})\Fliftshift{2}{\Fshift{3}{w}})
  $$
\end{example}

We can now prove Property~\ref{p:fct-lu}.

\begin{proof}
  \begin{itemize}
  \item $\Delta,\Gamma\vdash t:A \;\;\;\Ra\;\;\;
    \Delta,B,\Gamma\vdash\Fliftshift{i}{t}:A$.
    By induction on $t$.
    \begin{itemize}
    \item $t=n$ with $n\leq i$: $\Fliftshift{i}{t} = n$. We have

      $$
      \infer{\Delta_1,A,\Delta_2,\Gamma\vdash n:A}{}
      $$

      with $n=|\Delta_1|+1$. We conclude with the following typing derivation

      $$
      \infer{\Delta_1,A,\Delta_2,B,\Gamma\vdash n:A}{}
      $$

    \item $t=n$ with $n>i$: $\Fliftshift{i}{t} = n+1$. We have

      $$
      \infer{\Delta,\Gamma_1,A,\Gamma_2\vdash n:A}{}
      $$

      With $n=|\Delta|+|\Gamma_1|+1$. We the get
      $n+1=|\Delta|+|\Gamma_1|+1+1$ and

      $$
      \infer{\Delta,B,\Gamma_1,A,\Gamma_2\vdash n:A}{}
      $$

    \item $t=(u\ v)$: $\Fliftshift{i}{t} = (\Fliftshift{i}{u}\
      \Fliftshift{i}{v})$ and we conclude by applying twice the induction hypothesis.
    \item $t=\lambda u$ (with $A=C\ra D$) : $\Fliftshift{i}{t} =
      \lambda\Fliftshift{i+1}{u}$. We have

      $$
      \infer
        {\Delta,\Gamma\vdash\lambda u:C\ra D}
        {C,\Delta,\Gamma\vdash u:D}
      $$

      By induction hypothesis, we have
      $C,\Delta,B,\Gamma\vdash\Fliftshift{i+1}{u}:D$ and we can build the
      following typing derivation

      $$
      \infer
        {\Delta,B,\Gamma\vdash\lambda \Fliftshift{i+1}{u}:C\ra D}
        {C,\Delta,B,\Gamma\vdash \Fliftshift{i+1}{u}:D}
      $$
    \end{itemize}

  \item $\Gamma\vdash t:A \;\;\;\Ra\;\;\;
    \Delta,\Gamma\vdash\Fshift{i}{t}:A$. By induction hypothesis on $t$.
    \begin{itemize}
    \item $t=n$:
      $\Fshift{i}{n} =
      \underset{i}{\underbrace{\Fliftshift{0}{\Fliftshift{0}{...\Fliftshift{0}{n}}}}}
      = n+i$. We have

      $$
      \infer{\Gamma_1,A,\Gamma_2\vdash n:A}{}
      $$

      with $n=|\Gamma_1|+1$. Since $i=|\Delta|$, we get
      $n+i=|\Gamma_1|+|\Delta|+1$ and

      $$
      \infer{\Delta,\Gamma_1,A,\Gamma_2\vdash n+i:A}{}
      $$

    \item $t=(u\ v)$: $\Fshift{i}{t} = (\Fshift{i}{u}\
      \Fshift{i}{v})$ and we conclude by applying twice the induction hypothesis.
    \item $t=\lambda u$ (with $A=C\ra D$): $\Fshift{i}{t} =
      \underset{i}{\underbrace{\Fliftshift{0}{\Fliftshift{0}{...\Fliftshift{0}{\lambda u}}}}}
      =
      \lambda\underset{i}{\underbrace{\Fliftshift{1}{\Fliftshift{1}{...\Fliftshift{1}{u}}}}}$.

      We get

      $$
      \infer
        {\Gamma\vdash\lambda u:C\ra D}
        {C,\Gamma\vdash u:D}
      $$

      By the item above, we have

      $$
      \begin{array}{c}
        C,\Gamma\vdash u:D \\
        \Da \\
        C,E_1,\Gamma\vdash \Fliftshift{1}{u}:D \\
        \Da \\
        C,E_2,E_1,\Gamma\vdash \Fliftshift{1}{\Fliftshift{1}{u}}:D \\
        \Da \\ \vdots \\ \Da \\
        C,E_i,...,E_1,\Gamma\vdash
        \underset{i}{\underbrace{\Fliftshift{1}{\Fliftshift{1}{...\Fliftshift{1}{u}}}}} \\
      \end{array}
      $$

      with $\Delta=E_i,...,E_1$. We can then build the following typing derivation

      $$
      \infer
        {\Delta,\Gamma\vdash\Fshift{i}{\lambda u}:C\ra D}
        {C,\Delta,\Gamma\vdash
          \underset{i}{\underbrace{\Fliftshift{1}{\Fliftshift{1}{...\Fliftshift{1}{u}}}}}:D}
      $$
    \end{itemize}

  \item $\Delta,B,\Gamma\vdash t:A \;\;\;\Ra\;\;\;
    B,\Delta,\Gamma\vdash\Fliftcons{i}{t}:A$. By induction on $t$.
    \begin{itemize}
    \item $t=n$ with $n>i+1$: $\Fliftcons{i}{t} = n$. We have

      $$
      \infer{\Delta,B,\Gamma_1,A,\Gamma_2\vdash n:A}{}
      $$

      with $n=|\Delta|+1+|\Gamma_1|+1$. We conclude with the following typing derivation

      $$
      \infer{B,\Delta,\Gamma_1,A,\Gamma_2\vdash n:A}{}
      $$

    \item $t=n$ with $n=i+1$: $\Fliftcons{i}{t} = 1$. We have

      $$
      \infer{\Delta,B,\Gamma\vdash n:B}{}
      $$

      with $n=i+1=|\Delta|+1$. We conclude with the following typing derivation

      $$
      \infer{B,\Delta,\Gamma\vdash 1:A}{}
      $$

    \item $t=n$ with $n\leq i$: $\Fliftcons{i}{t} = n+1$. We have

      $$
      \infer{\Delta_1,A,\Delta_2,B,\Gamma\vdash n:A}{}
      $$

      with $n=|\Delta_1|+1$. We then get
      $n+1=|\Delta_1|+1+1$ and

      $$
      \infer{B,\Delta_1,A,\Delta_2,\Gamma\vdash n:A}{}
      $$

    \item $t=(u\ v)$ : $\Fliftcons{i}{t} = (\Fliftcons{i}{u}\
      \Fliftcons{i}{v})$ and we conclude by applying twice the induction hypothesis.
    \item $t=\lambda u$ (with $A=C\ra D$) : $\Fliftcons{i}{t} =
      \lambda\Fliftcons{i+1}{u}$. We have

      $$
      \infer
        {\Delta,B,\Gamma\vdash\lambda u:C\ra D}
        {C,\Delta,B,\Gamma\vdash u:D}
      $$

      By induction hypothesis, we have
      $C,\Delta,\Gamma\vdash\Fliftcons{i+1}{u}:D$ and we can build the following
      typing derivation

      $$
      \infer
        {\Delta,\Gamma\vdash\lambda \Fliftcons{i+1}{u}:C\ra D}
        {C,\Delta,\Gamma\vdash \Fliftcons{i+1}{u}:D}
      $$
    \end{itemize}
  \end{itemize}
\end{proof}

\subsubsection{Definition of the relation $\rel$}

The function $\mathit{Ateb}$ erases the substitutions $[\Ua^i(\ua)]$ and we will
not be able to recover them by reducing the term obtained, as is shown for the
following example.

\begin{example}
  We continue with our example, we get
  $$
  \begin{array}{c}
  ((\lambda \Fliftshift{3}{t})\Fliftshift{2}{u}) {\ }
  ((\lambda
  \Fliftshift{3}{\Fliftcons{3}{v}})\Fliftshift{2}{\Fshift{3}{w}}) \\
  \ra_B\ra_B \\
  \Fliftshift{3}{t}[\Fliftshift{2}{u}/] {\ }
  \Fliftshift{3}{\Fliftcons{3}{v}}[\Fliftshift{2}{\Fshift{3}{w}}/]
  \end{array}
  $$
\end{example}

We must use the proof by simulation. To perform this simulation, we need a new
function $\ol{t}$ which performs the re-indexing of the erased substitutions.

$$
\begin{array}{lll}
  \ol{n} & = & n \\
  \ol{t\ u} & = & \ol{t}\ \ol{u} \\
  \ol{\lambda t} & = & \lambda \ol{t} \\
  \ol{t[u/]} & = & \ol{t}[\ol{u}/] \\
  \ol{t[\Ua^i(u/)]} & = & \Fliftcons{i}{\ol{t}}[\Fshift{i}{\ol{u}}/] \\
  \ol{t[\Ua^i(\ua)]} & = & \Fliftshift{i}{\ol{t}} \\
\end{array}
$$

This function will deal with terms that might contain substitutions. We need then
to extend their definition. By the way, since the function $\ol{t}$ removes from
$t$ the $\Ua$ and $\ua$, we will restrain our-self to the simple substitution case:

$$
\begin{array}{llll}
  \Fliftshift{i}{t[u/]} &=& \Fliftshift{i+1}{t}[\Fliftshift{i}{u}/] & \\
  \Fliftcons{i}{t[u/]} &=& \Fliftcons{i+1}{t}[\Fliftcons{i}{u}/] & \\
\end{array}
$$

The function $\ol{\cdot}$ commute with the other function, as stated in the following lemmas.

\begin{lemma}\label{l:commute-ol-fls}
  for all $i$ and $t$ (without $\Ua$ and $\ua$) we have

  $$
  \ol{\Fliftshift{i}{t}} \;\;\;\; = \;\;\;\; \Fliftshift{i}{\ol{t}}
  $$
\end{lemma}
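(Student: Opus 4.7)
The plan is to proceed by structural induction on $t$. Since $t$ is restricted to terms without $\Ua$ and $\ua$, the syntax to consider is just variables, applications, abstractions, and the simple substitution $u[v/]$, so there are only four cases. Throughout, I will rely on the extension of $\Fliftshift{i}{\cdot}$ to simple substitutions given immediately before the lemma, namely $\Fliftshift{i}{t[u/]} = \Fliftshift{i+1}{t}[\Fliftshift{i}{u}/]$.

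For the variable case $t = n$, both sides reduce directly: $\ol{n} = n$, and $\Fliftshift{i}{n}$ equals either $n$ or $n+1$ depending on whether $n \leq i$, and in each sub-case the bar of a variable is itself, so the two sides coincide. The application case $t = (u\ v)$ is immediate because both $\ol{\cdot}$ and $\Fliftshift{i}{\cdot}$ distribute over application; two applications of the induction hypothesis close it. For the abstraction case $t = \lambda u$, I unfold $\Fliftshift{i}{\lambda u} = \lambda \Fliftshift{i+1}{u}$, then push $\ol{\cdot}$ inside the $\lambda$, apply the induction hypothesis at index $i+1$ to get $\lambda \Fliftshift{i+1}{\ol{u}}$, and finally refold to $\Fliftshift{i}{\lambda \ol{u}} = \Fliftshift{i}{\ol{\lambda u}}$.

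The only case requiring a bit of bookkeeping is $t = u[v/]$. Here I compute the left-hand side as
$$
\ol{\Fliftshift{i}{u[v/]}} \;=\; \ol{\Fliftshift{i+1}{u}[\Fliftshift{i}{v}/]} \;=\; \ol{\Fliftshift{i+1}{u}}\bigl[\ol{\Fliftshift{i}{v}}/\bigr],
$$
apply the induction hypothesis to both $u$ (at index $i+1$) and $v$ (at index $i$) to rewrite this as $\Fliftshift{i+1}{\ol{u}}[\Fliftshift{i}{\ol{v}}/]$, and then recognise the right-hand side as $\Fliftshift{i}{\ol{u}[\ol{v}/]} = \Fliftshift{i}{\ol{u[v/]}}$ by the extended definition.

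I do not expect a real obstacle here: the statement is a routine commutation lemma, and the only mildly delicate point is that the induction hypothesis must be applied at a shifted index ($i+1$) under binders and under the left component of a simple substitution, which is exactly why the definitions of both $\ol{\cdot}$ and the extended $\Fliftshift{i}{\cdot}$ were crafted to increment the index in the same way. Once those increments line up, all four cases close mechanically.
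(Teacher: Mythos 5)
Your proof is correct and follows essentially the same route as the paper: structural induction on $t$ (with $i$ universally quantified), where the variable case is checked directly and the application, abstraction and simple-substitution cases follow from the induction hypothesis via the defining (and extended) clauses of $\ol{\cdot}$ and $\Fliftshift{i}{\cdot}$. The paper merely compresses the non-variable cases into ``by induction hypothesis,'' whereas you spell out the index shift to $i+1$ under binders and under the body of $u[v/]$, which is exactly the point that makes those cases go through.
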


\begin{proof}
  By induction on $t$.

  \begin{itemize}
  \item If $t=n$, then $\Fliftshift{i}{t}=n'$, $\ol{n'}=n'$ on one side, and
    $\ol{n} = n$ on the other side.
  \item In all the other cases, we conclude by induction hypothesis.
  \end{itemize}
\end{proof}

\begin{lemma}\label{l:commute-ol-flc}
  For all $i$ and $t$ (without $\Ua$ and $\ua$) we have

  $$
  \ol{\Fliftcons{i}{t}} \;\;\;\; = \;\;\;\; \Fliftcons{i}{\ol{t}}
  $$
\end{lemma}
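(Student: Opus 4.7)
The plan is to proceed by structural induction on $t$, exactly in parallel with the proof of Lemma~\ref{l:commute-ol-fls}, treating the extended definition of $\Fliftcons{i}{\cdot}$ on simple substitutions as the only non-base clause that requires real work.

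First, for the base case $t = n$, I would split into the three sub-cases that define $\Fliftcons{i}{n}$. If $n > i+1$, then $\Fliftcons{i}{n} = n$, and since $\overline{n} = n$, both $\overline{\Fliftcons{i}{n}}$ and $\Fliftcons{i}{\overline{n}}$ evaluate to $n$. If $n = i+1$, then $\Fliftcons{i}{n} = 1$ and the computation again gives $1$ on both sides. If $n \leq i$, both sides yield $n+1$. So the variable case is a direct unfolding.

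For the compound cases $t = (u\ v)$ and $t = \lambda u$, I would apply the induction hypothesis to the immediate subterms (using index $i+1$ under the binder), and both $\overline{\cdot}$ and $\Fliftcons{i}{\cdot}$ commute with the syntactic constructors by definition, so the equality lifts directly. For the substitution case $t = u[v/]$ (which is admissible because we only excluded $\Uparrow$ and $\uparrow$, not the basic $t/$ form), I would unfold using the extended definition
$$\Fliftcons{i}{u[v/]} = \Fliftcons{i+1}{u}[\Fliftcons{i}{v}/],$$
push $\overline{\cdot}$ through the substitution using its defining clause for $t[u/]$, apply the induction hypothesis to $u$ at index $i+1$ and to $v$ at index $i$, and then reassemble on the other side using the same two definitions in reverse.

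There is no real obstacle here: the statement was carefully set up so that $\Fliftcons{\cdot}{\cdot}$ on a term containing only the simple substitution $[u/]$ mirrors the plain definition on substitution-free terms, and $\overline{\cdot}$ acts as the identity on the re-indexing data, so the two operations commute clause by clause. The only point requiring mild attention is keeping the indices aligned in the abstraction and substitution cases (both of which shift $i$ to $i+1$ on the binder side, matching the recursion of $\Fliftcons{}$).
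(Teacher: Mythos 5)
Your proof is correct and follows essentially the same route as the paper: a structural induction on $t$ where the variable case is a direct computation (the paper just notes $\Fliftcons{i}{n}$ is some index $n'$ with $\ol{n'}=n'$, while you spell out the three sub-cases) and all compound cases, including the simple substitution $u[v/]$ handled via the extended clause, follow by the induction hypothesis. No gap to report.
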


\begin{proof}
  By induction on $t$.

  \begin{itemize}
  \item If $t=n$, then $\Fliftcons{i}{t}=n'$, $\ol{n'}=n'$ on one side, and
    $\ol{n} = n$ on the other side.
  \item In all the other cases, we conclude by induction hypothesis.
  \end{itemize}
\end{proof}

\begin{lemma}\label{l:commute-ol-fs}
  For all $i$ and $t$ (without $\Ua$ and $\ua$) we have

  $$
  \ol{\Fshift{i}{t}} \;\;\;\; = \;\;\;\; \Fshift{i}{\ol{t}}
  $$
\end{lemma}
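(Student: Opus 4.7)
The plan is to exploit the fact that $\Fshift{i}{\cdot}$ is, by its very definition, an $i$-fold iteration of $\Fliftshift{0}{\cdot}$, and Lemma~\ref{l:commute-ol-fls} already tells us that $\ol{\cdot}$ commutes with each individual application of $\Fliftshift{0}{\cdot}$. So the result follows by induction on $i$ rather than by induction on $t$.

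More concretely, I would proceed as follows. For the base case $i=0$, the definition gives $\Fshift{0}{t}=t$ and $\Fshift{0}{\ol{t}}=\ol{t}$, so both sides of the equality are equal to $\ol{t}$. For the inductive step, suppose the equality holds for some $i$, and consider $i+1$. By definition,
$$
\Fshift{i+1}{t} \;=\; \Fliftshift{0}{\Fshift{i}{t}},
$$
so applying $\ol{\cdot}$ and using Lemma~\ref{l:commute-ol-fls} (on the term $\Fshift{i}{t}$, which contains no $\Ua$ or $\ua$ since $t$ does not and the re-indexing functions preserve this) gives
$$
\ol{\Fshift{i+1}{t}} \;=\; \ol{\Fliftshift{0}{\Fshift{i}{t}}} \;=\; \Fliftshift{0}{\ol{\Fshift{i}{t}}}.
$$
The induction hypothesis then rewrites the inner term as $\Fshift{i}{\ol{t}}$, and unfolding the definition of $\Fshift{i+1}{\ol{t}}$ yields the desired equality.

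The only minor subtlety to check before invoking Lemma~\ref{l:commute-ol-fls} is that the hypothesis ``without $\Ua$ and $\ua$'' is preserved under $\Fliftshift{0}{\cdot}$; this is immediate from the clauses in the definition of $\Fliftshift{i}{\cdot}$, none of which introduce lifts or shift markers. I do not expect any genuine obstacle: the lemma is essentially a bookkeeping consequence of Lemma~\ref{l:commute-ol-fls} together with the definitional unfolding of $\Fshift{i}{\cdot}$.
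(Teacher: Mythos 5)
Your proof is correct and follows the same route as the paper: the paper simply states that the lemma is a direct consequence of Lemma~\ref{l:commute-ol-fls}, and your induction on $i$, unfolding $\Fshift{i}{\cdot}$ as an $i$-fold iteration of $\Fliftshift{0}{\cdot}$, is exactly the spelled-out version of that remark. The side check that terms without $\Ua$ and $\ua$ stay so under $\Fliftshift{0}{\cdot}$ is the right detail to verify and holds as you say.
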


\begin{proof}
  This is a direct consequence of Lemma~\ref{l:commute-ol-fls}.
\end{proof}

We can check that this function is correct w.r.t. our example.

\begin{example}
  Here is the final term obtain for our example:
  $$
  \begin{array}{c}
    \ol{\Fliftshift{3}{t}[\Fliftshift{2}{u}/] {\ }
      \Fliftshift{3}{\Fliftcons{3}{v}}[\Fliftshift{2}{\Fshift{3}{w}}/]} \\
    = \\
    \Fliftshift{3}{\ol{t}}[\Fliftshift{2}{\ol{u}}/] {\ }
      \Fliftshift{3}{\Fliftcons{3}{\ol{v}}}[\Fliftshift{2}{\Fshift{3}{\ol{w}}}/]
  \end{array}
  $$
  Here is the original term:
  $$
  \begin{array}{c}
    \ol{(t[u/]\ v[\Ua(\Ua(\Ua(w/)))])[\Ua(\Ua(\ua))]} \\
    = \\
    \Fliftshift{2}{\ol{t}[\ol{u}/]\ \Fliftcons{3}{\ol{v}}[\Fshift{3}{\ol{w}}/]} \\
    = \\
    \Fliftshift{3}{\ol{t}}[\Fliftshift{2}{\ol{u}}/] {\ }
      \Fliftshift{3}{\Fliftcons{3}{\ol{v}}}[\Fliftshift{2}{\Fshift{3}{\ol{w}}}/]
  \end{array}
  $$
\end{example}

We also need an order relation on the skeleton of terms. We want that $t\skelin t'$
if and only if $t$ contains $[\ua]$ and $\Ua$ only where $t'$ contains them also.
We formalize this definition as follows:

$$
\begin{array}{ccc}
  \mbox{for all } n\mbox{ and } m & & n\skelin m \\
  t\skelin t' \mbox{ and } u\skelin u' & \Ra & (t\ u)\skelin(t'\ u') \\
  t\skelin t' & \Ra & \lambda t\skelin\lambda t' \\
  t\skelin t' & \Ra & t\skelin t'[\ua] \\
  t\skelin t' \mbox{ and } s\skelin s' & \Ra & t[s]\skelin t'[s'] \\
\end{array}
$$

$$
\begin{array}{ccc}
  & & \ua\skelin \ua \\
  t\skelin t' & \Ra & t/\skelin t'/ \\
  s\skelin s' & \Ra & \Ua(s)\skelin\Ua(s') \\
  s\skelin s' & \Ra & s\skelin\Ua(s') \\
\end{array}
$$

\begin{example}
We have $t[\Ua(t'/)]\skelin t[\ua][\Ua(\Ua(\Ua(t'/)))]$.
\end{example}

From this relation and the function $\ol{t}$, we can build a relation to perform
our simulation. We note this relation $\rel$ and we define it as follows:

$$
t\rel t' \iff \ol{t} = \ol{t'} \mbox{ and } t\skelin t'
$$

Remark that we always have $t\rel t$. We can now initialize our simulation.

\begin{lemma}[Initialization]\label{l:Ateb-sim-lu}
  For all $t$, there exists $u$ such that $\mathit{Ateb}(t) \ra_B^* u$ and $u\rel t$.
\end{lemma}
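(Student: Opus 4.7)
My plan is structural induction on $t$, building the witness $u$ from witnesses obtained inductively for the subterms and mirroring the shape of $\mathit{Ateb}(t)$. In the atomic case $t=n$, in the structural cases $t=(v\ w)$ and $t=\lambda v$, and in the simple substitution case $t=v[w/]$, no re-indexing is involved: I take $u$ equal to $n$, to $u_v\,u_w$, to $\lambda u_v$, and to $u_v[u_w/]$ respectively. In the last case one outer $B$-step is needed, after reducing inside via the IH. The conditions $\ol{u}=\ol{t}$ and $u\skelin t$ then follow from the defining clauses of $\ol{\cdot}$ and the congruence rules of $\skelin$, combined with the IH.

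The two re-indexing cases are the substantive ones. For $t=v[\Ua^i(w/)]$ with $i>0$, I will lift the IH reductions through $\Fliftcons{i}$ and $\Fshift{i}$ and then perform one outer $B$-step to obtain $u:=\Fliftcons{i}{u_v}[\Fshift{i}{u_w}/]$. This requires a preliminary observation that $\ra_B^*$ is preserved by each of $\Fliftshift{j}$, $\Fliftcons{j}$, $\Fshift{j}$, since these only renumber variables. The equation $\ol{u}=\ol{t}$ then follows from combining the IH with Lemmas~\ref{l:commute-ol-fls}, \ref{l:commute-ol-flc}, \ref{l:commute-ol-fs}; for the skeleton side, I will apply the congruence rule $t[s]\skelin t'[s']$, with $\Fshift{i}{u_w}/\skelin \Ua^i(w/)$ obtained by iterating the rule $s\skelin s'\Ra s\skelin \Ua(s')$. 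For $t=v[\Ua^i(\ua)]$ no outer $B$-step is available, so I will take $u:=\Fliftshift{i}{u_v}$, with the reductions taking place entirely inside; $\ol{u}=\ol{t}$ is then direct from Lemma~\ref{l:commute-ol-fls}.

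I expect the main obstacle to lie in this last case: showing $\Fliftshift{i}{u_v}\skelin v[\Ua^i(\ua)]$, since the explicit weakening rule $t\skelin t'\Ra t\skelin t'[\ua]$ only accounts for a bare $[\ua]$, not $[\Ua^i(\ua)]$ with $i>0$. I will address this by first proving two auxiliary facts: (i) $u\skelin v$ implies $F(u)\skelin v$ for any of the re-indexing functions $F$, because $F$ introduces no $\ua$ or $\Ua$ and all pairs of variables are $\skelin$-related; and (ii) transitivity of $\skelin$, by induction on the derivation. Given these, the required condition follows from the chain $\Fliftshift{i}{u_v}\skelin v\skelin v[\ua]\skelin v[\Ua^i(\ua)]$, where the first step uses (i) applied to the IH $u_v\skelin v$, the second is the explicit weakening rule, the third combines the congruence $t[s]\skelin t'[s']$ with $\ua\skelin\Ua^i(\ua)$ (itself obtained by iterating $s\skelin s'\Ra s\skelin \Ua(s')$ from $\ua\skelin \ua$), and the whole chain is collapsed by (ii). The analogous weakening required in the $\Ua^i(w/)$ subcase is handled in exactly the same way.
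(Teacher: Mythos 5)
Your proof follows the same route as the paper's: the same structural induction, the same witnesses ($u=\Fliftshift{i}{u'}$ in the $[\Ua^i(\ua)]$ case and $u=\Fliftcons{i}{u_1}[\Fshift{i}{u_2}/]$ after one $B$-step in the $[\Ua^i(u/)]$ case), and the same use of Lemmas~\ref{l:commute-ol-fls}, \ref{l:commute-ol-flc} and~\ref{l:commute-ol-fs} for the $\ol{\cdot}$-equations. You are in fact more careful than the paper, which leaves implicit both the compatibility of $\ra_B$ with the re-indexing functions and the skeleton conditions (dismissed there as ``trivial''); your observation that the clause $t\skelin t'\Ra t\skelin t'[\ua]$ does not literally yield $t\skelin t'[\Ua^i(\ua)]$ identifies a genuine defect, but note that your repair via transitivity stumbles on the same defect (with the literal rules one has $n\skelin m[\ua]$ and $m[\ua]\skelin k[\Ua(\ua)]$ yet $n\not\skelin k[\Ua(\ua)]$, since the only rule concluding $x\skelin y[s]$ with $x$ not itself a closure requires $s=\ua$), so the clean fix is to strengthen that clause of the definition of $\skelin$ to $t\skelin t'\Ra t\skelin t'[\Ua^i(\ua)]$ rather than to derive it.
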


\begin{proof}
  By induction on $t$.

  \begin{itemize}
  \item If $t=n$, then $\mathit{Ateb}(t) = n$ and it is enough to take $u=n$.
  \item If $t=t_1\ t_2$, then $\mathit{Ateb}(t) = \mathit{Ateb}(t_1)\ \mathit{Ateb}(t_2)$.
    By induction hypothesis, there exists $u_1$ and $u_2$ such that $\mathit{Ateb}(t_1)
    \ra_B^* u_1$ and $\mathit{Ateb}(t_2) \ra_B^* u_2$ with $u_1\rel t_1$ and $u_2\rel
    t_2$. We take $u=u_1\ u_2$.
  \item If $t=\lambda t'$, then we proceed as above using the induction hypothesis for $t'$.
  \item If $t=t'[\Ua^i(\ua)]$, then $\mathit{Ateb}(t) =
    \Fliftshift{i}{\mathit{Ateb}(t')}$. By induction hypothesis, there exists
    $u'$ such that $\mathit{Ateb}(t')\ra_B^* u'$ and $u'\rel t'$. We take
    $u=\Fliftshift{i}{u'}$ and we check that $u\rel t$, that is
    $\ol{u}=\ol{t}$ and $u\skelin t$. This last condition is trivial since
    $u'\skelin t'$. We calculate $\ol{u} = \ol{\Fliftshift{i}{u'}}$, which is equal to
    $\Fliftshift{i}{\ol{u'}}$ by Lemma~\ref{l:commute-ol-fls}. $\ol{t} =
    \ol{t'[\Ua^i(\ua)]} = \Fliftshift{i}{\ol{t'}}$, and we conclude since $\ol{u'}=\ol{t'}$.
  \item If $t=t_1[\Ua^i(t_2/)]$, then $\mathit{Ateb}(t) =
    (\lambda \Fliftcons{i}{\mathit{Ateb}(t_1)}) \Fshift{i}{\mathit{Ateb}(t_2)}$. By induction
    hypothesis, there exists $u_1$ and $u_2$ such that $\mathit{Ateb}(t_1)
    \ra_B^* u_1$ and $\mathit{Ateb}(t_2) \ra_B^* u_2$ with $u_1\rel t_1$ and $u_2\rel
    t_2$. We take $u'=(\lambda \Fliftcons{i}{u_1})
    \Fshift{i}{u_2}$ for which it is clear that
    $\mathit{Ateb}(t)\ra_B^*u'$. We have $u'\ra_B
    \Fliftcons{i}{u_1}[\Fshift{i}{u_2}/]$, we take this last term as $u$
    and we check that $u\rel t$, that is
    $\ol{u}=\ol{t}$ and $u\skelin t$. This last condition is trivial
    since $u_1\skelin t_1$ and $u_2\skelin t_2$. We calculate $\ol{u} =
    \ol{\Fliftcons{i}{u_1}[\Fshift{i}{u_2}/]}$, which is equal to
    $\Fliftcons{i}{\ol{u_1}}[\Fshift{i}{\ol{u_2}}/]$ by
    Lemmas~\ref{l:commute-ol-flc} and~\ref{l:commute-ol-fs}. $\ol{t} = \ol{t_1[\Ua^i(t_2/)]} =
    \Fliftcons{i}{\ol{t_1}}[\Fshift{i}{\ol{t_2}}/]$, and we conclude since
    $\ol{u_1}=\ol{t_1}$ and $\ol{u_2}=\ol{t_2}$.
  \end{itemize}
\end{proof}

\subsubsection{Simulation Lemmas}

We need several lemmas in order to prove the simulation of reductions of $\lambda\upsilon$.
We separate the reduction rules in two subset: we call $R_1$ the set containing the rule $B$
alone and $R_2$ the set containing all the other rules. Of course, $R_2$ is strongly normalizing
(see~\cite{Les96}). We want to establish the following diagrams:

$$
\begin{array}{ccc}
  t & \ra_B & t' \\
  \relsim & & \relsim \\
  u & \ra_{\lambda\upsilon}^+ & u'
\end{array}
\bigspace
\begin{array}{ccc}
  t & \ra_{R_2} & t' \\
  \relsim & & \relsim \\
  u & \ra_{\lambda\upsilon}^* & u'
\end{array}
$$

We look first at the simulation of $B$, then at that of the other.

\begin{lemma}\label{l:simul-B}
  For all $t\ra_B t'$, for all $u\rel t$ there exists $u'$ such that
  $u\ra_B u'$ and $u'\rel t'$.

  $$
  \begin{array}{ccc}
    t & \ra_B & t' \\
    \relsim & & \relsim \\
    u & \ra_B & u'
  \end{array}
  $$
\end{lemma}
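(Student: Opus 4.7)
The plan is to proceed by structural induction on $t$, with case analysis on the position of the $B$-redex inside $t$. In every case I invert the derivation of $u \skelin t$ to expose the subterm of $u$ sitting under the same context as the redex in $t$, fire a single $B$-step there to obtain $u'$, and then verify $u' \skelin t'$ and $\ol{u'} = \ol{t'}$ using the commutation Lemmas~\ref{l:commute-ol-fls}, \ref{l:commute-ol-flc} and~\ref{l:commute-ol-fs}.

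The base case is when the redex is at the top, $t = (\lambda t_1)\,t_2$ and $t' = t_1[t_2/]$. Since no $\skelin$-rule can introduce or drop an abstraction or an application, inversion forces $u = (\lambda u_1)\,u_2$ with $u_1 \skelin t_1$ and $u_2 \skelin t_2$, and the equality $\ol{u} = \ol{t}$ then projects to $\ol{u_i} = \ol{t_i}$; firing $B$ on $u$ gives $u' = u_1[u_2/]$, and both $u' \skelin t'$ and $\ol{u'} = \ol{t'}$ follow from the clause defining $\ol{\cdot}$ on $[\cdot/]$. For the congruence cases where $t$ is an application or an abstraction and the redex sits in a subterm, the same inversion argument fixes the top constructor of $u$ and supplies a pair in $\rel$ on the relevant subterm, so that the induction hypothesis delivers the required $B$-step and the outer constructor transports $\skelin$ and $\ol{\cdot}$ through without interference.

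The delicate cases are the substitutions, $t = t_1[s]$: inversion on $\skelin$ gives $u = u_1[s']$ with $u_1 \skelin t_1$ and $s' \skelin s$, and the redex can sit either in $t_1$ or in the term carried by $s$ (the shapes $s = t_2/$, $\Ua^i(t_2/)$, $\Ua^i(\ua)$ are treated uniformly); once the induction hypothesis has produced the $B$-step on the corresponding subterm of $u$, the verification $\ol{u'} = \ol{t'}$ is discharged by pushing $\ol{\cdot}$ through the re-indexing functions that appear in the unfolding of $\ol{t_1[\Ua^i(u/)]}$ and $\ol{t_1[\Ua^i(\ua)]}$ via the three commutation lemmas. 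The main obstacle is the asymmetric part of $\skelin$: the rules $t \skelin t'[\ua]$ and $s \skelin \Ua(s')$ let $u$ omit an outer $[\ua]$ or a $\Ua$ that $t$ carries, and in such a configuration the natural sub-witness of $u$ is no longer in $\rel$-relation with the matching sub-$t$, because its $\ol{\cdot}$ differs from that of the sub-$t$ by exactly the effect of the missing substitution. I expect to unblock this by exploiting that $B$-reduction commutes with each of $\Fliftshift{i}{\cdot}$, $\Fliftcons{i}{\cdot}$ and $\Fshift{i}{\cdot}$: a $B$-step on the unshifted witness of $u_1$ can be transported to the shifted one, and, using the commutation lemmas again, the shifted equation between $\ol{u_1'}$ and $\ol{t_1'}$ can be wrapped back under the re-indexing so that $\ol{u'} = \ol{t'}$ is restored.
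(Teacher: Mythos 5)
Your core mechanism is the same as the paper's: since no rule generating $\skelin$ can erase an application or an abstraction node, inverting $u\skelin t$ at the position of the redex forces the matching subterm of $u$ to be a $B$-redex $(\lambda v')\,w'$ with components related to those of $t$, and firing it yields the required $u'$. The difference is in scope. The paper's proof consists solely of the root case $t=(\lambda v)\,w$ and treats the closure under contexts as immediate, whereas you carry out the full induction over the position of the redex, and in doing so you correctly isolate the one place where the routine congruence argument breaks: when $t=t_1[\Ua^i(\ua)]$ or $t=t_1[\Ua^i(w/)]$ and $u$ has dropped the decoration via the asymmetric clauses of $\skelin$, the equality $\ol{u}=\ol{t}$ only yields $\ol{u_1}=\Fliftshift{i}{\ol{t_1}}$ (resp.\ an equation through $\Fliftcons{i}{\cdot}$ and $\Fshift{i}{\cdot}$) rather than $\ol{u_1}=\ol{t_1}$, so the induction hypothesis does not apply to the pair $(u_1,t_1)$ directly. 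Your repair is sound: these functions only renumber indices, so they send the redex $(\lambda a)\,b$ to $(\lambda\Fliftshift{i+1}{a})\,\Fliftshift{i}{b}$ and the contractum $a[b/]$ to $\Fliftshift{i+1}{a}[\Fliftshift{i}{b}/]$, hence a $B$-step can be transported through them and the equality of $\ol{\cdot}$-images wrapped back under the re-indexing. The price of your route is this auxiliary commutation fact, which the paper never states (its Lemmas~\ref{l:commute-ol-fls}--\ref{l:commute-ol-fs} commute $\ol{\cdot}$ with the re-indexing functions, not $B$-reduction with them); what it buys is an actual proof for redexes under arbitrary contexts, which the paper's two-line argument leaves implicit.
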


\begin{proof}
  Let $t=(\lambda v) w$ and
  $(\lambda v) w\ra_B v[w/]$, every terms
  $u\rel t$ are of the form $(\lambda v') w'$ with $v'\rel v$ and
  $w'\rel w$, we can the reduce $(\lambda v') w'\ra_B v'[w'/]$
  and the conclusion follows immediately.
\end{proof}

\begin{lemma}\label{l:simul-R_2}
  For all $t\ra_{R_2} t'$, for all $u\rel t$ there exists $u'$ such that
  $u\ra^*_{\lambda\upsilon} u'$ and $u'\rel t'$.

  $$
  \begin{array}{ccc}
    t & \ra_{R_2} & t' \\
    \relsim & & \relsim \\
    u & \ra_{\lambda\upsilon}^* & u'
  \end{array}
  $$
\end{lemma}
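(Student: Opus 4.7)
The plan is to prove Lemma~\ref{l:simul-R_2} by induction on the derivation of $t \ra_{R_2} t'$. The congruence steps (reductions strictly inside an application, an abstraction, or a substitution) reduce to the IH applied to the reducing subterm: the outer term former is reconstructed on both sides of the diagram using that $\ol{\cdot}$ is defined compositionally and that $\skelin$ is closed under the corresponding constructors.

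For each of the seven at-root cases, I would proceed in three steps. First, enumerate the possible shapes of $u$ consistent with $u \skelin t$: the ``absorbing'' clauses of $\skelin$, namely $t \skelin t'[\ua]$ when $t \skelin t'$ and $s \skelin \Ua(s')$ when $s \skelin s'$, mean that $u$ can either carry the matching redex of $t$ verbatim or have its surrounding substitution partially ``compressed'' into fewer $\Ua$'s (with the missing shifts absorbed into de~Bruijn adjustments on the subterms). Second, use the assumption $\ol{u} = \ol{t}$ together with the commutation Lemmas~\ref{l:commute-ol-fls}, \ref{l:commute-ol-flc} and~\ref{l:commute-ol-fs} to pin down the indices inside $u$ as specific re-indexed versions of those in $t$. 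Third, choose $u'$ either by firing the matching rule at the redex in $u$ (one $\lambda\upsilon$-step) or by setting $u' = u$ when the relevant substitution has already been absorbed; then verify $\ol{u'} = \ol{t'}$ and $u' \skelin t'$ by direct calculation.

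The easy cases are $VarShift$, $FVar$ and $RVar$: either the rule applies in $u$ and we fire it, or the surrounding substitution has been absorbed into an index in $u$ and a short computation shows $\ol{u} = \ol{t'}$, so $u' = u$ works in zero steps. The propagation rules $App$ and $Lambda$ are slightly longer, since one must also track how many $\Ua$'s of $u$'s substitution survive, but still amount to firing the same rule one level down and using the skeleton congruences to reassemble. The main obstacle will be the $RVarLift$ rule $n+1[\Ua(s)] \ra n[s][\ua]$, whose reduct creates both a freshly nested substitution and a new $[\ua]$ layer not present in $t$; verifying $\ol{u'} = \ol{t'}$ here requires Properties~\ref{p:funshift+1-lu}, \ref{p:funshift-comp-lu} and~\ref{p:funcons-comp-lu}, which were introduced exactly to reconcile the shift/lift bookkeeping produced by this split. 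The $FVarLift$ case is analogous but lighter, since its reduct is a bare numeral. In every case $u \ra^* u'$ uses at most a bounded number of $\lambda\upsilon$-steps, so the $\ra^*$ side of the diagram is satisfied.
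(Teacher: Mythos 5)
Your plan matches the paper's proof: a case analysis on the $R_2$ rule fired, enumerating the shapes of $u$ permitted by $\skelin$ (substitution present with $j\leq i$ lifts, or absorbed entirely), pinning down indices via $\ol{u}=\ol{t}$ and the commutation lemmas, and closing the hard $\mathit{RVarLift}$ case with Properties~\ref{p:funshift-comp-lu} and~\ref{p:funcons-comp-lu} (the paper uses Property~\ref{p:funshift+1-lu} in the $\mathit{Lambda}$ case rather than $\mathit{RVarLift}$, a negligible difference). This is essentially the paper's argument and the identified difficulty points are the right ones.
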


\begin{proof}
  By case on the rule of $R_2$.
  \begin{itemize}
  \item $\mathit{FVar}$: $1[v/]\ra v$. Every terms
    $u\rel 1[v/]$ are of the form $1[v'/]$ with $v'\rel v$ and
    $1[v'/]\ra_{\mathit{FVar}} v'$.

    \medskip
  \item $\mathit{RVar}$: $n+1[v/]\ra n$. Every terms $u\rel
    n+1[v/]$ are of the form
    $n+1[v'/]$ with $v'\rel v$ and $n+1[v'/]\ra_{\mathit{RVar}} n$.

    \medskip
  \item $\mathit{App}$: $t=(v\ w)[s]\ra (v[s])\ (w[s])=t'$. We proceed by case on the form
    of$s$.
    \begin{itemize}
    \item if $s=\Ua^i(\ua)$ then the terms $u\rel (v\ w)[\Ua^i(\ua)]$
      might have two distinct forms:

      \begin{itemize}
      \item either $u=(v'\ w')[\Ua^j(\ua)]$ with $v'\rel v$, $w'\rel w$,
        $j\leq i$ and $\ol{u}=\ol{t}$, that is:

        $$
        \begin{array}{ccc}
          \ol{(v'\ w')[\Ua^j(\ua)]} &=& \ol{(v\ w)[\Ua^i(\ua)]} \\
          = & & = \\
          \Fliftshift{j}{\ol{v'}\ \ol{w'}} &=& \Fliftshift{i}{\ol{v}\ \ol{w}} \\
          = & & = \\
          \Fliftshift{j}{\ol{v'}}\ \Fliftshift{j}{\ol{w'}} & = &
          \Fliftshift{i}{\ol{v}}\ \Fliftshift{i}{\ol{w}} \\
        \end{array}
        $$

        which implies $\Fliftshift{j}{\ol{v'}}=\Fliftshift{i}{\ol{v}}$
        and $\Fliftshift{j}{\ol{w'}}=\Fliftshift{i}{\ol{w}}$. In that case,
        $(v'\ w')[\Ua^j(\ua~)~]\ra_{App}(v'[\Ua^j(\ua)])\ (w'[\Ua^j(\ua)])$
        and we can easily conclude with
        $(v'[\Ua^j(\ua)])\ (w'[\Ua^j(\ua)])\rel(v[\Ua^i(\ua)])\ (w[\Ua^i(\ua)])$.

      \item either $u=(v'\ w')$ with $v'\rel v$, $w'\rel w$, and $\ol{u}=\ol{t}$,
        that is:

        $$
        \begin{array}{ccc}
          \ol{v'\ w'} &=& \ol{(v\ w)[\Ua^i(\ua)]} \\
          = & & = \\
          \ol{v'}\ \ol{w'} &=& \Fliftshift{i}{\ol{v}\ \ol{w}} \\
          = & & = \\
          \ol{v'}\ \ol{w'} & = &
          \Fliftshift{i}{\ol{v}}\ \Fliftshift{i}{\ol{w}} \\
        \end{array}
        $$

        which implies $\ol{v'}=\Fliftshift{i}{\ol{v}}$
        and $\ol{w'}=\Fliftshift{i}{\ol{w}}$. In that case,
        $(v'\ w')$ can't be reduce and we can conclude with
        $(v'\ w')\rel(v[\Ua^i(\ua)])\ (w[\Ua^i(\ua)])$.
      \end{itemize}

    \item if $s=\Ua^i(r/)$ then all terms $u\rel(v\ w)[\Ua^i(r/)]$
      are of the form $(v'\ w')[\Ua^j(r'/)]$ with $v'\rel v$, $w'\rel w$,
      $r'\rel r$, $j\leq i$ and $\ol{u}=\ol{t}$, that is:

      $$
      \begin{array}{ccc}
        \ol{(v'\ w')[\Ua^j(r'/)]} &=& \ol{(v\ w)[\Ua^i(r/)]} \\
        = & & = \\
        \Fliftcons{j}{\ol{v'}\ \ol{w'}}[\Fshift{j}{\ol{r'}}/] &=&
        \Fliftcons{i}{\ol{v}\ \ol{w}}[\Fshift{i}{\ol{r}}/] \\
        = & & = \\
        (\Fliftcons{j}{\ol{v'}}\ \Fliftcons{j}{\ol{w'}})[\Fshift{j}{\ol{r'}}/] &=&
        (\Fliftcons{i}{\ol{v}}\ \Fliftcons{i}{\ol{w}})[\Fshift{i}{\ol{r}}/] \\
      \end{array}
      $$

      which implies $\Fliftshift{j}{\ol{v'}}=\Fliftshift{i}{\ol{v}}$,
      $\Fliftshift{j}{\ol{w'}}=\Fliftshift{i}{\ol{w}}$ and
      $\Fshift{j}{\ol{r'}}=\Fshift{i}{\ol{r}}$. In that case,
      $(v'\ w')[\Ua^j(r'/)]\ra_{\mathit{App}}(v'[\Ua^j(r'/)])\ (w'[\Ua^j(r'/)])$
      and we can easily conclude with
      $(v'[\Ua^j(r'/)])\ (w'[\Ua^j(r'/)])\rel(v[\Ua^i(r/)])\ (w[\Ua^i(r/)])$.
    \end{itemize}

    \medskip
  \item $\mathit{Lambda}$: $t=(\lambda v)[s]\ra\lambda(v[\Ua(s)])=t'$. We proceed by case
    on the form of $s$.
    \begin{itemize}
    \item if $s=\Ua^i(\ua)$ then the terms
      $u\rel(\lambda v)[\Ua^i(\ua)]$ might have two distinct forms:

      \begin{itemize}
      \item either $u=(\lambda v')[\Ua^j(\ua)]$ with $v'\rel v$,
        $j\leq i$ and $\ol{u}=\ol{t}$, that is :

        $$
        \begin{array}{ccc}
          \ol{(\lambda v')[\Ua^j(\ua)]} &=& \ol{(\lambda v)[\Ua^i(\ua)]} \\
          = & & = \\
          \Fliftshift{j}{\ol{\lambda v'}} &=& \Fliftshift{i}{\ol{\lambda v}} \\
          = & & = \\
          \lambda \Fliftshift{j+1}{\ol{v'}} & = &
          \lambda \Fliftshift{i+1}{\ol{v}} \\
        \end{array}
        $$

        which implies $\Fliftshift{j+1}{\ol{v'}}=\Fliftshift{i+1}{\ol{v}}$.
        In that case,
        $(\lambda v')[\Ua^j(\ua)]\ra_{\mathit{Lambda}}\lambda(v'[\Ua^{j+1}(\ua)])$
        and we can easily conclude with
        $\lambda(v'[\Ua^{j+1}(\ua)])\rel\lambda(v[\Ua^{i+1}(\ua)])$.

      \item either $u=\lambda v'$ with $v'\rel v$, and $\ol{u}=\ol{t}$
        that is:

        $$
        \begin{array}{ccc}
          \ol{\lambda v'} &=& \ol{(\lambda v)[\Ua^i(\ua)]} \\
          = & & = \\
          \lambda\ol{v'} &=& \Fliftshift{i}{\ol{\lambda v}} \\
          = & & = \\
          \lambda\ol{v'} & = &
          \lambda\Fliftshift{i+1}{\ol{v}} \\
        \end{array}
        $$

        which implies $\ol{v'}=\Fliftshift{i+1}{\ol{v}}$. In that case,
        $\lambda v'$ can't be reduced and we can conclude with
        $\lambda v'\rel\lambda(v[\Ua^{i+1}(\ua)])$.
      \end{itemize}

    \item if $s=\Ua^i(r/)$ then all the terms
      $u\rel(\lambda v)[\Ua^i(r/)]$ are of the form
      $(\lambda v')[\Ua^j(r'/)]$ with $v'\rel v$, $r\rel r'$,
      $j\leq i$ and $\ol{u}=\ol{t}$, that is:

      $$
      \begin{array}{ccc}
        \ol{(\lambda v')[\Ua^j(r'/)]} &=& \ol{(\lambda v)[\Ua^i(r/)]} \\
        = & & = \\
        \Fliftcons{j}{\ol{\lambda v'}}[\Fshift{j}{\ol{r'}}/] &=&
        \Fliftcons{i}{\ol{\lambda v}}[\Fshift{i}{\ol{r}}/] \\
        = & & = \\
        \lambda\Fliftcons{j+1}{\ol{v'}}[\Fshift{j}{\ol{r'}}/] & = &
        \lambda\Fliftcons{i+1}{\ol{v}}[\Fshift{i}{\ol{r}}/] \\
      \end{array}
      $$

      which implies $\Fliftshift{j+1}{\ol{v'}}=\Fliftshift{i+1}{\ol{v}}$
      and $\Fshift{j}{\ol{r'}}=\Fshift{i}{\ol{r}}$.
      In that case,
      $(\lambda v')[\Ua^j(r'/)]\ra_{\mathit{Lambda}}\lambda(v'[\Ua^{j+1}(r'/)])$
      and we can easily conclude with
      $\lambda(v'[\Ua^{j+1}(r'/)])\rel\lambda(v[\Ua^{i+1}(r/)])$
      due to Property~\ref{p:funshift+1-lu}.
    \end{itemize}

    \medskip
  \item $\mathit{VarShift}$: $n[\ua]\ra n+1$. The only two terms
    $u\rel n[\ua]$ are $n[\ua]$ and $n+1$, we can then conclude with
    possibly a reduction step using $\mathit{VarShift}$.

    \medskip
  \item $\mathit{FVarLift}$: $t=1[\Ua(s)] \ra_{\mathit{FVarLift}} 1=t'$. We proceed by
    case on the form of $s$.
    \begin{itemize}
    \item if $s=\Ua^i(\ua)$ then the terms
      $u\rel 1[\Ua(\Ua^i(\ua))]$ might have two distinct forms:

      \begin{itemize}
      \item either $u=1[\Ua(\Ua^j(\ua))]$ with
        $j\leq i$ and $\ol{u}=\ol{t}$. We then have
        $1[\Ua(\Ua^j(\ua))]\ra_{FVarLift} 1$ and we easily conclude.

      \item either $u=1$ with $\ol{u}=\ol{t}$ and we easily conclude.
      \end{itemize}

    \item if $s=\Ua^i(r/)$ then all the terms
      $u\rel 1[\Ua(\Ua^i(r/))]$ are of the form
      $1[\Ua(\Ua^j(r'/))]$ with $r'\rel r$,
      $j\leq i$ and $\ol{u}=\ol{t}$. We then have
      $1[\Ua(\Ua^j(r'/))]\ra_{\mathit{FVarLift}} 1$ and we can conclude.
    \end{itemize}

    \medskip
  \item $\mathit{RVarLift}$: $t=n+1[\Ua(s)]\ra n[s][\ua]=t'$. We proceed by
    case on the form of $s$.
    \begin{itemize}
    \item if $s=\Ua^i(\ua)$ then the terms
      $u\rel n+1[\Ua(\Ua^i(\ua))]$ might have two distinct forms:

      \begin{itemize}
      \item either $u=n'[\Ua^j(\ua)]$ with
        $j\leq i+1$ and $\ol{u}=\ol{t}$, that is:

        $$
        \begin{array}{ccc}
          \ol{n'+1[\Ua(\Ua^j(\ua))]} &=& \ol{n+1[\Ua(\Ua^i(\ua))]} \\
          = & & = \\
          \Fliftshift{j+1}{n'} &=& \Fliftshift{i+1}{n+1} \\
        \end{array}
        $$

        We deduce from this equality that $n'$ can't be smaller than $n$
        and that it must then be grater than $1$.
        In that case,
        $n'[\Ua(\Ua^j(\ua))]\ra_{\mathit{RVarLift}} n'-1[\Ua^j(\ua)][\ua]$
        and we must check that

        $$
        \begin{array}{ccc}
          \ol{n[\Ua^i(\ua)][\ua]} & = & \ol{n'-1[\Ua^j(\ua)][\ua]} \\
          = & & = \\
          \Fshift{1}{\Fliftshift{i}{n}} & = &
          \Fshift{1}{\Fliftshift{j}{n'-1}} \\
        \end{array}
        $$

        By Property~\ref{p:funshift-comp-lu}, we have
        $\Fshift{1}{\Fliftshift{i}{n}}=\Fliftshift{i+1}{n+1}$ and
        $\Fshift{1}{\Fliftshift{j}{n'-1}}=\Fliftshift{j+1}{n'}$, and
        we can conclude with
        $n'-1[\Ua^j(\ua)][\ua]\rel n[\Ua^i(\ua)][\ua]$.

      \item either $u=n'$, and $\ol{u}=\ol{t}$,
        that is:

        $$
        \begin{array}{ccc}
          \ol{n'} &=& \ol{n+1[\Ua(\Ua^i(\ua))]} \\
          = & & = \\
          n' &=& \Fliftshift{i+1}{n+1} \\
        \end{array}
        $$

        In that case, $u$ can't be reduced and we must check that

        $$
        \begin{array}{ccc}
          \ol{n[\Ua^i(\ua)][\ua]} & = & \ol{n'} \\
          = & & = \\
          \Fshift{1}{\Fliftshift{i}{n}} & = & n' \\
        \end{array}
        $$

        We conclude with $n'\rel n[\Ua^i(\ua)][\ua]$ due to
        Property~\ref{p:funshift-comp-lu}.
      \end{itemize}

    \item if $s=\Ua^i(r/)$ then all the terms
      $u\rel n+1[\Ua(\Ua^i(r/))]$ are of the form
      $n'[\Ua^j(r'/)]$ with $r\rel r'$,
      $j\leq i$ and $\ol{u}=\ol{t}$, that is:

      $$
      \begin{array}{ccc}
        \ol{n'[\Ua^j(r'/)]} &=& \ol{n+1[\Ua(\Ua^i(r/))]} \\
        = & & = \\
        \Fliftcons{j}{n'}[\Fshift{j}{\ol{r'}}/] & = &
        \Fliftcons{i+1}{n+1}[\Fshift{i+1}{\ol{r}}/] \\
      \end{array}
      $$

      which implies $\Fliftcons{j}{n'}=\Fliftcons{i+1}{n+1}$
      and $\Fshift{j}{\ol{r'}}=\Fshift{i+1}{\ol{r}}$.
      There are two distinct cases according to the value of $j$.

      \begin{itemize}
      \item $j=0$: then we have $\Fliftcons{0}{n'}=n'=\Fliftcons{i+1}{n+1}$,
        $\Fshift{0}{\ol{r'}}=\ol{r'}=\Fshift{i+1}{\ol{r}}$ and we must check that

        $$
        \begin{array}{ccc}
          \ol{n[\Ua^i(r/)][\ua]} & = & \ol{n'[r'/]} \\
          = & & \\
          \Fshift{1}{\Fliftcons{i}{n}[\Fshift{i}{r}/]} & & \\
          = & & \\
          \Fliftshift{0}{\Fliftcons{i}{n}[\Fshift{i}{r}/]} & & = \\
          = & & \\
          \Fliftshift{1}{\Fliftcons{i}{n}}[\Fliftshift{0}{\Fshift{i}{r}}/] & & \\
          = & & \\
          \Fliftshift{1}{\Fliftcons{i}{n}}[\Fshift{i+1}{r}/] & = & n'[\ol{r'}] \\
        \end{array}
        $$

        We can conclude with
        $\Fliftshift{1}{\Fliftcons{i}{n}}=\Fliftcons{i+1}{n+1}n'$
        due to Property~\ref{p:funcons-comp-lu}.

      \item $j>0$: $n'[\Ua^j(r'/)]$ reduces to
        $n'[\Ua^{j-1}(r'/)][\ua]$ and we must check that

        $$
        \begin{array}{ccc}
          \ol{n[\Ua^i(r/)][\ua]} & = & \ol{n'[\Ua^{j-1}(r'/)][\ua]} \\
          = & & = \\
          \Fshift{1}{\Fliftcons{i}{n}[\Fshift{i}{r}/]} & &
          \Fshift{1}{\Fliftcons{j-1}{n'}[\Fshift{j-1}{r'}/]} \\
          = & & = \\
          \Fliftshift{0}{\Fliftcons{i}{n}[\Fshift{i}{r}/]} & &
          \Fliftshift{0}{\Fliftcons{j-1}{n'}[\Fshift{j-1}{r'}/]} \\
          = & & = \\
          \Fliftshift{1}{\Fliftcons{i}{n}}[\Fliftshift{0}{\Fshift{i}{r}}/] & &
          \Fliftshift{1}{\Fliftcons{j-1}{n'}}[\Fliftshift{0}{\Fshift{j-1}{r'}}/] \\
          = & & = \\
          \Fliftshift{1}{\Fliftcons{i}{n}}[\Fshift{i+1}{r}/] & = &
          \Fliftshift{1}{\Fliftcons{j-1}{n'}}[\Fshift{j}{r'}/] \\
        \end{array}
        $$

        and we directly conclude with the help of Property~\ref{p:funcons-comp-lu}.

      \end{itemize}
    \end{itemize}
  \end{itemize}
\end{proof}

\subsubsection{Simulation}

The function $\mathit{Ateb}$ and the relation $\rel$ satisfy the hypothesis of
Theorem~\ref{th:simul}. We can then apply it and get the desired conclusion.

\begin{corollary}
  Since $\lambda\upsilon$-calculus enjoys PSN~\cite{Les96} and simply-typed
  $\lambda$-calculus enjoys SN~\cite{Kri90} (which is easily extended to
  $\lambda$-calculus with De Bruijn indices), we have that simply-typed
  $\lambda\upsilon$-calculus enjoys SN.
\end{corollary}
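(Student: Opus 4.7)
The plan is a direct application of Theorem~\ref{th:simul}, since the entire preceding development has been assembling precisely the hypotheses the theorem requires. First, I would invoke PSN of $\lambda\upsilon$ from~\cite{Les96} and SN of simply-typed $\lambda$-calculus from~\cite{Kri90}; the latter transfers to the De Bruijn presentation by the standard name-to-index translation, which preserves both typing and one-step reduction and therefore carries SN across unchanged.

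Next, I would check each hypothesis of Theorem~\ref{th:simul} in turn. Property~\ref{prop:type-direct} is the typability lemma for $\mathit{Ateb}$ proved earlier in this section, whose substitution cases rest on Property~\ref{p:fct-lu}. Property~\ref{prop:init-simul} is exactly Lemma~\ref{l:Ateb-sim-lu}. For the simulation hypotheses, I would partition the reduction rules of $\lambda\upsilon$ into the singleton containing $B$ (the only rule that can sustain infinite reduction on its own) and the remaining rules, which are strongly normalizing by~\cite{Les96}. Lemma~\ref{l:simul-B} then supplies the strict $\ra^+$ simulation required by Property~\ref{prop:simul-+}, while Lemma~\ref{l:simul-R_2} supplies the $\ra^*$ simulation required by Property~\ref{prop:simul-*}.

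With all four hypotheses in place, Theorem~\ref{th:simul} yields PSN $\Rightarrow$ SN for simply-typed $\lambda\upsilon$, and combining this implication with the cited PSN gives the desired SN. There is no real obstacle at this stage: all the genuinely delicate work — designing $\mathit{Ateb}$, constructing the re-indexing operators $\Fliftcons{i}{\cdot}$, $\Fliftshift{i}{\cdot}$, and $\Fshift{i}{\cdot}$ together with Property~\ref{p:fct-lu}, introducing the relation $\rel$ via $\ol{\cdot}$ and $\skelin$, and carrying out the detailed case analysis in Lemma~\ref{l:simul-R_2} (in particular the $\mathit{RVarLift}$ case, which invokes Properties~\ref{p:funshift-comp-lu} and~\ref{p:funcons-comp-lu}) — has already been completed. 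The corollary is thus a one-line assembly of those ingredients into the conclusion promised at the outset of the section.
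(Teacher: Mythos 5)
Your proposal is correct and matches the paper's own argument: the paper likewise concludes by observing that $\mathit{Ateb}$ and $\rel$ satisfy the hypotheses of Theorem~\ref{th:simul} (typability, Lemma~\ref{l:Ateb-sim-lu} for initialisation, Lemma~\ref{l:simul-B} giving the $\ra^+$ simulation for $B$ and Lemma~\ref{l:simul-R_2} the $\ra^*$ simulation for the remaining, strongly normalizing rules) and then applies the theorem together with PSN of $\lambda\upsilon$ and SN of the simply-typed $\lambda$-calculus. Your explicit remark that the rules other than $B$ must be terminating for the partition to be admissible is exactly the point the paper also notes.
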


\section{$\lwsn$-calculus}

In~\cite{CosKesPol03} a named version of $\lws$ was proposed. In current
work, we developed a new version of this calculus : $\lwsn$. We already have a
SN proof for this calculus, almost similar to the original one, and this
technique can be applied, using the direct proof. We cannot conclude to SN by this
way, since PSN has not yet been shown (see~\cite{PolPhD04}).

\subsection{Definition}

Terms of $\lwsn$-calculus are given by the following grammar:
 
$$
t ::= x\ |\ (t\ t)\ |\ \lam x.t\ |\ t[x,t,\Gamma,\Gamma]\ |\ \Gamma t
$$

where $\Gamma$ is a set of variable. A version of the reduction rules is presented
Fig.~\ref{f:def-lln}.

\begin{figure}[htb]
  $$
  \begin{array}{crcll}
    (b) & (\Delta(\lam x.t)) (\Gamma u) &
    \ra & t[x,u,\Gamma,\Delta] & \\ \\
    (a) & (t\ u)[x,v,\Gamma,\Delta] &
    \ra & (t[x,v,\Gamma,\Delta]\ u[x,v,\Gamma,\Delta]) & \\
    (e_1) & (\Lambda t)[x,u,\Gamma,\Delta] &
    \ra & (\Delta \cup (\Lambda\sm \{x\}))t & x\in\Lambda\sm\Gamma \\
    (n_1) & y[x,t,\Gamma,\Delta] & \ra & \Delta y & x\not= y\mbox{ or }y\in\Gamma\\
    (n_2) & x[x,t,\Gamma,\Delta] & \ra & \Gamma t & \\
    (c_1) & t[y,u,\Lambda,\Phi][x,v,\Gamma,\Delta] &
    \ra & & x\in\Phi\sm\Gamma\mbox{ and }x\not\in\Lambda\sm\Gamma\\
    & \multicolumn{3}{c}{
      t[y,u[x,v,\Gamma\sm\Lambda,\Delta\cup(\Lambda\sm\Gamma)],
        \Lambda\cap\Gamma,\Delta \cup (\Phi \sm \{x\})]} \\
    (c_2) & t[y,u,\Lambda,\Phi][x,v,\Gamma,\Delta] &
    \ra & t[x,v,(\Gamma\sm \Phi)\cup\{y\},\Delta\cup(\Phi\sm\Gamma)] \\
    & & &
    \phantom{t}[y,u[x,v,\Gamma\sm\Lambda,\Delta\cup(\Lambda\sm\Gamma)],
      \Lambda\cap\Gamma,\Gamma \cap\Phi] & x\not\in\Phi\sm\Gamma\mbox{
    and }x\not\in\Lambda\sm\Gamma \\
    (c_3) & t[y,u,\Lambda,\Phi][x,v,\Gamma,\Delta] &
    \ra & t[y,u,(\Lambda\sm\{x\})\cup\Delta,(\Phi\sm\{x\})\cup\Delta] &
    x\in\Phi\sm\Gamma\mbox{ and }x\in\Lambda\sm\Gamma \\ \\
    (f) & (\lam y.t)[x,u,\Gamma,\Delta] &
    \sim & \lam y.t[x,u,\Gamma\cup\{y\},\Delta] & \\
    (e_2) & (\Lambda t)[x,u,\Gamma,\Delta] &
    \sim & (\Gamma \cap \Lambda)t
    [x,u,\Gamma\sm\Lambda,\Delta\cup(\Lambda\sm\Gamma)] &
    x\not\in\Lambda\sm\Gamma \\
    (d) & \Gamma\Delta t & \sim & (\Gamma\cup \Delta)t & \\
    (\es) & \es t & \sim & t & \\
    (c_4) & t[y,u,\Lambda,\Phi][x,v,\Gamma,\Delta] &
    \sim & & x\in\Lambda\sm\Gamma\mbox{ and }x\not\in\Phi\sm\Gamma \\
    & \multicolumn{3}{c}{t[x,v,(\Gamma\sm\Phi)\cup\{y\},
        \Delta\cup(\Phi\sm\Gamma)]
      [y,u,\Delta\cup(\Lambda\sm\{x\}),\Gamma\cap\Phi]} & \\
  \end{array}
  $$
  \caption{Reduction rules of the $\lwsn$-calculus}
  \label{f:def-lln}
\end{figure}

Typing rules are given Fig.~\ref{f:lln-typage}.
 
\begin{figure}[htb]
\[
\begin{array}{l@{\hspace{1.5cm}}l}
\infer[Ax]{x:A \vdash x:A}{} &
\infer[Weak]{\Gamma\vdash \Delta t:A}
            {\Gamma\sm\Delta \vdash t:A & \Delta\subset\Gamma} \\
& \\
\infer[App]{\Gamma \vdash (t\ u):A}
           {\Gamma \vdash t:B \> A & \Gamma \vdash u:B} &
\infer[Lamb]{\Gamma \vdash \lambda x.t:B \> A}
              {\Gamma,x:A \vdash t:B}\\
\end{array}
\]
\[
\infer[Sub]{\Pi \vdash t[x,u,\Gamma,\Delta]:B}
             {\Pi\sm\Gamma \vdash u:A & \Pi\sm\Delta,x:A \vdash t:B
               & (\Gamma\cup\Delta)\subset\Pi}
\]
\caption{Typing rules of the $\lwsn$-calculus}
\label{f:lln-typage}
\end{figure}                                                                    

\subsection{Strong Normalization proof}

We define the $\mathit{Ateb}$ function as follows:

\begin{center}
  \begin{tabular}{lll}
    $\mathit{Ateb}(x)$ & $=$ & $x$ \\
    $\mathit{Ateb}(t\ u)$ & $=$ & $\mathit{Ateb}(t)\ \mathit{Ateb}(u)$ \\
    $\mathit{Ateb}(\lambda x.t)$ & $=$ & $\lambda x.\mathit{Ateb}(t)$ \\
    $\mathit{Ateb}(\Gamma t)$ & $=$ & $\Gamma \mathit{Ateb}(t)$ \\
    $\mathit{Ateb}(t[x,u,\Gamma,\Delta])$ & $=$ & $(\Delta(\lambda x.\mathit{Ateb}(t)))\ (\Gamma \mathit{Ateb}(u))$ \\
  \end{tabular}
\end{center}

\begin{remark}
  The $\mathit{Ateb}$ function sends $\lwsn$-terms to a $\lambda$-calculus with explicit weakening.
\end{remark}

As for the $\lambda{\tt x}$-calculus, the $\mathit{Ateb}$ function performs exactly the reverse
reduction of the rule $b$. It is then obvious that if $t'=\mathit{Ateb}(t)$ then
$t'\ra_{b}^*t$ and that $\mathit{Ateb}(t)$ does not contain any substitution. We must check that
the term we get is typeable.

\begin{lemma}
  $$\Gamma\vdash t:A \;\;\Ra\;\; \Gamma\vdash \mathit{Ateb}(t):A$$
\end{lemma}

\begin{proof}
  By induction on the typing derivation of $t$. The only interesting case is that of substitution.
  We have $t=u[x,v,\Gamma,\Delta]$ and

  $$
  \infer
    {\Pi\vdash u[x,v,\Gamma,\Delta]:A}
    {\Pi\sm\Gamma\vdash v:B & \Pi\sm\Delta,x:B\vdash u:A}
  $$

  By induction hypothesis, we have $\Pi\sm\Delta,x:B\vdash \mathit{Ateb}(u):A$ and
  $\Pi\sm\Gamma\vdash \mathit{Ateb}(v):B$. We can type $\mathit{Ateb}(t)=(\Delta(\lambda
  x.\mathit{Ateb}(u)))\ \Gamma\mathit{Ateb}(v)$ as follows

  $$
  \infer
    {\Pi\vdash (\Delta(\lambda x.\mathit{Ateb}(u)))\ \Gamma\mathit{Ateb}(v):A}
    {\infer
      {\Pi\vdash \Delta(\lambda x.\mathit{Ateb}(u)):B\ra A}
      {\infer
        {\Pi\sm\Delta\vdash \lambda x.\mathit{Ateb}(u):B\ra A}
        {\Pi\sm\Delta,x:B\vdash \mathit{Ateb}(u):A}} &
      \infer
      {\Pi\vdash\Gamma\mathit{Ateb}(v):B}
      {\Pi\sm\Gamma\vdash \mathit{Ateb}(v):B}}
  $$
\end{proof}

We can directly apply Theorem~\ref{th:direct}. Nevertheless, we cannot get any conclusion since
PSN has not yet been shown for this calculus.

\section{$\lws$-calculus}

We deal here with the calculus with De Bruijn indices, and difficulties will arise due to them.
More precisely, we won't be able to deal with the typing environment as we did for the
$\lambda\upsilon$-calculus. The presence of explicit weakening forbid us to rearrange the typing
environment as far as we would do. Here follows the reduction rules (Fig.~\ref{f:reduction-ll})
and typing rules of the $\lws$-calculus (Fig.~\ref{f:typage-ll}) where $|\Gamma| = i$ and $|\Delta| =j$.

\begin{figure}[htb]
\[\begin{array}{crcll}
b_1 & ({\lam}t u)        & \ra & [0/u,0]t                     & \\
b_2 & (\lab{k}{\lam}t u) & \ra & [0/u,k]t                     & \\
f   & [i/u,j]{\lam}t     & \ra & \lam[i+1/u,j]t               & \\
a   & [i/u,j](t\ v)      & \ra & (([i/u,j]t)\ ([i/u,j]v))     & \\
e_1 & [i/u,j]\lab{k}t    & \ra & \lab{j+k-1}t                 & si\ i<k \\
e_2 & [i/u,j]\lab{k}t    & \ra & \lab{k}[i-k/u,j]t            & si\ i\ge k \\
n_1 & [i/u,j]k           & \ra & k                            & si\ i>k \\
n_2 & [i/u,j]i           & \ra & \lab{i}u                     & \\
n_3 & [i/u,j]k           & \ra & j+k-1                        & si\ i<k \\
c_1 & [i/u,j][k/v,l]t    & \ra & [k/[i-k/u,j]v,j+l-1]t        & si\ k\le i<k+l \\
c_2 & [i/u,j][k/v,l]t    & \ra & [k/[i-k/u,j]v,l][i-l+1/u,j]t & si\ i\ge k+l \\
d   & \lab{i}\lab{j}t    & \ra & \lab{i+j}t                   & \\
\end{array}\]
\caption{Reduction rules}
 \label{f:reduction-ll}
\end{figure}

\begin{figure}[htb]
\[ \infer[Axiom]{\Gamma, A, \Delta  \vdash  i:A}{} \] 
\[ \begin{array}{l@{\hspace{2cm}}l}
\infer[Lambda]{\Gamma \vdash  \lambda t:B \> C}
        {B, \Gamma \vdash  t:C} &
\infer[App]{\Gamma  \vdash  (t  u):A}
        {\Gamma  \vdash  t:B\> A & \Gamma \vdash  u:B}\\\\
\infer[Subst]{\Gamma, \Delta, \Pi \vdash  [i/u,j]t:B}
        {\Delta, \Pi \vdash u:A & \Gamma, A, \Pi \vdash t:B}&
\infer[Weak]{\Gamma, \Delta \vdash \lab{i}t:B}
        {\Delta \vdash t:B}\\\\
\end{array} \] 
\caption{Typing rules}
 \label{f:typage-ll}
\end{figure}

At the time of writing, we don't know if it would be possible to apply
our technique to this calculus.

\section{$\lsigma$-calculus}

The $\lsigma$-calculus~\cite{AbaCarCurLev91} is a calculus with De Bruijn indices
and multiple substitutions, adding difficulties over those already there for
the $\lambda\upsilon$-calculus. Our application here is only an exercise since
this calculus does not enjoy PSN. Nevertheless, it reduces the question of SN
to that of PSN, {\em i.e.} if PSN is shown, here already follows a correct proof
of SN.

\subsection{Definition}

Terms of the $\lsigma$-calculus are given by the following grammar:

$$
\begin{array}{l}
  t ::= 1\ |\ (t\ t)\ |\ \lambda t\ |\ t[s] \\
  s ::= id\ |\ \ua\ |\ t\cdot s\ |\ s\circ s \\
\end{array}
$$

As usual, we will add infinitely many integer constants $2,3,...,n$ with the convention:
$n=\indice{n-1}$. As usually, we will consider that any term $n$ does not contain substitutions.

Here follows the reduction rules:

$$
\begin{array}{lll}
  (\lambda t) u & \ra_{B} & t[u\cdot id] \\ \\
  (t\ u)[s] & \ra_{App} & (t[s])\ (u[s]) \\
  (\lambda t)[s] & \ra_{Lambda} & \lambda (t[1\cdot (s\circ\ua)]) \\
  1[id] & \ra_{VarId} & 1 \\
  1[t\cdot s] & \ra_{VarCons} & t \\
  t[s][s'] & \ra_{Clos} & t[s\circ s'] \\ \\
  id\circ s & \ra_{IdL} & s \\
  \ua\circ id & \ra_{ShiftId} & \ua \\
  \ua\circ(t\cdot s) & \ra_{ShiftCons} & s \\
  (t\cdot s)\circ s' & \ra_{Map} & t[s']\cdot(s\circ s') \\
  (s_1\circ s_2)\circ s_3 & \ra_{Ass} & s_1\circ(s_2\circ s_3) \\
\end{array}
$$

Here follows the typing rules:

\begin{center}
  \begin{tabular}{cc}
    \infer{A,\Gamma\vdash 1:A}{} &
    \infer
      {\Gamma\vdash t[s]:A}
      {\Gamma\vdash s\trg \Gamma' & \Gamma'\vdash t:A} \\ & \\
    \infer
      {\Gamma\vdash (t\ u):A}
      {\Gamma\vdash t:B\ra A & \Gamma\vdash u:B} &
    \infer
      {\Gamma\vdash \lambda t:B\ra A}
      {B,\Gamma\vdash t:A} \\ & \\
    \infer{\Gamma\vdash id\trg \Gamma}{} &
    \infer{A,\Gamma\vdash \ua\trg \Gamma}{} \\ & \\
    \infer
      {\Gamma\vdash t\cdot s\trg A,\Gamma'}
      {\Gamma\vdash t:A & \Gamma\vdash s\trg \Gamma'} &
    \infer
      {\Gamma\vdash s\circ s'\trg \Gamma'}
      {\Gamma\vdash s'\trg\Gamma'' & \Gamma''\vdash s\trg \Gamma'}
  \end{tabular}
\end{center}

We can give a derived rule for indices $n>1$, (with $n=|\Gamma|+1$ and
$\Gamma=C_1,...,C_{n-1}$) :

$$
\begin{array}{l@{\espace}r}
  \infer{\Gamma,A,\Delta\vdash n:A}{} &
  \infer
    {\Gamma,A,\Delta\vdash \indice{n-1}:A}
    {\Gamma,A,\Delta\vdash\ua\trg C_2,...,C_{n-1},A,\Delta &
     \infer
       {C_2,...,C_{n-1},A,\Delta\vdash
         \indice{n-2}:A}
       {\infer{\vdots}
         {C_{n-1},A,\Delta\vdash\ua\trg A,\Delta &
          \infer{A,\Delta\vdash 1:A}{}}}}
\end{array}
$$

The substitution back-pushing and some of the functions defined below were
strongly inspired by~\cite{CurRio91}.

\subsection{Towards strong normalization}

We proceed similarly to Section~\ref{sect:lu-sn}.

We define the $\mathit{Ateb}$ function as follows:

$$
\begin{array}{lll}
  \mathit{Ateb}(n) & = & n \\
  \mathit{Ateb}(t\ u) & = & Ateb(t)\ Ateb(u) \\
  \mathit{Ateb}(\lambda t) & = & \lambda Ateb(t) \\
  \mathit{Ateb}(t[id]) & = & Ateb(t) \\
  \mathit{Ateb}(t[\ua]) & = & \Upshift{0}{1}{Ateb(t)} \\
  \mathit{Ateb}(t[s\circ s']) & = & Ateb(t[s][s']) \\
  \mathit{Ateb}(t[t'\cdot s]) & = & Ateb((\lambda t)[s])\ Ateb(t') \\
\end{array}
$$

Where $\Upshift{i}{j}{t}$ is a function that we will define below. The goal of
this function is to anticipate the propagation of the substitution $[\ua]$ and
to perform early re-indexing. To understand its necessity, let us look at
the derivation of $t[\ua]$.

$$
\infer
  {B,\Gamma\vdash t[\ua]:A}
  {\infer{B,\Gamma\vdash\ua\trg\Gamma}{} &
   \Gamma\vdash t:A}
$$

\begin{example}
  For instance, if we suppose that for any $tt$ among $t$, $u$, $v$
  we have $tt=\mathit{Ateb}(tt)$, then we get
  $$
  \mathit{Ateb}((t[u\cdot id]\ v[1\cdot1\cdot 5\cdot\ua])[\ua]) \;\;\; = \;\;\;
  \Upshift{0}{1}{((\lambda t) u)\ 
    ((( \Upshift{0}{1}{\lambda\lambda\lambda v} 5) 1) 1)}
  $$
\end{example}

The calculus $\Upshift{0}{1}{t}$ will then increase by $1$ all the free
variables of $t$ in order to enable the typing of it in the environment $B,\Gamma$.
We can therefore state the property that this function must verify.

\begin{property}\label{p:fct-ls}
  For any term $t$ without substitutions we have: $\Gamma\vdash t:A \Ra
  B,\Gamma\vdash\Upshift{0}{1}{t}:A$.
\end{property}

It is obvious that for any $t$, $\mathit{Ateb}(t)$ does not contain any substitutions. We
can check that $\mathit{Ateb}(t)$ is typeable.

\begin{lemma}
  $$\Gamma\vdash t:A \;\;\Ra\;\; \Gamma\vdash\mathit{Ateb}(t):A$$
\end{lemma}

\begin{proof}
  By induction on $t$.

  \begin{itemize}
  \item $t=1$ and

    $$
    \infer{A,\Delta\vdash 1:A}{}
    $$

    We then have $\mathit{Ateb}(t)=1$ and the same typing derivation.

  \item $t=(u\ v)$ and

    $$
    \infer
      {\Gamma\vdash (u\ v):A}
      {\Gamma\vdash u:B\ra A & \Gamma\vdash v:B}
    $$

    By induction hypothesis, we have $\Gamma\vdash \mathit{Ateb}(u):B\ra A$ and
    $\Gamma\vdash \mathit{Ateb}(v):B$. We can type $\mathit{Ateb}(t)=\mathit{Ateb}(u)\
    \mathit{Ateb}(v)$ as follows

    $$
    \infer
      {\Gamma\vdash (\mathit{Ateb}(u)\ \mathit{Ateb}(v)):A}
      {\Gamma\vdash \mathit{Ateb}(u):B\ra A & \Gamma\vdash \mathit{Ateb}(v):B}
    $$

  \item $t=\lambda u$ and

    $$
    \infer
      {\Gamma\vdash \lambda u:B\ra A}
      {B,\Gamma\vdash u:A}
    $$

    By induction hypothesis, we have $B,\Gamma\vdash \mathit{Ateb}(u):A$. We
    can type $\mathit{Ateb}(t)=\lambda \mathit{Ateb}(u)$ as follows

    $$
    \infer
      {\Gamma\vdash \lambda \mathit{Ateb}(u):B\ra A}
      {B,\Gamma\vdash \mathit{Ateb}(u):A}
    $$

  \item $t=u[id]$ and

    $$
    \infer
      {\Gamma\vdash u[id]:A}
      {\Gamma\vdash id\trg\Gamma & \Gamma\vdash u:A}
    $$

    We directly conclude by induction hypothesis.

  \item $t=u[\ua]$ and

    $$
    \infer
      {B,\Gamma\vdash u[\ua]:A}
      {B,\Gamma\vdash \ua\trg\Gamma & \Gamma\vdash u:A}
    $$

    We conclude by induction hypothesis and by Property~\ref{p:fct-ls}.

  \item $t=u[v\cdot s]$ and

    $$
    \infer
      {\Gamma\vdash u[v\cdot s]:A}
      {\infer
        {\Gamma\vdash v\cdot s\trg B,\Gamma'}
        {\Gamma\vdash s\trg\Gamma' & \Gamma\vdash v:B} &
       B,\Gamma'\vdash u:A}
    $$

    By induction hypothesis, we have $\Gamma\vdash\mathit{Ateb}((\lambda u)[s]):B\ra
    A$ and $\Gamma\vdash\mathit{Ateb}(v):B$. We conclude with the following typing derivation

    $$
    \infer
      {\Gamma\vdash\mathit{Ateb}((\lambda u)[s])\ \mathit{Ateb}(v):A}
      {\Gamma\vdash\mathit{Ateb}((\lambda u)[s]):B\ra A &
       \Gamma\vdash\mathit{Ateb}(v):B}
    $$

  \item $t=u[s\circ s']$, and we conclude directly by induction hypothesis.
  \end{itemize}
\end{proof}

\subsubsection{Function definition}

The function $\Upshift{i}{j}{t}$ performs a re-indexing of the term $t$
as if we had propagated a substitution $[\ua]$. Since it deals only with terms
obtained from the $\mathit{Ateb}$ function, we might consider only substitution-free
terms. However, we will need later to use it with terms with substitutions, but without
$\ua$. When it is applied to a substitution it returns a pair composed by an integer and
a substitution, else it returns a term.

Here follows its complete definition:

$$
\begin{array}{llll}
  \Upshift{i}{j}{n} & = & n+j & \mbox{if } n>i \\
  \Upshift{i}{j}{n} & = & n & \mbox{if } n\leq i \\
  \Upshift{i}{j}{t\ u} & = & \Upshift{i}{j}{t}\
  \Upshift{i}{j}{u} & \\
  \Upshift{i}{j}{\lambda t} & = & \lambda\Upshift{i+1}{j}{t} & \\
  & & & \\
  \Upshift{i}{j}{t[s]} & = & \mbox{let } i',s' = \Upshift{i}{j}{s} & \\
  & & \mbox{in } \Upshift{i'}{j}{t}[s'] & \\
  \Upshift{i}{j}{id} & = & i,id & \\
  \Upshift{i}{j}{t\cdot s} & = & \mbox{let } i',s' =
  \Upshift{i}{j}{s} & \\
  & & \mbox{in } i'+1,\Upshift{i}{j}{t}\cdot s' & \\
  \Upshift{i}{j}{s_1\circ s_2} & = & \mbox{let } i_2',s_2' =
  \Upshift{i}{j}{s_2} & \\
  & & \mbox{and } i_1',s_1' = \Upshift{i_2'}{j}{s_1} & \\
  & & \mbox{in } i_1',s_1'\circ s_2' & \\
\end{array}
$$

The modification of the index $i$ (and the value of the integer part of
the pair) reflects the number of $\cdot$ we got through, each of them
acting like a $\lambda$.

Here follows the proof of Property~\ref{p:fct-ls}.

\begin{proof}
  We have to proof that, for any $t$ substitution-free,
  $\Gamma\vdash t:A \Ra B,\Gamma\vdash\Upshift{0}{1}{t}:A$. Actually we
  prove a more general result, namely $\Gamma,\Delta\vdash t:A \Ra
  \Gamma,B,\Delta\vdash\Upshift{i}{1}{t}:A$ where $i=|\Gamma|$. We proceed
  by induction on $t$.
  \begin{itemize}
  \item $t=n$ with $n\leq i$: $\Upshift{i}{1}{t}=n$. We have

    $$
    \infer{\Gamma_1,A,\Gamma_2,\Delta\vdash n:A}{}
    $$

    with $n=|\Gamma_1|+1$. We conclude with the following typing derivation

    $$
    \infer{\Gamma_1,A,\Gamma_2,B,\Delta\vdash n:A}{}
    $$

  \item $t=n$ with $n>i$ : $\Upshift{i}{1}{t}=n+1$. On a

    $$
    \infer{\Gamma,\Delta_1,A,\Delta_2\vdash n:A}{}
    $$

    With $n=|\Gamma|+|\Delta_1|+1$. We conclude with the following typing derivation

    $$
    \infer{\Gamma,B,\Delta_1,A,\Delta_2\vdash n+1:A}{}
    $$

  \item $t=(u\ v)$ : $\Upshift{i}{1}{t} = (\Upshift{i}{1}{u}\
    \Upshift{i}{1}{v})$. We conclude with twice the induction hypothesis.
  \item $t=\lambda u$ (with $A=C\ra D$) : $\Upshift{i}{1}{t} =
    \lambda\Upshift{i+1}{1}{u}$. We have

    $$
    \infer
      {\Gamma,\Delta\vdash \lambda u:C\ra D}
      {C,\Gamma,\Delta\vdash u:D}
    $$

    By induction hypothesis, we have $C,\Gamma,B,\Delta\vdash
    \Upshift{i+1}{1}{u}:D$, and we conclude with the following typing derivation

    $$
    \infer
      {\Gamma,B,\Delta\vdash \lambda\Upshift{i+1}{1}{u}:C\ra D}
      {C,\Gamma,B,\Delta\vdash \Upshift{i+1}{1}{u}:D}
    $$
  \end{itemize}
\end{proof}

Here follows a property used below.

\begin{property}\label{prop:comp-up}
  For all $t$, $i$, $j$, $l$, we have

  $$
  \Upshift{i}{j}{\Upshift{i}{l}{t}} \;\;\; = \;\;\; \Upshift{i}{j+l}{t}
  $$
\end{property}

\begin{proof}\label{proof:comp-up}
  By easy induction on $t$.
\end{proof}

\begin{example}
  We can apply this function to our example, giving
  $$
  \Upshift{0}{1}{((\lambda t) u)\ 
    (\lambda ((\lambda (\Upshift{0}{1}{\lambda v} w)) 1)) 1} \;\;\; =
  ((\lambda \Upshift{1}{1}{t}) \Upshift{0}{1}{u})\ 
  ((((\lambda\lambda\lambda\Upshift{3}{2}{v})6)2)2)
  $$
\end{example}

\subsubsection{Definition of the relation $\rel$}

The function $\mathit{Ateb}$ applied to a term $t$ returns a new term
$t'$ that usually cannot be reduce to $t$. Indeed, the $\ua$
disappears and the information they carried is already propagated in $t'$.
The reducts of $t'$ won't have those terms as it is shown in the following
example.

\begin{example}
  With our last example, we have

  $$
  \begin{array}{c}
    (\ub{(\lambda \Upshift{1}{1}{t}) \Upshift{0}{1}{u}})\ 
    ((((\lambda\lambda\lambda\Upshift{3}{2}{v})6)2)2) \\
    \ra^* \\
    \Upshift{1}{1}{t}[\Upshift{0}{1}{u}\cdot id]\ 
    \Upshift{3}{2}{v}[2\cdot 2\cdot 6\cdot id] \\
  \end{array}
  $$

  Remark that the re-indexing of the $\ua$ in the original term has correctly been
  propagated, the substitution $[1\cdot
  1\cdot 5\cdot \ua]$ has become $[2\cdot 2\cdot 6\cdot id]$.
\end{example}

We now have to simulate the reduction of the initial term by that of the
obtained term. We start naively with the following definition which will
appear to be inadequate. We will then present an adequate solution.

\bigskip

To perform the simulation, we define a new function $\ol{t}$ that
flattens all the re-indexing required in a term $t$ and deletes
the lonely substitutions $[id]$.

$$
\begin{array}{lll}
  \ol{n} & = & n \\
  \ol{t\ u} & = & \ol{t}\ \ol{u} \\
  \ol{\lambda t} & = & \lambda \ol{t} \\
  \ol{t[s]} & = & \mbox{let } n,s' = \ol{s} \mbox{ in} \\
  & & \Upshift{0}{n}{\ol{t}}[s'] \mbox{ if } s'\not=\es \\
  & & \Upshift{0}{n}{\ol{t}} \mbox{ else} \\
  \ol{\ua} & = & 1,\es \\
  \ol{id} & = & 0,\es \\
  \ol{t\cdot s} & = & \mbox{let } n,s' = \ol{s} \mbox{ in} \\
  & & n,\ol{t}[s'] \mbox{ if } s'\not=\es \\
  & & n,\ol{t}\cdot id \mbox{ else} \\
  \ol{s_1\circ s_2} & = & \mbox{let } n_1,s_1' = \ol{s_1} \\
  & & \mbox{and } n_2,s_2' = \ol{s_2} \mbox{ in} \\
  & & n_1+n_2,\es \mbox{ if } s_1'=s_2'=\es \\
  & & n_1+n_2,\Upshift{0}{n_2}{s_1'} \mbox{ if } s_2'=\es \\
  & & n_1+n_2,s_2' \mbox{ if } s_1'=\es \\
  & & n_1+n_2,\Upshift{0}{n_2}{s_1'}\circ s_2' \mbox{ else} \\
\end{array}
$$

The function $\ol{\cdot}$ commutes with $\Upshift{i}{j}{t}$,
as stated in the following lemma.

\begin{lemma}\label{l:commute-ol-ups}
  For all $i$, $j$ and $t$ (without $\ua$) we have

  $$
  \ol{\Upshift{i}{j}{t}} \;\;\;\; = \;\;\;\; \Upshift{i}{j}{\ol{t}}
  $$
\end{lemma}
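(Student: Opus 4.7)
The plan is structural induction on $t$. For $t = n$, the identity holds by direct verification of the two-case definitions of $\ol{\cdot}$ and $\Upshift{i}{j}{\cdot}$ on variables: both sides yield $n$ when $n \leq i$ and $n + j$ when $n > i$. For $t = u\,v$ the identity propagates through the application clause by the inductive hypotheses on $u$ and $v$, and for $t = \lambda u$ the identity propagates through the abstraction clause by the inductive hypothesis on $u$ applied with index $i+1$.

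The delicate case is $t = u[s]$. The right-hand side $\Upshift{i}{j}{\ol{u[s]}}$ is built from the pair $\ol{s} = (n, s_o)$ by forming $\Upshift{i}{j}{\Upshift{0}{n}{\ol{u}}[s_o]}$ when $s_o \neq \es$ (and without the outer substitution otherwise), while the left-hand side $\ol{\Upshift{i}{j}{u[s]}}$ is built from the pair $\Upshift{i}{j}{s} = (i', s_u)$ by forming $\ol{\Upshift{i'}{j}{u}[s_u]}$. To reconcile the two constructions, I would prove an auxiliary commutation lemma for substitutions by induction on the structure of $s$. Since $s$ contains no $\ua$, inspection of the clauses defining $\ol{\cdot}$ shows that the integer component of $\ol{s}$ is always $0$; and writing $\ol{s} = (0, s_o)$ and $\Upshift{i}{j}{s} = (i', s_u)$, the lemma asserts that $\ol{s_u} = (0, s_u^o)$ with $s_u^o = \es$ iff $s_o = \es$ and, in the non-empty case, $s_u^o = \Upshift{i}{j}{s_o}$. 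Given this, the substitution case of the main induction reduces to matching the two expressions via the inductive hypothesis applied to $u$, with the observation that $\Upshift{\cdot}{0}{\cdot}$ is the identity on terms eliminating the vestigial $\Upshift{0}{0}{\cdot}$ on the right-hand side.

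The subtlest subcase of the auxiliary lemma will be the composition $s_1 \circ s_2$: the definition of $\Upshift{i}{j}{\cdot}$ feeds the updated index $i'_2$ obtained from processing $s_2$ into the processing of $s_1$, while $\ol{\cdot}$ splits on the emptiness of $s_1'$ and $s_2'$ and on one branch wraps the result in an extra $\Upshift{0}{n_2}{\cdot}$. Property~\ref{prop:comp-up} will be needed to combine successively applied shifts so that the two resulting residual substitutions match on the nose. The other subcases ($id$ and $t' \cdot s'$) are routine applications of the inductive hypothesis on the substitution tail together with the already-established inductive hypothesis on the head term.
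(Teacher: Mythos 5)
Your overall strategy coincides with the paper's: structural induction on $t$, with the variable case checked directly and the application/abstraction cases passed through the homomorphic clauses (the paper in fact only writes out the variable case and dismisses the rest as ``by induction hypothesis''). Your identification of $t=u[s]$ as the only case needing real work, and your observations that the integer component of $\ol{s}$ is $0$ whenever $s$ is $\ua$-free and that $\Upshift{0}{0}{\cdot}$ is the identity, are correct, as is the plan of a simultaneous auxiliary induction on substitutions using the term-level statement for head terms.

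However, the auxiliary lemma as you state it is not quite strong enough to close the $u[s]$ case. Writing $(i',s_u)=\Upshift{i}{j}{s}$ and $(0,s_o)=\ol{s}$, the left-hand side reduces, via the induction hypothesis on $u$, to $\Upshift{i'}{j}{\ol{u}}[s_u^o]$, while the right-hand side unfolds to $\Upshift{i''}{j}{\ol{u}}[s_o']$ where $(i'',s_o')=\Upshift{i}{j}{s_o}$ (and, in the empty branch, to $\Upshift{i'}{j}{\ol{u}}$ versus $\Upshift{i}{j}{\ol{u}}$). Your lemma only identifies the substitution components $s_u^o$ and $s_o'$; it says nothing about $i'$ versus $i''$ (respectively $i'$ versus $i$), and since $\ol{u}$ is arbitrary you cannot identify $\Upshift{i'}{j}{\ol{u}}$ with $\Upshift{i''}{j}{\ol{u}}$ without knowing the indices agree. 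The missing fact is true: for $\ua$-free $s$ the integer returned by $\Upshift{i}{j}{s}$ is $i$ plus the number of cons cells of $s$, and flattening preserves that number; it should be added as a further conjunct of the auxiliary lemma and is proved by the same induction on $s$ (it is also exactly what makes the index threading in the $s_1\circ s_2$ subcase match up). A minor further remark: Property~\ref{prop:comp-up} is not actually needed there, because with no $\ua$ all integer outputs of $\ol{\cdot}$ are $0$, so the wrappers $\Upshift{0}{n_2}{\cdot}$ occurring in the composition clause are identities.
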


\begin{proof}
  By induction on $t$.

  \begin{itemize}
  \item If $t=n$, then $\Upshift{i}{j}{t}=n'$, $\ol{n'}=n'$ and $\ol{n} = n$.
  \item All the remaining cases are easily proved by induction hypothesis.
  \end{itemize}
\end{proof}

\begin{example}
  Look at the final example term:
  $$
  \begin{array}{c}
    \ol{\Upshift{1}{1}{t}[\Upshift{0}{1}{u}\cdot id]\ 
    \Upshift{3}{2}{v}[2\cdot 2\cdot 6\cdot id]} \\
    = \\
    \Upshift{1}{1}{\ol{t}}[\Upshift{0}{1}{\ol{u}}\cdot id]\ 
    \Upshift{3}{2}{\ol{v}}[2\cdot 2\cdot 6\cdot id]
  \end{array}
  $$
  And the original term:
  $$
  \begin{array}{c}
    \ol{(t[u\cdot id]\ v[1\cdot1\cdot 5\cdot\ua])[\ua]} \\
    = \\
    \Upshift{0}{1}{\Upshift{0}{0}{\ol{t}}[\ol{u}\cdot id]\ 
      \Upshift{0}{1}{\ol{v}}[1\cdot1\cdot5\cdot id]}\\
    = \\
    \Upshift{1}{1}{\ol{t}}[\Upshift{0}{1}{\ol{u}}\cdot id]\ 
    \Upshift{3}{2}{\ol{v}}[2\cdot 2\cdot 6\cdot id]    
  \end{array}
  $$
\end{example}

We need an order relation on the skeleton of terms. We want that
$t\skelin t'$ if and only if $t'$ does contain $\ua$ and $[id]$ only
at the same place $t$ does. More formally, it can be defined as follows:

$$
\begin{array}{ccc}
  \mbox{pour tout } n\mbox{ and } m & & n\skelin m \\
  t\skelin t' \mbox{ and } u\skelin u' & \Ra & (t\ u)\skelin(t'\ u') \\
  t\skelin t' & \Ra & \lambda t\skelin\lambda t' \\
  t\skelin t' & \Ra & t\skelin t'[\ua] \\
  t\skelin t' & \Ra & t\skelin t'[id] \\
  t\skelin t' \mbox{ and } s\skelin s' & \Ra & t[s]\skelin t'[s'] \\
\end{array}
$$

$$
\begin{array}{ccc}
  & & \ua\skelin \ua \\
  & & id\skelin id \\
  & & id\skelin \ua \\
  t\skelin t' \mbox{ and } s\skelin s' & \Ra & t\cdot s\skelin t'\cdot s' \\
  s\skelin s' & \Ra & s \skelin s'\circ id \\
  s\skelin s' & \Ra & s \skelin id\circ s'  \\
  s\skelin s' & \Ra & s \skelin s'\circ \ua \\
  s\skelin s' & \Ra & s \skelin \ua\circ s'  \\
  s_1\skelin s_1' \mbox{ and } s_2\skelin s_2' & \Ra & s_1\circ
  s_2\skelin s_1'\circ s_2' \\
\end{array}
$$

\begin{example}
We have $t[t'\cdot id]\skelin t[id][t'\cdot\ua]$.
\end{example}

With this relation and the function $\ol{t}$, we can define a relation
to perform our simulation. We denote this relation $\rel$ and
we define it as follows:

$$
t\rel t' \iff \ol{t} = \ol{t'} \mbox{ and } t\skelin t'
$$

We remark that we always have $t\rel t$.

\bigskip

However, we cannot go further because this relation will not be adequate
to perform the simulation. Indeed, a problem arise to simulate the rule
$Abs$: $(\lambda t)[s]\ra \lambda(t[1\cdot (s\circ\ua)]$.
If $s=id$ (or $\ua$), then a term $u\rel(\lambda t)[id]$ can be $\lambda t$
which don't verify  $\lambda t\rel\lambda(t[1\cdot(s\circ\ua)]$.
We would like to extend our relation to this kind of $id$ (and similarly for $\ua$).
We start over again with a new relation $\rel$ that take this into account.

\bigskip

To solve the problem, we choose to identify terms having the same $\sigma$-normal form.
We call $\sigma$ the set of rules without $B$. The $\sigma$-normal form
of a term is given by the transitive closure of $\sigma$. We now that such
a normal form exists since $\sigma$ is strongly normalizing
(see~\cite{AbaCarCurLev91}). We denote $\sigma(t)$ the $\sigma$-normal
form of $t$.

\bigskip

We first define a notion of {\em redexability} of terms. The idea is to find
all the ``bad'' terms, that is those that can give rise to $B$-redices.

\begin{definition}\label{def:redexabilite}
  We say that a term is potentially redexable (denoted $PR(t)$)
  if it contains application or $\lambda$ at some node.
\end{definition}

We define then the relation $\skelin$ that will ensure that if
$u\skelin t$ then $u$ has the same redexability than $t$.

$$
\begin{array}{cccc}
  \mbox{pour tout } n\mbox{ and } m & & n\skelin m & \\
  t\skelin t' \mbox{ and } u\skelin u' & \Ra & (t\ u)\skelin(t'\ u') & \\
  t\skelin t' & \Ra & \lambda t\skelin\lambda t' & \\
  t\skelin t' & \Ra & t\skelin t'[s] & \mbox{if } \neg PR(s) \\
  t\skelin t' \mbox{ and } s\skelin s' & \Ra & t[s]\skelin t'[s'] \\
\end{array}
$$

$$
\begin{array}{cccc}
  & & \ua\skelin \ua & \\
  & & id\skelin id \\
  & & id\skelin s & \mbox{if } \neg PR(s) \\
  t\skelin t' \mbox{ and } s\skelin s' & \Ra & t\cdot s\skelin t'\cdot s' & \\
  s\skelin s' & \Ra & s \skelin s'\circ s_1 & \mbox{if } \neg PR(s_1) \\
  s\skelin s' & \Ra & s \skelin s_1\circ s' & \mbox{if } \neg PR(s_1) \\
  s_1\skelin s_1' \mbox{ and } s_2\skelin s_2' & \Ra & s_1\circ
  s_2\skelin s_1'\circ s_2' & \\
\end{array}
$$

\begin{example}
We have $t[t'\cdot id]\skelin t[1\cdot id][t'\cdot\ua]$.
\end{example}

We define the relation $\rel$ as follows.

\begin{definition}
  For all $t$ and $u$, $u\rel t\iff u\skelin t \mbox{ and }
  \sigma(t)=\sigma(u)$.
\end{definition}

Remark that we always have $t\rel t$.

Here follows several lemmas that will be used to prove the initialization
lemma. The first one says that the $\sigma$-normal form does not change when
one deletes a substitution $[id]$.

\begin{lemma}\label{l:init-id}
  For all $t$, we have $\sigma(t)=\sigma(t[id])$.
\end{lemma}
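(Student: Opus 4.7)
My plan is to exhibit a $\sigma$-reduction $t[id] \to^*_\sigma t$ for every term $t$, and then invoke the confluence of the $\sigma$-calculus (which, together with its strong normalisation, both due to~\cite{AbaCarCurLev91}, gives uniqueness of normal forms) to conclude $\sigma(t) = \sigma(t[id])$.

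The bulk of the argument is a structural induction on $t$ establishing $t[id] \to^*_\sigma t$. The easy cases: $t = 1$ closes by $\mathit{VarId}$; an integer $n > 1$ (expanded to $1[\ua]^{n-1}$) closes by pushing $\mathit{Clos}$ through and applying $\mathit{ShiftId}$; an application $(u\ v)$ closes via $\mathit{App}$ and two invocations of the IH; a substitution $u[s]$ closes by applying $\mathit{Clos}$ to reach $u[s \circ id]$, followed by a parallel induction on $s$ showing $s \circ id \to^*_\sigma s$ using $\mathit{IdL}, \mathit{ShiftId}, \mathit{Map}$ and $\mathit{Ass}$ one case at a time.

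The delicate case is $t = \lambda u$: we have $(\lambda u)[id] \to_{\mathit{Lambda}} \lambda(u[1 \cdot (id \circ \ua)]) \to_{\mathit{IdL}} \lambda(u[1 \cdot \ua])$, and since $1 \cdot \ua$ is not syntactically $id$, the IH does not directly close this case. To handle it, I would strengthen the statement to a family: for each $n \geq 0$, let $id_n$ denote the ``$n$-lifted identity'', defined by $id_0 = id$ and $id_{n+1} = 1 \cdot (id_n \circ \ua)$, and prove $t[id_n] \to^*_\sigma t$ for all $n$ and $t$. The lambda case of this generalised claim then closes via the IH at level $n+1$, since $\mathit{Lambda}$ produces precisely $\lambda(u[id_{n+1}])$. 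The variable and application cases adapt routinely, though with extra bookkeeping for the integer case, where one uses $\mathit{VarCons}$ for indices up to $n$ and $\mathit{ShiftCons}$ via $\mathit{Clos}$ for indices beyond $n$.

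The main obstacle is the substitution case in this strengthened induction: after $\mathit{Clos}$ one obtains $u[s \circ id_n]$, but $s \circ id_n$ does not in general $\sigma$-reduce back to $s$ (for instance $id \circ id_n \to_{\mathit{IdL}} id_n$, which differs from $id$ whenever $n \geq 1$). I expect to resolve this by a finer case-based analysis on $s$, showing $u[s \circ id_n] \to^*_\sigma u[s]$ by reducing $s \circ id_n$ step by step and invoking the outer IH on each emerging sub-piece, or alternatively by recasting the whole induction to compare $\sigma$-normal forms directly and exploiting confluence to close each reduction square.
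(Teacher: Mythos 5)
The paper itself offers no proof of this lemma---it simply defers to~\cite{AbaCarCurLev91}---so your attempt has to be judged on its own terms. Your overall strategy (exhibit a reduction $t[id]\ra^*_\sigma t$, generalised to the lifted identities $id_n$ to get through the $\mathit{Lambda}$ case, then conclude by termination and confluence of $\sigma$, hence uniqueness of normal forms) is sound for the variable, index, application and abstraction cases, and your handling of those, including the $id_{n+1}=1\cdot(id_n\circ\ua)$ bookkeeping, is correct.

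The genuine gap is the closure case $t=u[s]$, and the first repair you sketch cannot work: the claim $u[s\circ id_n]\ra^*_\sigma u[s]$ is false. Take $s=id$ and $n\geq 1$: the only rule applicable to $id\circ id_n$ is $\mathit{IdL}$, which yields $u[id_n]$; since $id_n$ is a cons and no $\sigma$-rule ever rewrites a cons to $id$, the term $u[id]$ is not reachable from there (e.g.\ $1[id\circ id_1]$ reduces only to $1[id_1]$ and then to $1$, never to $1[id]$). Reduction overshoots the target, so the literal-reduction formulation cannot be pushed through the closure case; you must compare normal forms instead. The clean way to do this---and the one the paper itself uses for the closely analogous Lemma~\ref{l:init-shift}---is to observe that $t\ra^*_\sigma\sigma(t)$ gives $t[id_n]\ra^*_\sigma\sigma(t)[id_n]$, hence $\sigma(t[id_n])=\sigma(\sigma(t)[id_n])$, so it suffices to prove the statement for $t$ already in $\sigma$-normal form. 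A $\sigma$-normal term contains no closure $u[s]$ other than the encodings $1[\ua\circ\cdots\circ\ua]$ of de Bruijn indices, so the problematic case vanishes and only the cases you already handle remain. With that normalisation step inserted at the start, your induction closes.
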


\begin{proof}
  See~\cite{AbaCarCurLev91}.
\end{proof}

\begin{lemma}\label{l:commute-shift}
  For all $t$, we have $\sigma(\Upshift{0}{1}{t}) =
  \Upshift{0}{1}{\sigma(t)}$.
\end{lemma}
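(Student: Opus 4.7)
The proof strategy will exploit confluence and strong normalisation of $\sigma$ (established in~\cite{AbaCarCurLev91}), so that $\sigma(u)$ is the \emph{unique} $\sigma$-normal form reachable from any $u$. Granting this, the equality reduces to two sub-claims: (a) $\Upshift{0}{1}{t} \ra_\sigma^* \Upshift{0}{1}{\sigma(t)}$, and (b) $\Upshift{0}{1}{\sigma(t)}$ is itself in $\sigma$-normal form. Together they force $\Upshift{0}{1}{\sigma(t)}$ to be the unique $\sigma$-normal form of $\Upshift{0}{1}{t}$, hence equal to $\sigma(\Upshift{0}{1}{t})$.

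For (a), I would prove the one-step commutation: if $t \ra_\sigma t'$ then $\Upshift{0}{1}{t} \ra_\sigma^* \Upshift{0}{1}{t'}$, by case analysis on the $\sigma$-rule fired. The guiding intuition is that $\Upshift$ merely threads an integer offset through the cons/composition spine of a substitution and shifts free variables; it does not touch the $\lambda$/application/$\cdot$/$\circ$/$[\cdot]$ skeleton upon which the $\sigma$-rules pattern-match. Each $\sigma$-redex in $t$ therefore persists at the same position in $\Upshift{0}{1}{t}$, and the two reducts reassemble to $\Upshift{0}{1}{t'}$ once the integer witnesses of $\Upshift$ are re-distributed in accordance with the rule. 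For rules that duplicate or nest substitutions (notably $\mathit{Map}$ and $\mathit{Ass}$), Property~\ref{prop:comp-up} lets me collapse nested $\Upshift$-applications into a single one with the expected offset. Iterating the one-step statement along the reduction $t \ra_\sigma^* \sigma(t)$ yields (a).

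For (b), a straightforward structural induction shows that $\Upshift{i}{j}{\cdot}$ preserves $\sigma$-irreducibility: its defining clauses only add integer offsets to variables and recurse structurally, so they cannot create any of the head adjacencies that trigger a $\sigma$-rule. A $\sigma$-redex in $\Upshift{i}{j}{u}$ must therefore come from one already present in $u$, and the claim follows.

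The real work lies in (a), specifically in the rules that reshape the substitution spine: $\mathit{Lambda}$, which introduces the fresh pattern $1 \cdot (s \circ \ua)$ (forcing $\Upshift$ to traverse a newly created $\ua$ inside a composition and coordinate its index counter with the $\ua$'s shift effect), and $\mathit{Map}$, which distributes $\circ$ over $\cdot$ and requires re-aligning the integer arguments that $\Upshift$ computes on either side. Once the index bookkeeping is set up correctly in those two cases, the remaining rules ($\mathit{Clos}$, $\mathit{Ass}$, $\mathit{VarId}$, $\mathit{VarCons}$, $\mathit{IdL}$, $\mathit{ShiftId}$, $\mathit{ShiftCons}$, $\mathit{App}$) reduce to direct computations following the defining equations of $\Upshift$.
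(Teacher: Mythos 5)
Your overall strategy---unique $\sigma$-normal forms via confluence and termination, plus (a) a one-step commutation of $\Upshift{0}{1}{\cdot}$ with $\ra_\sigma$ and (b) preservation of $\sigma$-normality by $\Upshift{i}{j}{\cdot}$---is sound, and it is essentially the honest, spelled-out version of what the paper does: the paper's entire proof is the single remark that $\Upshift{0}{1}{\cdot}$ only renumbers free variables and is therefore ``orthogonal'' to the propagation of substitutions. So you are not on a different route; you are filling in the case analysis the paper suppresses.

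There is, however, one concrete obstacle your plan does not confront, and it is exactly where your sub-claims stop being well-formed rather than merely laborious. The paper's $\Upshift{i}{j}{\cdot}$ has \emph{no clause for} $\ua$; it is introduced for terms ``with substitutions, but without $\ua$''. Yet the rule $\mathit{Lambda}$ rewrites $(\lambda t)[s]$ to $\lambda(t[1\cdot(s\circ\ua)])$, so the first time it fires along $t\ra_\sigma^*\sigma(t)$ the target $\Upshift{0}{1}{t'}$ of your commutation (a) is simply undefined; the same holds for the later $\mathit{ShiftCons}$/$\mathit{ShiftId}$ steps that consume the created $\ua$'s, and for (b) applied to $\sigma(t)$, since $\sigma$-normal forms of $\ua$-free terms can contain $\ua$ (stuck subterms such as $1[\ua\circ\ua]$, or stuck closures $n[1\cdot(\dots\cdot\ua)]$ if integer constants are not expanded). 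You gesture at this when you say $\mathit{Lambda}$ ``forces $\Upshift$ to traverse a newly created $\ua$'', but calling it index bookkeeping understates it: you must first \emph{extend} the definition of $\Upshift{i}{j}{\cdot}$ with a $\ua$ clause (for instance $\Upshift{i}{j}{\ua}=(i-1,\ua)$ for $i\geq 1$ and $(0,\ua)$ for $i=0$, or equivalently a direct clause for $t[\ua]$), check that the extension still means ``add $j$ to the free indices above $i$'', and re-verify Property~\ref{prop:comp-up} and your commutation cases for the extended function. (Two smaller points: $\mathit{Map}$ and $\mathit{Ass}$ in fact commute on the nose from the defining equations, with no appeal to Property~\ref{prop:comp-up}; and the paper's own statement quantifies over all $t$, so it silently suffers from the same definedness problem---its orthogonality sentence hides rather than solves it. A complete proof, yours or the paper's, must address this extension explicitly.)
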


\begin{proof}
  Since the application of $\Upshift{0}{1}{}$ changes only the values of
  free variable, its application is orthogonal to the reduction of substitutions
  that changes only the values of bound variables.
\end{proof}

The following lemma says that the $\sigma$-normal form of a term $t[\ua]$
is the same as that of $\Upshift{0}{1}{t}$.

\begin{lemma}\label{l:init-shift}
  For all $t$, we have $\sigma(\Upshift{0}{1}{t})=\sigma(t[\ua])$.
\end{lemma}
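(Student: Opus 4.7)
The plan is to reduce the claim to substitution-free terms and then prove it by structural induction with a strengthened invariant. First, by Lemma~\ref{l:commute-shift}, $\sigma(\Upshift{0}{1}{t}) = \Upshift{0}{1}{\sigma(t)}$. Second, since $t \ra_\sigma^* \sigma(t)$ implies $t[\ua] \ra_\sigma^* \sigma(t)[\ua]$ and $\sigma$ is confluent and strongly normalizing, we have $\sigma(t[\ua]) = \sigma(\sigma(t)[\ua])$. So it suffices to prove that for every pure (substitution-free) term $u$, $\sigma(u[\ua]) = \Upshift{0}{1}{u}$.

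A naive structural induction on $u$ stalls in the lambda case, because rule $Lambda$ rewrites $(\lambda v)[\ua]$ to $\lambda v[1 \cdot (\ua \circ \ua)]$, whose body is no longer of the form $v[\ua]$. The natural remedy is to strengthen the statement to: for every pure $u$ and every $i \geq 0$, $\sigma(u[\Ua^i(\ua)]) = \Upshift{i}{1}{u}$, where $\Ua(s)$ abbreviates the $\lambda\sigma$ lift $1 \cdot (s \circ \ua)$. The original lemma is the instance $i=0$. The application case is routine via $App$ and two uses of the induction hypothesis. For $u = \lambda v$, rule $Lambda$ yields $(\lambda v)[\Ua^i(\ua)] \ra_\sigma \lambda v[1 \cdot (\Ua^i(\ua) \circ \ua)] = \lambda v[\Ua^{i+1}(\ua)]$, and the induction hypothesis applied to $v$ at index $i+1$ gives $\lambda \Upshift{i+1}{1}{v}$, which is precisely $\Upshift{i}{1}{\lambda v}$.

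The main technical nucleus is the variable case: show that $\sigma(n[\Ua^i(\ua)])$ equals $n$ if $n \leq i$ and $n+1$ if $n > i$. Unfolding $n$ as $1[\ua]^{n-1}$ and $\Ua^i(\ua)$ as $1 \cdot (\Ua^{i-1}(\ua) \circ \ua)$, repeated use of $Clos$, $ShiftCons$, $VarCons$, $VarId$ and $Map$ reduces $n[\Ua^i(\ua)]$ to the expected value; a subsidiary induction on $\min(n,i)$ handles the two regimes uniformly. This is exactly the kind of $\lambda\sigma$ bookkeeping already carried out in~\cite{AbaCarCurLev91}, so the real obstacle is not computational but structural: finding the generalisation that lets the lambda case propagate, which is what motivates introducing the family $\Ua^i(\ua)$ and indexing it against the parameter $i$ of $\Upshift{i}{1}{\cdot}$.
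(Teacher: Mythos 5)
Your proposal is correct and follows essentially the same route as the paper: the paper likewise generalizes the statement to $\sigma(\Upshift{i}{1}{t})=\sigma(t[\Ua^i(\ua)])$ with $\Ua(s)=1\cdot(s\circ\ua)$, reduces to terms in $\sigma$-normal form (which are pure modulo the integer encoding), and then does a structural induction whose only nontrivial work is the variable case, handled by exactly the $Clos$/$ShiftCons$/$Map$/$VarCons$ computations you indicate, with a subsidiary induction (the paper splits into the regimes $i<n$ and $i\geq n$ and inducts on $i$, which is equivalent to your induction on $\min(n,i)$). The only cosmetic difference is that you invoke Lemma~\ref{l:commute-shift} explicitly to strip the outer $\sigma$ from the left-hand side, which the paper leaves implicit.
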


\begin{proof}
  We prove a more general result. Let $\Ua(s) = 1\cdot(s\circ\ua)$,
  we prove that for all $t$ and $i$, we have
  $\sigma(\Upshift{i}{1}{t})=\sigma(t[\Ua^i\!\!(\ua)])$. Since
  $\sigma(t[\Ua^i\!\!(\ua)]) = \sigma(\sigma(t)[\Ua^i\!\!(\ua)])$, it is
  enough to prove it for $t$ in $\sigma$-normal form. We proceed by
  induction on it.

  \begin{itemize}
  \item $t=u\ v$: then $\sigma((u\ v)[\Ua^i\!\!(\ua)])=
    \sigma((u[\Ua^i\!\!(\ua)])\ \sigma(v[\Ua^i\!\!(\ua)])$, and we conclude by
    induction hypothesis.
  \item $t=\lambda u$: then $\sigma((\lambda u)[\Ua^i\!\!(\ua)])=
    \lambda(\sigma(u[\Ua^{i+1}\!\!(\ua)]))$ and
    $\sigma(\Upshift{i}{1}{\lambda u}) =
    \lambda(\sigma(\Upshift{i+1}{1}{u}))$. We conclude by
    induction hypothesis.
  \item $t=1$: there are two cases,
    \begin{itemize}
    \item either $i=0$, then $\sigma(\Upshift{0}{1}{1}) =
      \sigma(2)$ and $\sigma(1[\ua]) = \sigma(2)$.
    \item or $i>0$, then $\sigma(\Upshift{i}{1}{1}) =
      \sigma(1) = 1$ and $\sigma(1[\Ua^i\!\!(\ua)]) =
      \sigma(1[1\cdot(\Ua^{i-1}\!\!(\ua)\circ\ua)]) =_{VarCons}
      \sigma(1) = 1$.
    \end{itemize}
  \item $t=n>1$: there are two cases,
    \begin{itemize}
    \item either $i<n$, then $\sigma(\Upshift{i}{1}{n}) =
      \sigma(n+1) = \sigma(\indice{n}) =_{Clos}
      1[\underset{n}{\underbrace{\ua\circ...\circ\ua}}]$ and

      $$
      \sigma(n[\Ua^i\!\!(\ua)]) =
      \sigma(\indice{n-1}[\Ua^i\!\!(\ua)])
      =_{Clos}
      \sigma(1[\underset{n-1}{\underbrace{\ua\circ...\circ\ua}}\circ\Ua^i\!\!(\ua)])
      $$

      We prove that this last term is equal to
      $1[\underset{n}{\underbrace{\ua\circ...\circ\ua}}]$ by
      induction on $i$:
      \begin{itemize}
      \item $i=0$:
        $\sigma(1[\underset{n-1}{\underbrace{\ua\circ...\circ\ua}}\circ\ua])
        = 1[\underset{n}{\underbrace{\ua\circ...\circ\ua}}]$
      \item $i>0$:
        $$
        \sigma(1[\underset{n-1}{\underbrace{\ua\circ...\circ\ua}}
        \circ\Ua^i\!\!(\ua)]) =
        \sigma(1[\underset{n-1}{\underbrace{\ua\circ...\circ\ua}}
        \circ(1\cdot(\Ua^{i-1}\!\!(\ua)\circ\ua))]) =
        \sigma(1[\underset{n-2}{\underbrace{\ua\circ...\circ\ua}}
        \circ\Ua^{i-1}\!\!(\ua)\circ\ua])
        $$

        By rule $Clos$, we have
        $\sigma(1[\underset{n-2}{\underbrace{\ua\circ...\circ\ua}}
        \circ\Ua^{i-1}\!\!(\ua)\circ\ua]) =
        \sigma(1[\underset{n-2}{\underbrace{\ua\circ...\circ\ua}}
        \circ\Ua^{i-1}\!\!(\ua)][\ua]) = \sigma(\sigma(1[\underset{n-2}{\underbrace{\ua\circ...\circ\ua}}
        \circ\Ua^{i-1}\!\!(\ua)])[\ua])$. Since $i<n$, then
        $i-1<n-1$ and we can apply the induction hypothesis on $i$,
        giving us $\sigma(\sigma(1[\underset{n-2}{\underbrace{\ua\circ...\circ\ua}}
        \circ\Ua^{i-1}\!\!(\ua)])[\ua]) =
        \sigma(1[\underset{n-1}{\underbrace{\ua\circ...\circ\ua}}][\ua])
        =_{Clos} 1[\underset{n}{\underbrace{\ua\circ...\circ\ua}}]$.
      \end{itemize}
    \item either $i\geq n$, then $\sigma(\Upshift{i}{1}{n}) =
      \sigma(n) = \sigma(\indice{n-1}) =_{Clos}
      1[\underset{n-1}{\underbrace{\ua\circ...\circ\ua}}]$ and

      $$
      \sigma(n[\Ua^i\!\!(\ua)]) =
      \sigma(\indice{n-1}[\Ua^i\!\!(\ua)])
      =_{Clos}
      \sigma(1[\underset{n-1}{\underbrace{\ua\circ...\circ\ua}}\circ\Ua^i\!\!(\ua)])
      $$

      We prove that this last term is equal to
      $1[\underset{n-1}{\underbrace{\ua\circ...\circ\ua}}]$ by
      induction on $i$:
      \begin{itemize}
      \item $i=0$: impossible since $i\geq n>1$.
      \item $i=1$: impossible since $i\geq n>1$.
      \item $i=2$: we must have $n=2=1[\ua]$, giving us
        $$
        \begin{array}{rll}
          \sigma(1[\ua\circ\Ua^2\!\!(\ua)]) & = &
          \sigma(1[\ua\circ(1\cdot((1\cdot(\ua\circ\ua))\circ\ua))]) \\
          & =_{ShiftCons} & \sigma(1[(1\cdot(\ua\circ\ua))\circ\ua]) \\
          & =_{Map} & \sigma(1[1[\ua]\cdot(\ua\circ\ua\circ\ua)]) \\
          & =_{VarCons} & 1[\ua]
        \end{array}
        $$
      \item $i>2$:
        $$
        \sigma(1[\underset{n-1}{\underbrace{\ua\circ...\circ\ua}}
        \circ\Ua^i\!\!(\ua)]) =
        \sigma(1[\underset{n-1}{\underbrace{\ua\circ...\circ\ua}}
        \circ(1\cdot(\Ua^{i-1}\!\!(\ua)\circ\ua))]) =
        \sigma(1[\underset{n-2}{\underbrace{\ua\circ...\circ\ua}}
        \circ\Ua^{i-1}\!\!(\ua)\circ\ua])
        $$

        By rule $Clos$, we have
        $\sigma(1[\underset{n-2}{\underbrace{\ua\circ...\circ\ua}}
        \circ\Ua^{i-1}\!\!(\ua)\circ\ua]) =
        \sigma(1[\underset{n-2}{\underbrace{\ua\circ...\circ\ua}}
        \circ\Ua^{i-1}\!\!(\ua)][\ua])$. Since $i\geq n$, then
        $i-1\geq n-1$ and we can apply the induction hypothesis on
        $i$, giving us $\sigma(1[\underset{n-2}{\underbrace{\ua\circ...\circ\ua}}
        \circ\Ua^{i-1}\!\!(\ua)][\ua]) =
        \sigma(1[\underset{n-2}{\underbrace{\ua\circ...\circ\ua}}][\ua])
        =_{Clos} 1[\underset{n-1}{\underbrace{\ua\circ...\circ\ua}}]$.
      \end{itemize}
    \end{itemize}
  \end{itemize}
\end{proof}

We also need a lemma to equalize $\sigma$-normal forms.

\begin{lemma}\label{l:egal-sigma}
  For all $t$, $u$, $s$ and $s'$ in $\sigma$-normal form, if
  $\sigma(t[1\cdot(s\circ\ua)])=\sigma(t'[1\cdot(s'\circ\ua)])$ then
  $\sigma(t[u\cdot s])=\sigma(t'[u\cdot s'])$.
\end{lemma}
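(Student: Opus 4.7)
The plan is to reduce the second equality to the first by composing with the substitution $[u\cdot id]$ on the right. Concretely, the substitution $u\cdot s$ is $\sigma$-equivalent to $(1\cdot(s\circ\ua))\circ(u\cdot id)$, which means that $[u\cdot s]$ factors through $[1\cdot(s\circ\ua)]$ followed by $[u\cdot id]$; once this factorisation is in place, the hypothesis can be propagated under $[u\cdot id]$ using confluence plus strong normalisation of $\sigma$.

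First I would establish the following composition identity: for any $s$ in $\sigma$-normal form,
\[
\sigma\bigl((1\cdot(s\circ\ua))\circ(u\cdot id)\bigr)\;=\;\sigma(u\cdot s).
\]
The reduction is $(1\cdot(s\circ\ua))\circ(u\cdot id)\to_{Map}1[u\cdot id]\cdot((s\circ\ua)\circ(u\cdot id))\to_{VarCons}u\cdot((s\circ\ua)\circ(u\cdot id))\to_{Ass}u\cdot(s\circ(\ua\circ(u\cdot id)))\to_{ShiftCons}u\cdot(s\circ id)$, so it suffices to show $\sigma(s\circ id)=\sigma(s)$. This last fact follows by a routine induction on the structure of a $\sigma$-normal substitution, using $IdL$, $ShiftId$, $Map+VarId$ and $Ass$; the key point is that every normal substitution has one of a small number of head shapes.

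Next I would apply this identity to the terms: by rule $Clos$ we have $t[1\cdot(s\circ\ua)][u\cdot id]\to_{Clos}t[(1\cdot(s\circ\ua))\circ(u\cdot id)]$, so
\[
\sigma\bigl(t[1\cdot(s\circ\ua)][u\cdot id]\bigr)\;=\;\sigma\bigl(t[u\cdot s]\bigr),
\]
and symmetrically for $t'$, $s'$. This is a consequence of confluence and strong normalisation of $\sigma$, which ensure that $\sigma(t[\sigma(s_1)])=\sigma(t[s_1])$ and that a common reduct has a unique normal form.

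Finally, the hypothesis $\sigma(t[1\cdot(s\circ\ua)])=\sigma(t'[1\cdot(s'\circ\ua)])$ gives, after applying $[u\cdot id]$ and taking normal forms on both sides,
\[
\sigma\bigl(t[1\cdot(s\circ\ua)][u\cdot id]\bigr)\;=\;\sigma\bigl(t'[1\cdot(s'\circ\ua)][u\cdot id]\bigr),
\]
and combining with the two equalities above yields $\sigma(t[u\cdot s])=\sigma(t'[u\cdot s'])$, which is the desired conclusion. The main obstacle I anticipate is the auxiliary fact $\sigma(s\circ id)=\sigma(s)$: because $\lambda\sigma$ has no $IdR$ rule, one must walk through the possible head shapes of a normal substitution (and in particular treat the $Map$ case $(t\cdot s)\circ id$ which produces $t[id]\cdot(s\circ id)$, requiring Lemma~\ref{l:init-id} to dispose of the residual $[id]$) rather than appealing to a one-step rewrite.
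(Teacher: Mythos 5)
Your proof is correct, but it takes a genuinely different route from the paper's. The paper settles this lemma by a direct structural induction on $t$ (its proof is literally ``Easy induction on $t$''), pushing the two substitutions $[1\cdot(s\circ\ua)]$ and $[u\cdot s]$ through $t$ and $t'$ in parallel and comparing the resulting normal forms case by case. You instead reduce the statement to the algebraic factorisation $\sigma(u\cdot s)=\sigma\bigl((1\cdot(s\circ\ua))\circ(u\cdot id)\bigr)$, post-compose the hypothesis with $[u\cdot id]$, and conclude via $\mathit{Clos}$ together with confluence and termination of $\sigma$. Your chain $\mathit{Map}$, $\mathit{VarCons}$, $\mathit{Ass}$, $\mathit{ShiftCons}$ is checked correctly, and the argument never performs a case analysis on the term (indeed it does not even use that $t$ and $t'$ are normal). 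The price of this uniformity is twofold, and you pay it honestly: (i) you need confluence of $\sigma$ to replace $\sigma$-equal substitutions under a term and to propagate the hypothesis under $[u\cdot id]$ --- legitimate, since the paper already relies implicitly on uniqueness of $\sigma$-normal forms when it writes $\sigma(t)$, but it is an ingredient the paper's induction does not invoke; and (ii) you need the auxiliary identity $\sigma(s\circ id)=\sigma(s)$, which is not a one-step rewrite because the calculus has no $IdR$ rule; your treatment of it (induction on the shape of a normal substitution, with Lemma~\ref{l:init-id} absorbing the residual $[id]$ created by $\mathit{Map}$) is sound, since a normal composition can only be a right-nested tower of $\ua$'s. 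Note finally that your trick hinges on composition of substitutions being present in the syntax: it adapts verbatim to the named variant (Lemma~\ref{l:egal-sigma-lsn}), but would not transfer to a calculus such as $\lambda\upsilon$, where only the paper's purely inductive style is available.
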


\begin{proof}
  Easy induction on $t$.
\end{proof}

We can now prove our initialization lemma.

\begin{lemma}[Initialisation]\label{l:Ateb-sim-ls}
  For all $t$, there exists $u$ such that $\mathit{Ateb}(t) \ra_B^* u$ and $u\rel
  t$.
\end{lemma}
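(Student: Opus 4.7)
I would proceed by induction on $t$ equipped with a well-founded measure chosen so that every recursive call in the definition of $\mathit{Ateb}$ strictly decreases (e.g., weighting $\circ$ heavier than $\cdot$ and $\cdot$ heavier than $[\,]$ so that both $\mathit{Ateb}(t[s \circ s']) = \mathit{Ateb}(t[s][s'])$ and $\mathit{Ateb}(t[t' \cdot s]) = \mathit{Ateb}((\lambda t)[s])\ \mathit{Ateb}(t')$ are on strictly smaller arguments). The variable case $t = n$ is immediate with $u = n$, since $n \rel n$. The two purely structural cases $t = t_1\ t_2$ and $t = \lambda t_1$ build $u$ componentwise from the reducts supplied by the induction hypothesis, and the relation $\rel$ transfers by the congruence clauses of $\skelin$ together with the compatibility of $\sigma$ with application and abstraction.

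The substitution cases split on the outermost substitution. For $t = u[id]$, $\mathit{Ateb}$ simply erases the $[id]$, so the IH on $u$ directly supplies $u'$; the skeleton part follows from the $\neg PR(id)$ clause and Lemma~\ref{l:init-id} gives $\sigma(u[id]) = \sigma(u)$. For $t = u[\ua]$, $\mathit{Ateb}(u[\ua]) = \Upshift{0}{1}{\mathit{Ateb}(u)}$; because $\Upshift{0}{1}{\cdot}$ merely renames free variables it commutes with $B$-reduction, so the IH lifts to $\Upshift{0}{1}{\mathit{Ateb}(u)} \ra_B^* \Upshift{0}{1}{u'}$, and the $\sigma$-equality then follows by combining Lemmas~\ref{l:commute-shift} and~\ref{l:init-shift}. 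For $t = u[s_1 \circ s_2]$, $\mathit{Ateb}$ recurs on $u[s_1][s_2]$; the IH supplies $w \rel u[s_1][s_2]$, and the desired $w \rel u[s_1 \circ s_2]$ is recovered by observing that the $Clos$ rule equates their $\sigma$-normal forms, while the skeletons can be reconciled by grouping the two outer substitutions into a composition.

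The subtle case is $t = u[t' \cdot s]$, where $\mathit{Ateb}(t) = \mathit{Ateb}((\lambda u)[s])\ \mathit{Ateb}(t')$. Applying the IH to $(\lambda u)[s]$ and to $t'$ yields reducts $w$ and $v'$ with $\mathit{Ateb}(t) \ra_B^* w\ v'$, but $w\ v'$ is shaped as an application and does not syntactically match $u[t' \cdot s]$. The plan is then to fire one further $B$-step: since $w \rel (\lambda u)[s]$ and $\sigma((\lambda u)[s])$ begins with a $\lambda$ (by the Lambda rule of $\sigma$), the reduct $w$ must retain an outer $\lambda$-shape (possibly wrapped in some non-PR administrative substitutions), so an outer $B$-redex is available and its contraction produces an explicit substitution of the form $[v' \cdot id]$. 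The final $\rel$-check is then carried out by combining $\sigma(w) = \sigma((\lambda u)[s])$ and $\sigma(v') = \sigma(t')$ with Lemma~\ref{l:egal-sigma} to equate the resulting $\sigma$-normal form with $\sigma(u[t' \cdot s])$.

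The main obstacle will be precisely this last case: verifying that the reduct $w$ indeed exposes an outer $\lambda$ so that the $B$-step can legitimately fire, and then checking that the resulting term satisfies both the skeleton condition and the $\sigma$-equivalence with $u[t' \cdot s]$. This is the very reason the relation $\rel$ was engineered via $\sigma$-equivalence rather than any strict syntactic matching, so that these administrative mismatches do not obstruct the simulation.
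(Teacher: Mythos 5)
Your proposal follows essentially the same route as the paper: the same case analysis on $t$ (with the non-structural recursion of $\mathit{Ateb}$ handled by a well-founded weight rather than the paper's induction on the number of reduction steps of $\mathit{Ateb}(t)$), the same appeals to Lemmas~\ref{l:init-id}, \ref{l:commute-shift}, \ref{l:init-shift} and~\ref{l:egal-sigma}, and the same treatment of the delicate case $t=u[t'\cdot s]$ by inspecting the shape of the reduct of $\mathit{Ateb}((\lambda u)[s])$ and firing one more $B$-step. One small correction: the fact that this reduct is either $\lambda v_1$ or $(\lambda v_1)[s_1]$ should be read off from the skeleton condition $w\skelin(\lambda u)[s]$ rather than from $\sigma((\lambda u)[s])$ being $\lambda$-headed, which by itself would not constrain the syntactic shape of $w$; and in the second of these two sub-cases the resulting term is $v_1[u_2\cdot s_1]$ rather than something of the form $v_1[u_2\cdot id]$.
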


\begin{proof}
  By induction on the number of reduction steps of $\mathit{Ateb}(t)$ and by case analysis of $t$.
  \begin{itemize}
    \item If $t=n$, then $\mathit{Ateb}(t)=n$ and we conclude with $u=n$.
    \item If $t=(t_1\ t_2)$, then $\mathit{Ateb}(t)=(\mathit{Ateb}(t_1)\
      \mathit{Ateb}(t_2))$. By induction hypothesis, there exists $u_1$ and
      $u_2$ such that $\mathit{Ateb}(t_1) \ra_B^* u_1$ and $u_1\rel t_1$ and
      $\mathit{Ateb}(t_2) \ra_B^* u_2$ and $u_2\rel t_2$. We conclude with
      $u=(u_1\ u_2)$.
    \item If $t=\lambda t'$, then $\mathit{Ateb}(t)=\lambda\mathit{Ateb}(t')$. By induction
      hypothesis, there exists $u'$ such that $\mathit{Ateb}(t')
      \ra_B^* u'$. We conclude with $u=\lambda u'$.
    \item If $t=t'[id]$, then $\mathit{Ateb}(t)=\mathit{Ateb}(t')$. By
      induction hypothesis, there exists $u'$ such that $\mathit{Ateb}(t')
      \ra_B^* u'$. We take $u=u'$ and we conclude with the help of Lemma~\ref{l:init-id}.
    \item If $t=t'[\ua]$, then
      $\mathit{Ateb}(t)=\Upshift{0}{1}{\mathit{Ateb}(t')}$. By induction
      hypothesis, there exists $u'$ such that $\mathit{Ateb}(t')
      \ra_B^* u'$. We take $u=\Upshift{0}{1}{u'}$ and we conclude with the help
      of Lemmas~\ref{l:commute-shift} and~\ref{l:init-shift}.
    \item If $t=t'[s\circ s']$, then $\mathit{Ateb}(t)=\mathit{Ateb}(t'[s][s'])$. By
      hypothesis, there exists $u'$ such $\mathit{Ateb}(t'[s][s']) \ra_B^* u'$.
      Four cases arise with respect to the values of $PR(s)$ and $PR(s')$,
      in all those cases, we can conclude with $u=u'$.
    \item If $t=t_1[t_2\cdot s]$, then $\mathit{Ateb}(t)=\mathit{Ateb}((\lambda
      t_1)[s])\ \mathit{Ateb}(t_2)$. By induction hypothesis, there exists
      $u_1$ and $u_2$ such that $\mathit{Ateb}((\lambda t_1)[s]) \ra_B^* u_1$ and
      $u_1\rel (\lambda t_1)[s]$ and $\mathit{Ateb}(t_2) \ra_B^* u_2$ and
      $u_2\rel t_2$. There are two cases with respect to the form of $u_1$.
      \begin{itemize}
      \item If $u_1=\lambda v_1$ (and so $\neg
        PR(s)$), then we take $u=v_1[u_2\cdot id]$. We must check that $u\rel t$
        and the difficulty resides in the proof of
        $\sigma(u)=\sigma(t)$. By hypothesis, we have $\sigma(\lambda
        v_1) = \sigma((\lambda t_1)[s])$. It is obvious that $\sigma(\lambda
        v_1) = \sigma((\lambda v_1)[id]) =
        \lambda(\sigma(\sigma(v_1)[1\cdot(id\circ\ua)]))$. On the other hand, we have
        $\sigma((\lambda t_1)[s]) =
        \lambda(\sigma(\sigma(t_1)[1\cdot(\sigma(s)\circ\ua)]))$, and it gives us
        $\sigma(\sigma(v_1)[1\cdot(id\circ\ua)]) =
        \sigma(\sigma(t_1)[1\cdot(\sigma(s)\circ\ua)])$. We have the required conditions
        to apply Lemma ~\ref{l:egal-sigma} with the term $\sigma(u_2)$
        that is equal to $\sigma(t_2)$ by hypothesis, and we get
        $\sigma(\sigma(v_1)[\sigma(u_2)\cdot id]) =
        \sigma(\sigma(t_1)[\sigma(t_2)\cdot s])$ which concludes this point.
      \item If $u_1=(\lambda v_1)[s_1]$, then we take
        $u=v_1[u_2\cdot s_1]$ and we conclude similarly to the previous point
        with the help of Lemma~\ref{l:egal-sigma}.
      \end{itemize}
  \end{itemize}
\end{proof}

\subsubsection{Simulation}

We can now perform the simulation.

\begin{lemma}[Simulation]\label{l:simul-ls}
  For all $t$ reducing by rule $B$ to $t'$,
  for all $u\rel t$, there exists $u'$ such that
  $u$ reduces in one step to  $u'$ and $u'\rel t'$.
  For all he other rules, for all $t$ reducing to $t'$,
  for all $u\rel t$, there exists $u'$ such that
  $u$ reduces in zero or some steps to $u'$
  and $u'\rel t'$.
\end{lemma}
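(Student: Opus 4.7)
The plan is to proceed by case analysis on the reduction rule applied in $t\ra t'$, splitting into the $\sigma$-rules (all rules except $B$) and the $B$-rule, and then, for each rule, sub-analysing the possible shapes of $u$ permitted by $u\skelin t$.

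For the $\sigma$-rules, the crucial observation is that every such reduction preserves the $\sigma$-normal form: if $t\ra_\sigma t'$ then $\sigma(t)=\sigma(t')$ by confluence and strong normalisation of $\sigma$ (cited via~\cite{AbaCarCurLev91}). Hence for any $u\rel t$ we already have $\sigma(u)=\sigma(t')$, and the remaining task is to produce $u'$ with $u\ra^* u'$ and $u'\skelin t'$. In most sub-cases I would simply take $u'=u$, relying on the fact that $\skelin$ is generous enough to absorb non-PR substitutions through its clauses of the form ``$t\skelin t'[s]$ when $\neg PR(s)$''. When $u$ mirrors the top-level structure of $t$ (for example, both are $(t_1\ t_2)[s]$ for rule $App$), I would perform the analogous $\sigma$-step inside $u$ using the induction hypothesis on subterms. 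This produces a routine but long enumeration over the rules $App$, $Lambda$, $VarId$, $VarCons$, $Clos$, $IdL$, $ShiftId$, $ShiftCons$, $Map$, $Ass$, crossed with the possible shapes of $u$ up to $\skelin$.

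For rule $B$: $t=(\lambda v)\,w$ and $t'=v[w\cdot id]$. Since applications in $\skelin$ are matched exactly, $u$ must be of the form $u=(f\ g)$ with $f\skelin \lambda v$ and $g\skelin w$. Unfolding $f\skelin \lambda v$ gives two cases: either $f=\lambda f'$ with $f'\skelin v$, or $f$ is a $\lambda$-abstraction buried under one or more non-PR substitutions. In the first case $u$ is itself a $B$-redex; I would set $u'=f'[g\cdot id]$, and verify $u'\rel t'$ by combining $\sigma(u)=\sigma(t)$ with Lemma~\ref{l:egal-sigma} to obtain $\sigma(f'[g\cdot id])=\sigma(v[w\cdot id])$, and by a direct unfolding of $\skelin$ to check $f'[g\cdot id]\skelin v[w\cdot id]$. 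In the second case, I would apply one $\sigma$-step (rule $Lambda$ or $Clos$) to push the outermost wrapping substitution inward; the resulting $u'$ still has $\sigma$-normal form $\sigma(t)=\sigma(t')$, and its skeleton satisfies $u'\skelin t'$ because the pushed substitution remains non-PR and is absorbed by the appropriate clause of $\skelin$.

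The hard part will be the bookkeeping for the $B$-case when $u$ hides its $\lambda$ under non-PR substitutions: showing that a single step of $u$ suffices to land in $\rel$-relation with $t'$ requires coordinating the propagation of the wrapping substitutions with the emergence of the new $[w\cdot id]$ on the right-hand side. Lemma~\ref{l:egal-sigma} is tailored precisely for this coordination, and the $\neg PR$ side-conditions in the definition of $\skelin$ are tuned to make the skeleton comparison go through. A secondary difficulty is the sheer volume of sub-cases for the $\sigma$-rules that manipulate composite substitutions (in particular $Clos$, $Map$, and $Ass$), but each individual verification reduces to unfolding the definitions and using Lemma~\ref{l:commute-shift} together with the preservation of the $\sigma$-normal form.
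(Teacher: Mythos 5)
Your overall architecture matches the paper's (the $\sigma$-rules preserve $\sigma$-normal forms, so only the skeleton needs attention there; the work is concentrated in the $B$-rule, where Lemma~\ref{l:egal-sigma} is the key tool), but your case analysis for $B$ misreads the orientation of $\skelin$. In $u\rel t$ the term $u$ sits on the \emph{left} of $\skelin$, and the only clause letting the two sides differ by a substitution is $t\skelin t'[s]$ with the extra $[s]$ on the \emph{right}; there is no clause of the form $t[s]\skelin t'$. So $u$ can only \emph{omit} non-$PR$ substitutions that $t$ carries, never add ones of its own, and since the right-hand side $\lambda v$ of $f\skelin \lambda v$ is not a closure, $f$ must literally be $\lambda f'$. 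Your second sub-case (``$f$ is a $\lambda$-abstraction buried under non-$PR$ substitutions'') is vacuous --- fortunately, since the single $\sigma$-step you propose there would leave an application facing the closure $v[w\cdot id]$, whose substitution is in general $PR$, so $u'\skelin t'$ would fail. The real content of the $\neg PR$ side conditions is the converse guarantee: the redex of $t$ cannot be hidden inside a substitution that $u$ omits, so $u$ exhibits a literal $B$-redex at the corresponding position and fires it in exactly one step.

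The genuine gap is in establishing $\sigma(u')=\sigma(t')$. You treat the redex as if it sat at top level, where $\sigma(u)=\sigma(t)$ immediately yields $\sigma(f')=\sigma(v)$ and $\sigma(g)=\sigma(w)$. In general $t=C[(\lambda v)\ w]$, and during normalization the context pushes a substitution $[\ul{s}]$ onto the redex, so that $\sigma(t)=C_1[(\lambda\sigma(\ul{v}[1\cdot(\ul{s}\circ\ua)]))\ \sigma(\ul{w}[\ul{s}])]$; extracting componentwise equalities from $\sigma(u)=\sigma(t)$ requires this decomposition, and the decomposition exists only when the redex \emph{survives} into $\sigma(t)$. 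The paper therefore splits into two cases: if the redex is erased during normalization (a $\mathit{ShiftCons}$ step discards the cons cell containing it), one argues that the contractum is likewise erased from both $\sigma(t')$ and $\sigma(u')$, which then agree for that reason; if it survives, one computes $\sigma(t')$ via $\mathit{Clos}$ and $\mathit{Map}$ down to $C_1[\sigma(\ul{v}[\sigma(\ul{w}[\ul{s}])\cdot\ul{s}])]$ and only then invokes Lemma~\ref{l:egal-sigma} on the resulting components. Neither the erasure case nor the context decomposition appears in your plan, and without them the appeal to Lemma~\ref{l:egal-sigma} has no hypotheses to feed on.
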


\begin{proof}
  For all the rules apart from $B$ ({\em i.e.} $\sigma$), the proof is simple.
  $u\rel t$ gives us $u\skelin t$ and $\sigma(u) =
  \sigma(t)$, and, on the other hand, $t\ra_{\sigma} t'$ implies
  $\sigma(t)=\sigma(t')$. Two cases are possible with respect to the fact
  that the redex appears also in $u$. If not, we take $u'=u$
  and we directly conclude. Else, we reduce it with the same rule and we conclude
  with $\sigma(u') = \sigma(u) = \sigma(t')$.

  \medskip

  It's more complicated for the rule $B$. The hypothesis is the same but we are sure that
  the redex appears in $u$, that was the point of defining the relation $\skelin$
  with the help of the predicate $PR$. We then have $u\ra_B u'$ and we want to
  prove $u'\rel t'$. Even if it is obvious that $u'\skelin t'$ comes directly
  from $u\skelin t$, it is not the case for the equality of the $\sigma$-normal
  forms. We want $\sigma(u')=\sigma(t')$ with the hypothesis $\sigma(u)=\sigma(t)$.
  We take $t=C[(\lambda v)\ w]$, which gives us $t'=C[v[w\cdot id]]$ and $u=C'[(\lambda v')\
  w']$. Two cases are possible:
  \begin{itemize}
  \item the redex $(\lambda v)\ w$ does not appear in $\sigma(t)$.
    It means that the calculus of $\sigma(t)$ can be split as follows:

    $$
    C[(\lambda v)\ w]\ra^*_{\sigma} C_1[\ua\circ(C_2[(\lambda v)\
    w]\cdot s)] \ra_{ShiftCons} C_1[s] \ra^*_{\sigma} \sigma(t)
    $$

    Since $\sigma(t)=\sigma(u)$, the same occurs for $u$. Similarly for the redex,
    the reduct will be erased from $t'$ and from $u'$ and we get $\sigma(u')=\sigma(t')$.

  \item the redex $(\lambda v)\ w$ does appear in
    $\sigma(t)$. We will write, for all $t$, $\ul{t}$ for $\sigma(t)$,
    in order to clarify the presentation of the calculi. We have
    the following equalities:

    $$
    \begin{array}{rcl}
      \ul{t} & = & \sigma(C[(\lambda v)\ w]) \\
      & = & C_1[\sigma(((\lambda v)\ w)[\ul{s}])] \\
      & = & C_1[(\lambda\sigma(\ul{v}[1\cdot(\ul{s}\circ\ua)]))\ \sigma(\ul{w}[\ul{s}])] \\
    \end{array}
    $$

    And, similarly,
    $\ul{u}=C'_1[(\lambda\sigma(\ul{v'}[1\cdot(\ul{s'}\circ\ua)]))\
    \sigma(\ul{w'}[\ul{s'}])]$. From $\ul{t}=\ul{u}$ we deduce
    $C_1=C'_1$, $\sigma(\ul{v}[1\cdot(\ul{s}\circ\ua)]) =
    \sigma(\ul{v'}[1\cdot(\ul{s'}\circ\ua)])$ and
    $\sigma(\ul{w}[\ul{s}]) = \sigma(\ul{w'}[\ul{s'}])$. We now look
    at $\ul{t'}$ and $\ul{u'}$:

    $$
    \begin{array}{rl}
      \ul{t'} = & \sigma(C[v[w\cdot id]]) \\
      = & C_1[\sigma(\ul{v}[\ul{w}\cdot id][\ul{s}])] \\
      =_{Clos} & C_1[\sigma(\ul{v}[(\ul{w}\cdot id)\circ\ul{s}])] \\
      =_{Map} & C_1[\sigma(\ul{v}[\ul{w}[\ul{s}]\cdot \ul{s}])] \\
      = & C_1[\sigma(\ul{v}[\sigma(\ul{w}[\ul{s}])\cdot \ul{s}])] \\
    \end{array}
    $$

    And similarly,
    $\ul{u'}=C'_1[\sigma(\ul{v'}[\sigma(\ul{w'}[\ul{s'}])\cdot
    \ul{s'}])]$. From the preceding equalities, we deduce
    $\ul{u'}=C'_1[\sigma(\ul{v'}[\sigma(\ul{w}[\ul{s}])\cdot
    \ul{s'}])]$, and we can conclude with the help of Lemma~\ref{l:egal-sigma}.
  \end{itemize}
\end{proof}

\begin{lemma}
  For all terms $t$, if $u\rel t$ and $u\in \Lambda^X_{SN}$, then $t\in \Lambda^X_{SN}$.
\end{lemma}

\begin{proof}
  By the simulation Lemma~\ref{l:simul-ls}, and since the $\sigma$-calculus is
  terminating~\cite{AbaCarCurLev91}, if we have an infinite derivation
  of $t$, then we can also build one in $u$, and that gives us a contradiction.
\end{proof}

Since the $\mathit{Ateb}(t)$ function returns a term $t'$ that reduces to $u\rel t$
(by Lemma~\ref{l:Ateb-sim-ls}), we know this technique can be applied to this calculus.

\section{$\lambda_{\sigma n}$-calculus}

In this section, we study a version with names of the $\lsigma$-calculus~\cite{AbaCarCurLev91}.
The same remarks will apply here as regards to the application of this technique.

\subsection{Definition}

Terms of the $\lambda_{\sigma n}$-calculus are given by the following grammar:

$$
\begin{array}{l}
  t ::= x\ |\ (t\ t)\ |\ \lambda x.t\ |\ t[s] \\
  s ::= id\ |\ (t/x)\cdot s\ |\ s\circ s \\
\end{array}
$$

Here follows the reduction rules:

$$
\begin{array}{llll}
  (\lambda x.t) u & \ra_{B} & t[(u/x)\cdot id] & \\ \\
  (t\ u)[s] & \ra_{App} & (t[s])\ (u[s]) & \\
  (\lambda x.t)[s] & \ra_{Lambda} & \lambda y.(t[(y/x)\cdot
  s]) & \mbox{with $y$ fresh} \\
  x[id] & \ra_{VarId} & x & \\
  x[(t/x)\cdot s] & \ra_{VarCons1} & t & \\
  x[(t/y)\cdot s] & \ra_{VarCons2} & x[s] & (x\not=y)\\
  t[s][s'] & \ra_{Clos} & t[s\circ s'] \\ \\
  id\circ s & \ra_{IdL} & s \\
  ((t/x)\cdot s)\circ s' & \ra_{Map} & (t[s']/x)\cdot(s\circ s') \\
  (s_1\circ s_2)\circ s_3 & \ra_{Ass} & s_1\circ(s_2\circ s_3) \\
\end{array}
$$

\subsection{Towards strong normalization}

We define the $\mathit{Ateb}$ function as follows:

$$
\begin{array}{lll}
  \mathit{Ateb}(x) & = & x \\
  \mathit{Ateb}(t\ u) & = & Ateb(t)\ Ateb(u) \\
  \mathit{Ateb}(\lambda x.t) & = & \lambda x.Ateb(t) \\
  \mathit{Ateb}(t[id]) & = & Ateb(t) \\
  \mathit{Ateb}(t[s\circ s']) & = & Ateb(t[s][s']) \\
  \mathit{Ateb}(t[(t'/x)\cdot s]) & = & Ateb((\lambda x.t)[s])\ Ateb(t') \\
\end{array}
$$

It is obvious that for any $t$,
$\mathit{Ateb}(t)$ does not contain substitutions. We must check that the
term we obtain is typeable.

\begin{lemma}
  $$\Gamma\vdash t:A \;\;\Ra\;\; \Gamma\vdash \mathit{Ateb}(t):A$$
\end{lemma}

\begin{proof}
  By induction on $t$.

  \begin{itemize}
  \item $t=x$ and

    $$
    \infer{x:A,\Delta\vdash x:A}{}
    $$

    We have $\mathit{Ateb}(t)=x$ and the same typing derivation.

  \item $t=(u\ v)$ and

    $$
    \infer
      {\Gamma\vdash (u\ v):A}
      {\Gamma\vdash u:B\ra A & \Gamma\vdash v:B}
    $$

    By induction hypothesis, we have $\Gamma\vdash \mathit{Ateb}(u):B\ra A$ and
    $\Gamma\vdash \mathit{Ateb}(v):B$. We can type $\mathit{Ateb}(t)=\mathit{Ateb}(u)\
    \mathit{Ateb}(v)$ as follows

    $$
    \infer
      {\Gamma\vdash (\mathit{Ateb}(u)\ \mathit{Ateb}(v)):A}
      {\Gamma\vdash \mathit{Ateb}(u):B\ra A & \Gamma\vdash \mathit{Ateb}(v):B}
    $$

  \item $t=\lambda x.u$ and

    $$
    \infer
      {\Gamma\vdash \lambda x.u:B\ra A}
      {x:B,\Gamma\vdash u:A}
    $$

    By induction hypothesis, we have $\Gamma,x:B\vdash \mathit{Ateb}(u):A$. We can type
    $\mathit{Ateb}(t)=\lambda x.\mathit{Ateb}(u)$ as follows

    $$
    \infer
      {\Gamma\vdash \lambda x.\mathit{Ateb}(u):B\ra A}
      {x:B,\Gamma\vdash \mathit{Ateb}(u):A}
    $$

  \item $t=u[id]$ and

    $$
    \infer
      {\Gamma\vdash u[id]:A}
      {\Gamma\vdash id\trg\Gamma & \Gamma\vdash u:A}
    $$

    We directly conclude by induction hypothesis.

  \item $t=u[(v/x)\cdot s]$ and

    $$
    \infer
      {\Gamma\vdash u[(v/x)\cdot s]:A}
      {\infer
        {\Gamma\vdash (v/x)\cdot s\trg x:B,\Gamma'}
        {\Gamma\vdash s\trg\Gamma' & \Gamma\vdash v:B} &
       x:B,\Gamma'\vdash u:A}
    $$

    By induction hypothesis, we have $\Gamma\vdash\mathit{Ateb}((\lambda x.u)[s]):B\ra
    A$ and $\Gamma\vdash\mathit{Ateb}(v):B$. We conclude with the following typing
    derivation

    $$
    \infer
      {\Gamma\vdash\mathit{Ateb}((\lambda x.u)[s])\ \mathit{Ateb}(v):A}
      {\Gamma\vdash\mathit{Ateb}((\lambda x.u)[s]):B\ra A &
       \Gamma\vdash\mathit{Ateb}(v):B}
    $$

  \item $t=u[s\circ s']$, we conclude directly by induction hypothesis.
  \end{itemize}
\end{proof}

\subsection{Definition of the relation $\rel$}

We proceed as in the previous section, but more easily since there is no $\ua$.
We use the same notion of redexability (see~\ref{def:redexabilite}) and
the relation $\skelin$ is define similarly (without the $\ua$). We define the relation
$\rel$ as follows.

\begin{definition}
  For all $t$ and $u$, $u\rel t\iff u\skelin t \mbox{ and }
  \sigma(t)=\sigma(u)$.
\end{definition}

We will use Lemma~\ref{l:init-id} and we need a new formulation of the Lemma~\ref{l:egal-sigma}.

\begin{lemma}\label{l:egal-sigma-lsn}
  For all $t$, $u$, $s$ and $s'$ in $\sigma$-normal form, if
  $\sigma(t[(y/x)\cdot s])=\sigma(t'[(y/x)\cdot s'])$ then
  $\sigma(t[(u/x)\cdot s])=\sigma(t'[(u/x)\cdot s'])$.
\end{lemma}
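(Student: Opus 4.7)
The plan is to mirror the proof of Lemma~\ref{l:egal-sigma} and proceed by structural induction on $t$, exploiting the fact that both $t$ and $t'$ are in $\sigma$-normal form and hence have the shape of pure $\lambda$-terms with no residual substitutions or compositions left to unfold. The crucial ingredient is the freshness of the name $y$: because $y$ is introduced as a placeholder and so does not occur free in $t$, $t'$, $s$, $s'$, or $u$, any free occurrence of $y$ appearing inside $\sigma(t[(y/x)\cdot s])$ must trace back to a free occurrence of $x$ in $t$. Thus $y$ acts as a marker for the positions at which we will plug in $u$, which is exactly the bookkeeping needed to exchange $y$ for $u$ on both sides.

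I would first dispatch the structural cases. If $t = t_1\,t_2$, rule $\mathit{App}$ gives $\sigma(t[(y/x)\cdot s]) = \sigma(t_1[(y/x)\cdot s])\,\sigma(t_2[(y/x)\cdot s])$; the hypothesis together with the fact that $t'$ is in $\sigma$-normal form forces $t' = t_1'\,t_2'$ with component-wise equalities on the pieces, and two uses of the induction hypothesis close the case. The case $t = \lambda z.t_1$ is analogous via rule $\mathit{Lambda}$, after $\alpha$-renaming $z$ to a name fresh for $x$, $y$, $s$, $s'$, $u$, so that the extended substitutions commute through the binder cleanly. The variable cases rely on rules $\mathit{VarCons1}$ and $\mathit{VarCons2}$: if $t = x$, then $\sigma(t[(y/x)\cdot s]) = y$, and freshness of $y$ combined with the $\sigma$-normal form of $t'$ forces $t' = x$ as well, whence both sides of the conclusion reduce to $\sigma(u)$; if $t = z$ with $z \neq x$, then $\sigma(t[(y/x)\cdot s]) = \sigma(z[s])$ is independent of the substituted value, the symmetric analysis of $t'$ yields $\sigma(t'[(y/x)\cdot s']) = \sigma(z'[s'])$ equal to it, and replacing $y$ by $u$ leaves this equality unchanged.

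The main obstacle I expect is the bookkeeping in the $\lambda$-case: one must pick the bound variable fresh for everything in play and verify that after $\alpha$-conversion the induction hypothesis applies to the renamed body with the appropriately extended substitution. This is where the named presentation diverges from the De Bruijn version used in Lemma~\ref{l:egal-sigma}, and is the reason one cannot simply quote that proof verbatim. Apart from this standard capture-avoidance subtlety, the argument is structurally an \emph{easy induction on $t$}, and the proof can be written as a straightforward adaptation of the one for Lemma~\ref{l:egal-sigma}.
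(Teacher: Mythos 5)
Your induction is organized as a parallel case analysis matching the top constructor of $t$ with that of $t'$, and the crucial step in the application case is the claim that the hypothesis \emph{forces} $t'=t_1'\,t_2'$. This is not true, and the same problem recurs in the $\lambda$-case and in the ``symmetric analysis'' of your variable case. The equality $\sigma(t[(y/x)\cdot s])=\sigma(t'[(y/x)\cdot s'])$ constrains only the common $\sigma$-normal form, not the shapes of $t$ and $t'$: take $t=a\,b$ with $a,b\neq x$, $s=id$, and $t'=z$ with $z\neq x$ and $s'=((a\,b)/z)\cdot id$. Both sides normalize to $a\,b$ (via $\mathit{App}$, $\mathit{VarCons2}$, $\mathit{VarId}$ on the left and $\mathit{VarCons2}$, $\mathit{VarCons1}$ on the right), all of $t,t',s,s'$ are in $\sigma$-normal form, yet $t'$ is a variable, not an application. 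The lemma is still true in such mixed cases, but your induction never reaches them, and they are exactly where the content lies: there one must argue that the shared normal form contains no occurrence of the fresh $y$, hence no free occurrence of $x$ in $t$ (or $t'$) survives normalization, so both sides are unchanged when $y$ is traded for $u$. That ``marker'' reasoning appears in your preamble but is never turned into a usable statement, so the cases that need it are unproved.

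The clean repair is to prove a one-sided marker lemma by induction on $t$ alone: for $y$ fresh for $t$, $s$ and $u$, $\sigma(t[(u/x)\cdot s])$ is (up to $\alpha$-conversion) the capture-avoiding meta-substitution of $u$ for $y$ in $\sigma(t[(y/x)\cdot s])$. This makes sense because, as you note, $\sigma$-normal terms of $\lambda_{\sigma n}$ carry no residual closures, so the substitution $\{y:=u\}$ of one pure normal term into another is again $\sigma$-normal. The stated lemma then follows in one line by applying this to both $t$ and $t'$ and using the hypothesis, with no case analysis on the relative shapes of $t$ and $t'$ at all; your existing structural cases (including the $\alpha$-renaming bookkeeping in the $\lambda$-case) are the right ingredients for that one-sided induction, but as written they prove a weaker, shape-synchronized statement that does not imply the lemma. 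The paper gives no details (``easy induction''), so this is not a divergence from its proof so much as a hole in yours.
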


\begin{proof}
  Easy induction.
\end{proof}

Here follows our initialization Lemma.

\begin{lemma}[Initialisation]\label{l:Ateb-sim-lsn}
  For all $t$, there exists $u$ such that $\mathit{Ateb}(t) \ra_B^* u$ and $u\rel
  t$.
\end{lemma}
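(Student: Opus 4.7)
The plan is to mirror the proof of Lemma~\ref{l:Ateb-sim-ls} for the $\lsigma$-calculus, exploiting the fact that $\lambda_{\sigma n}$ has no $\ua$ operator and hence no Upshift machinery is required. I would proceed by induction on the number of $B$-reduction steps of $\mathit{Ateb}(t)$ together with a case analysis on the syntactic form of $t$: variable, application, $\lambda$-abstraction, $t'[id]$, $t'[s\circ s']$, and the substitution case $t_1[(t_2/x)\cdot s]$.

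The first five cases are routine. For $t=x$, take $u=x$. For $t = t_1\ t_2$ and $t = \lambda x.t'$, the induction hypothesis applies to the immediate subterms and $u$ is reassembled from the results. For $t = t'[id]$, the IH on $t'$ together with Lemma~\ref{l:init-id} does the job since $\sigma(t'[id]) = \sigma(t')$. For $t = t'[s\circ s']$, the definition gives $\mathit{Ateb}(t'[s\circ s']) = \mathit{Ateb}(t'[s][s'])$ literally, and the $Clos$ rule makes $\sigma(t'[s\circ s']) = \sigma(t'[s][s'])$, so the $u$ obtained from the IH on $t'[s][s']$ transfers without change.

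The main case is $t = t_1[(t_2/x)\cdot s]$, where $\mathit{Ateb}(t) = \mathit{Ateb}((\lambda x.t_1)[s])\ \mathit{Ateb}(t_2)$. Two applications of the induction hypothesis produce $u_1$ with $\mathit{Ateb}((\lambda x.t_1)[s]) \ra_B^* u_1$ and $u_1 \rel (\lambda x.t_1)[s]$, together with $u_2$ with $\mathit{Ateb}(t_2) \ra_B^* u_2$ and $u_2 \rel t_2$. From $u_1 \skelin (\lambda x.t_1)[s]$ I would read off two possible shapes: either $u_1 = \lambda y.v_1$, which forces $\neg PR(s)$; or $u_1 = (\lambda y.v_1)[s_1]$ with $\lambda y.v_1 \skelin \lambda x.t_1$ and $s_1 \skelin s$. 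In the former shape, one further $B$-step gives $(\lambda y.v_1)\ u_2 \ra_B v_1[(u_2/y)\cdot id]$, which I take as $u$; in the latter, I take $u = v_1[(u_2/y)\cdot s_1]$.

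Verifying $u \skelin t$ is immediate from $v_1 \skelin t_1$, $u_2 \skelin t_2$, and (in the second sub-case) $s_1 \skelin s$. The equality $\sigma(u) = \sigma(t)$ is the delicate part: I would unfold $\sigma((\lambda x.t_1)[s])$ to $\lambda y.\sigma(\sigma(t_1)[(y/x)\cdot\sigma(s)])$, compare with the analogous unfolding of $\sigma(u_1)$, strip the outer $\lambda y$ to extract a substitution equality modulo the binder, then apply Lemma~\ref{l:egal-sigma-lsn} together with $\sigma(u_2) = \sigma(t_2)$ to replace the placeholder $y$ by the actual argument. The principal obstacle I expect is precisely this $\sigma$-equality in the second sub-case, where $(\lambda y.v_1)[s_1]\ u_2$ is not itself a $B$-redex, so the transition to $v_1[(u_2/y)\cdot s_1]$ must be justified purely at the level of $\sigma$-normal forms via Lemma~\ref{l:egal-sigma-lsn} rather than by an extra $B$-step.
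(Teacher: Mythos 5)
Your proof follows the paper's own argument essentially step for step: the same induction and case analysis, the same use of Lemma~\ref{l:init-id} for the $[id]$ case, and in the cons case the same split on the two possible shapes of $u_1$ with the same witnesses $v_1[(u_2/x)\cdot id]$ and $v_1[(u_2/x)\cdot s_1]$, concluded via Lemma~\ref{l:egal-sigma-lsn}. The obstacle you flag in the second sub-case --- that $((\lambda x.v_1)[s_1])\ u_2$ is not a $B$-redex, so $\mathit{Ateb}(t)\ra_B^* u$ is not literally established for the chosen $u$ --- is equally unaddressed in the paper, which simply writes ``we conclude similarly''.
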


\begin{proof}
  By induction on $t$.
  \begin{itemize}
    \item If $t=x$, then $\mathit{Ateb}(t)=x$ and we conclude with $u=x$.
    \item If $t=(t_1\ t_2)$, then $\mathit{Ateb}(t)=(\mathit{Ateb}(t_1)\
      \mathit{Ateb}(t_2))$. By induction hypothesis, there exists $u_1$ and
      $u_2$ such that $\mathit{Ateb}(t_1) \ra_B^* u_1$ and $u_1\rel t_1$ and
      $\mathit{Ateb}(t_2) \ra_B^* u_2$ and $u_2\rel t_2$. We conclude with
      $u=(u_1\ u_2)$.
    \item If $t=\lambda x.t'$, then $\mathit{Ateb}(t)=\lambda x.\mathit{Ateb}(t')$.
      By induction hypothesis, there exists $u'$ such that $\mathit{Ateb}(t')
      \ra_B^* u'$. We conclude with $u=\lambda x.u'$.
    \item If $t=t'[id]$, then $\mathit{Ateb}(t)=\mathit{Ateb}(t')$.
      By induction hypothesis, there exists $u'$ such that $\mathit{Ateb}(t')
      \ra_B^* u'$. We take $u=u'$ and we conclude with the help of Lemma~\ref{l:init-id}.
    \item If $t=t'[s\circ s']$, then $\mathit{Ateb}(t)=\mathit{Ateb}(t'[s][s'])$.
      By induction hypothesis, there exists $u'$ such that $\mathit{Ateb}(t'[s][s']) \ra_B^* u'$.
      We conclude with $u=u'$.
    \item If $t=t_1[(t_2/x)\cdot s]$, then $\mathit{Ateb}(t)=\mathit{Ateb}((\lambda
      x.t_1)[s])\ \mathit{Ateb}(t_2)$. By induction hypothesis, there exists
      $u_1$ and $u_2$ such that $\mathit{Ateb}((\lambda x.t_1)[s]) \ra_B^* u_1$ and
      $u_1\rel (\lambda x.t_1)[s]$ and $\mathit{Ateb}(t_2) \ra_B^* u_2$ and
      $u_2\rel t_2$. There are two cases with respect to the form of $u_1$.
      \begin{itemize}
      \item If $u_1=\lambda x.v_1$ (and so $\neg
        PR(s)$), then we take $u=v_1[(u_2/x)\cdot id]$. We need to check that $u\rel t$
        and the difficulty resides in the proof of
        $\sigma(u)=\sigma(t)$. By hypothesis, we have $\sigma(\lambda
        x.v_1) = \sigma((\lambda x.t_1)[s])$. It is obvious that $\sigma(\lambda
        x.v_1) = \sigma((\lambda x.v_1)[id]) =
        \lambda y.(\sigma(\sigma(v_1)[(y/x)\cdot id]))$. On the other hand, we have
        $\sigma((\lambda x.t_1)[s]) =
        \lambda y.(\sigma(\sigma(t_1)[(y/x)\cdot \sigma(s)]))$, which gives us
        $\sigma(\sigma(v_1)[(y/x)\cdot id]) =
        \sigma(\sigma(t_1)[(y/x)\cdot \sigma(s)])$. We can apply Lemma~\ref{l:egal-sigma-lsn}
        with $\sigma(u_2)$ that is equal to $\sigma(t_2)$ by hypothesis, that gives us
        $\sigma(\sigma(v_1)[(\sigma(u_2)/x)\cdot id]) =
        \sigma(\sigma(t_1)[(\sigma(t_2)/x)\cdot s])$ and we can conclude.
      \item If $u_1=(\lambda x.v_1)[s_1]$, then we take
        $u=v_1[(u_2/x)\cdot s_1]$ and we conclude similarly with
        the help of Lemma~\ref{l:egal-sigma-lsn}.
      \end{itemize}
  \end{itemize}
\end{proof}

\subsubsection{Simulation}

The simulation will be similar to that defined for $\lsigma$.

\begin{lemma}[Simulation]\label{l:simul-lsn}
  For all $t$ reducing with the rule $B$ to $t'$,
  for all $u\rel t$, there exists $u'$ such that
  $u$ reduces in one step to $u'$ and $u'\rel t'$.
  For all the other rules, for all $t$ reducing to $t'$,
  for all $u\rel t$, there exists $u'$ such that
  $u$ reduces in zero or some steps to $u'$ and $u'\rel t'$.
\end{lemma}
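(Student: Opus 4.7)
The plan is to adapt the proof of Lemma~\ref{l:simul-ls} to $\lambda_{\sigma n}$, splitting the argument on whether the rule applied to $t$ is $B$ or belongs to $\sigma$. For a $\sigma$-step $t \ra_\sigma t'$ one has $\sigma(t) = \sigma(t')$ by construction of the $\sigma$-normal form, so combined with $\sigma(u) = \sigma(t)$ from $u \rel t$ I already get $\sigma(u) = \sigma(t')$. Since $u \skelin t$ may let $u$ carry extra $id$'s or compositions at non-redexable positions, I distinguish whether the exact redex fired in $t$ is also present in $u$: if it is, I fire the matching rule to obtain $u \ra_\sigma u'$ with $\sigma(u') = \sigma(u) = \sigma(t')$; if it is not, I take $u' = u$. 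In both subcases $u' \skelin t'$ follows because $\sigma$-rules do not change the redexable structure, hence $u' \rel t'$.

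The case of rule $B$ is the harder one, and here the definition of $\skelin$ through the predicate $PR$ (Definition~\ref{def:redexabilite}) is decisive: it forbids $u$ to hide an application or a $\lambda$ relative to $t$, so from $t = C[(\lambda x.v)\,w]$ one necessarily has $u = C'[(\lambda x.v')\,w']$ for matching skeleton pieces. Firing $B$ on both sides yields $t' = C[v[(w/x)\cdot id]]$ and $u' = C'[v'[(w'/x)\cdot id]]$; the relation $u' \skelin t'$ is then immediate, and only $\sigma(u') = \sigma(t')$ remains to be established. As in the $\lambda\sigma$ proof, I split according to whether the fired redex survives in $\sigma(t)$. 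In the erasure subcase (which in $\lambda_{\sigma n}$ must happen through a $\mathit{VarCons2}$ step discarding the $t$-component of some $(t/y)\cdot s$ applied to a variable $x \neq y$), firing the redex beforehand does not change the normal form, so $\sigma(t') = \sigma(t) = \sigma(u) = \sigma(u')$. Otherwise I compute $\ul{t'}$ and $\ul{u'}$ explicitly by pushing the generated substitution through the context via $\mathit{Clos}$ and $\mathit{Map}$, and match the two using Lemma~\ref{l:egal-sigma-lsn} applied to the terms $\sigma(w)$ and $\sigma(w')$, which are forced to be equal by inspecting the common $\sigma$-normal form $\sigma(u) = \sigma(t)$ at the redex position.

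The main obstacle will be the erasure subcase of the $B$ analysis: one has to argue that the same erasure happens in both $t$ and $u$, which requires tracking how the substitution cluster produced by the $B$-step propagates through nested $\mathit{Clos}$ and $\mathit{Map}$ steps until it reaches the $\mathit{VarCons2}$ that discards it, and showing that this propagation is insensitive to the $\skelin$-induced differences between $t$ and $u$. The remaining ingredients, namely the preservation of $\sigma$-normal forms under $\sigma$-rules and the use of Lemma~\ref{l:egal-sigma-lsn} in the surviving-redex subcase, should be direct transcriptions of the $\lambda\sigma$ arguments, with the added convenience that the absence of $\ua$ in $\lambda_{\sigma n}$ makes the $\sigma$-expansions shorter.
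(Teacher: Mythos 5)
Your proposal follows essentially the same route as the paper: the $\sigma$-rules are handled by preservation of $\sigma$-normal forms with a case split on whether the redex occurs in $u$, and the $B$-rule by exploiting the $PR$-based definition of $\skelin$ to force the redex to appear in $u$, then splitting on whether it survives in $\sigma(t)$ and concluding via $\mathit{Clos}$, $\mathit{Map}$ and Lemma~\ref{l:egal-sigma-lsn}. One small imprecision: what the common normal form forces to be equal is $\sigma(\ul{w}[\ul{s}])$ and $\sigma(\ul{w'}[\ul{s'}])$ (the arguments after the context's accumulated substitutions have been pushed in), not $\sigma(w)$ and $\sigma(w')$ themselves, and it is that common term which is fed to Lemma~\ref{l:egal-sigma-lsn}.
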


\begin{proof}
  For all the rules except $B$ (namely $\sigma$), the proof is simple.
  $u\rel t$ gives us $u\skelin t$ and $\sigma(u) =
  \sigma(t)$, and, on the other hand, $t\ra_{\sigma} t'$ implies
  $\sigma(t)=\sigma(t')$. There are two cases with respect to the fact
  that the redex appears in $u$. If not, we take $u'=u$
  and we conclude directly. Else, we reduce it with the same rule and
  we conclude with $\sigma(u') = \sigma(u) = \sigma(t')$.

  \medskip

  For the rule $B$, it's more complicated. The hypothesis is he same, but
  we are sure that the redex appears in $u$, that was the point of defining the relation $\skelin$
  with the help of the predicate $PR$. We then have $u\ra_B u'$ and we want to
  prove $u'\rel t'$. Even if it is obvious that $u'\skelin t'$ comes directly
  from $u\skelin t$, it is not the case for the equality of the $\sigma$-normal
  forms. We want $\sigma(u')=\sigma(t')$ with the hypothesis $\sigma(u)=\sigma(t)$.
  We take $t=C[(\lambda x.v)\ w]$, which
  gives us $t'=C[v[(w/x)\cdot id]]$ and $u=C'[(\lambda x.v')\
  w']$. Two cases are possible:
  \begin{itemize}
  \item the redex $(\lambda x.v)\ w$ does not appear in
    $\sigma(t)$. That means that the calculus of $\sigma(t)$ can be split as
    follows:

    $$
    C[(\lambda x.v)\ w]\ra^*_{\sigma} C_1[y[((C_2[(\lambda x.v)\
    w])/x)\cdot s]] \ra_{VarCons2} C_1[y[s]] \ra^*_{\sigma} \sigma(t)
    $$

    Since $\sigma(t)=\sigma(u)$, it occurs similarly for
    $u$. As for the redex, the reduct will be erased from
    $t'$ and from $u'$ and we get $\sigma(u')=\sigma(t')$.

  \item the redex $(\lambda x.v)\ w$ does appear in
    $\sigma(t)$. We will write, for all $t$, $\ul{t}$ for $\sigma(t)$,
    in order to clarify the presentation of the calculi. We have
    the following equalities:

    $$
    \begin{array}{rcl}
      \ul{t} & = & \sigma(C[(\lambda x.v)\ w]) \\
      & = & C_1[\sigma(((\lambda x.v)\ w)[\ul{s}])] \\
      & = & C_1[(\lambda y.\sigma(\ul{v}[(y/x)\cdot\ul{s}]))\ \sigma(\ul{w}[\ul{s}])] \\
    \end{array}
    $$

    And, similarly,
    $\ul{u}=C'_1[(\lambda y.\sigma(\ul{v'}[(y/x)\cdot\ul{s'}]))\
    \sigma(\ul{w'}[\ul{s'}])]$. From $\ul{t}=\ul{u}$ we deduce
    $C_1=C'_1$, $\sigma(\ul{v}[(y/x)\cdot\ul{s}]) =
    \sigma(\ul{v'}[(y/x)\cdot\ul{s'}])$ and
    $\sigma(\ul{w}[\ul{s}]) = \sigma(\ul{w'}[\ul{s'}])$. We now look
    at $\ul{t'}$ and $\ul{u'}$:

    $$
    \begin{array}{rl}
      \ul{t'} = & \sigma(C[v[w\cdot id]]) \\
      = & C_1[\sigma(\ul{v}[\ul{w}\cdot id][\ul{s}])] \\
      =_{Clos} & C_1[\sigma(\ul{v}[(\ul{w}\cdot id)\circ\ul{s}])] \\
      =_{Map} & C_1[\sigma(\ul{v}[\ul{w}[\ul{s}]\cdot \ul{s}])] \\
      = & C_1[\sigma(\ul{v}[\sigma(\ul{w}[\ul{s}])\cdot \ul{s}])] \\
    \end{array}
    $$

    And similarly,
    $\ul{u'}=C'_1[\sigma(\ul{v'}[\sigma(\ul{w'}[\ul{s'}])\cdot
    \ul{s'}])]$. From the preceding equalities, we deduce
    $\ul{u'}=C'_1[\sigma(\ul{v'}[\sigma(\ul{w}[\ul{s}])\cdot
    \ul{s'}])]$, and we can conclude with the help of Lemma~\ref{l:egal-sigma-lsn}.
  \end{itemize}
\end{proof}

\begin{lemma}
  For all terms $t$, if $u\rel t$ and $u\in \Lambda^X_{SN}$, then $t\in \Lambda^X_{SN}$.
\end{lemma}

\begin{proof}
  By the simulation Lemma~\ref{l:simul-lsn}, and since the $\sigma$-calculus is
  terminating~\cite{AbaCarCurLev91}, if we have an infinite derivation
  of $t$, then we can also build one in $u$, and that gives us a contradiction.
\end{proof}

Since the $\mathit{Ateb}(t)$ function returns a term $t'$ that reduces to $u\rel t$
(by Lemma~\ref{l:Ateb-sim-lsn}), we know this technique can be applied to this calculus.

\section{$\lbdch$-calculus}

\newcommand{\coupe}[2]{\langle #1 | #2 \rangle}
\newcommand{\Varb}{Var^\perp}
\newcommand{\Dom}[1]{{\ensuremath{\mathit{Dom}}(#1)}}

The $\lbdch$-calculus is a symmetric non-deterministic calculus
that comes from classical logic. Its terms represent proof in classical sequent calculus.
We can add to it explicit substitutions ``à la'' $\lambda{\tt x}$.

\subsection{Definition}

We have four syntactic categories: terms, contexts, commands and substitutions ;
respectively denoted $v$, $e$, $c$ and $\tau$.
We give to variable sets:
$Var$ is the set of term variables (denoted $x$, $y$, $z$ etc.);
$\Varb$ is the set of context variables (denoted $\alpha$, $\beta$,
$\gamma$ etc.). We will denote by $*$ a variable for which the set to which it belong
does not care, and by $t$ an undetermined syntactic object among $v$, $e$ and $c$.

The syntax of the $\lbdch$-calculus is given by the following grammar:

$$
\begin{array}{l}
  c\ ::=\ \coupe{v}{e}\ |\ c\tau \\
  v\ ::=\ x\ |\ \lambda x.v\ |\ e\cdot v\ |\ \mu\alpha.c\ |\ v\tau \\
  e\ ::=\ \alpha\ |\ \alpha\lambda.e\ |\ v\cdot e\ |\ \tilde\mu x.c\
  |\ e\tau \\
  \tau ::= [x\la v]\ |\ [\alpha\la e] \\
\end{array}
$$

The source $\Dom{\tau}$ of $\tau$ is $x$ if
$\tau=[x \la v]$ and $\alpha$ if
$\tau=[\alpha\la e]$. The substituend $S(\tau)$ is $v$ and $e$ respectively.

The reduction rules are given below. Remark that the rules $(\mu)$ and $(\tilde\mu)$
gives a critical pair:

$$
\begin{array}{lrcll}
  (\beta) & \coupe{\lambda x.v}{v'\cdot e} & \ra &
  \coupe{v'}{\tilde\mu x.\coupe{v}{e}} \\
  (\tilde\beta) & \coupe{e'\cdot v}{\alpha\lambda.e} & \ra &
  \coupe{\mu\alpha.\coupe{v}{e}}{e'} \\
  (\mu) & \coupe{\mu\alpha.c}{e} & \ra & c[\alpha\la e] & \\
  (\tilde\mu) & \coupe{v}{\tilde\mu x.c} & \ra &
  c[x\la v] & \\ \\
  (c\tau) & \coupe{v}{e}\tau & \ra & \coupe{v\tau}{e\tau} & \\
  (x\tau 1) & x[x \la v] & \ra & v & \\ 
  (x\tau 2) & x\tau & \ra & x & \mbox{If } x\not\in \Dom{\tau} \\ 
  (\alpha\tau 1) & \alpha[\alpha\la e] & \ra & e & \\ 
  (\alpha\tau 2) & \alpha \tau & \ra & \alpha  & \mbox{If } \alpha \not\in \Dom{\tau} \\ 
  (\cdot\tau) & (v\cdot e)\tau & \ra & (v\tau)\cdot(e\tau) & \\
  (\tilde{\cdot}\tau) & (e\cdot v)\tau & \ra & (e\tau)\cdot(v\tau) & \\
  (\lambda\tau) & (\lambda x.v)\tau & \ra & \lambda
  x.(v\tau) & \\
  (\tilde\lambda\tau) & (\alpha\lambda.e)\tau & \ra &
  \alpha\lambda.(e\tau) & \\
  (\mu\tau) & (\mu\alpha.c)\tau & \ra & \mu\alpha.(c\tau) & \\
  (\tilde\mu\tau) & (\tilde\mu x.c)\tau & \ra & \tilde\mu x.(c\tau) & \\
\end{array}
$$

For the rules $(\mu\tau)$ and $(\tilde\lambda\tau)$
(resp. $(\tilde\mu\tau)$ and $(\lambda\tau)$) we might perform
$\alpha$-conversion on the bound variable $\alpha$ (resp. $x$) if necessary.
We add two simplification rules:

$$
\begin{array}{lrcll}
  (sv) & \mu\alpha.\coupe{v}{\alpha} & \ra & v & \mbox{Si } \alpha \not\in v \\ 
  (se) & \tilde\mu x.\coupe{x}{e} & \ra & e & \mbox{Si } x \not\in e 
\end{array}
$$

\medskip

Here follows the typing rules:

$$
\infer{\coupe{v}{e}:(\Gamma\vdash\Delta)}{\Gamma\vdash v:A|\Delta &
  \Gamma|e:A\vdash \Delta}
$$

$$
\begin{array}{rl}
  \infer{\Gamma|\alpha:A\vdash\Delta,\alpha:A}{} &
  \infer{\Gamma,x:A\vdash\Delta|x:A}{} \\ \\
  \infer{\Gamma|\alpha\lambda.e:A-B\vdash\Delta}{\Gamma|e:B\vdash\alpha:A,\Delta} &
  \infer{\Gamma\vdash\lambda x.v:A\ra B|\Delta}{\Gamma,x:A\vdash
    v:B|\Delta} \\ \\
  \infer{\Gamma|v\cdot e:A\ra B\vdash\Delta}{\Gamma\vdash
    v:A|\Delta & \Gamma|e:B\vdash\Delta} &
  \infer{\Gamma\vdash e\cdot v:A-B|\Delta}{\Gamma\vdash
    v:B|\Delta & \Gamma|e:A\vdash\Delta} \\ \\
  \infer{\Gamma|\tilde\mu x.c :A\vdash
    \Delta}{c:(\Gamma,x:A\vdash\Delta)} &
  \infer{\Gamma\vdash \mu\alpha.c
    :A|\Delta}{c:(\Gamma\vdash\alpha:A,\Delta)}
\end{array}
$$

$$
\begin{array}{rl}
  \infer{[x\la v]:(\Gamma,x:A\vdash\Delta)\Ra(\Gamma\vdash\Delta)}
  {\Gamma\vdash v:A|\Delta} &
  \infer{[\alpha\la e]:(\Gamma\vdash\alpha:A,\Delta)\Ra(\Gamma\vdash\Delta)}
  {\Gamma|e:A\vdash\Delta} \\ \\
  \infer{\Gamma'|e\tau:A\vdash\Delta'}{\Gamma|e:A\vdash\Delta & \tau:(\Gamma\vdash\Delta)\Ra(\Gamma'\vdash\Delta')} &
  \infer{\Gamma'\vdash v\tau:A|\Delta'}{\Gamma\vdash v:A|\Delta & \tau:(\Gamma\vdash\Delta)\Ra(\Gamma'\vdash\Delta')} \\
\end{array}
$$

$$
\infer{c\tau:(\Gamma'\vdash \Delta')}{c:(\Gamma\vdash\Delta) &
  \tau:(\Gamma\vdash\Delta)\Ra(\Gamma'\vdash\Delta')}
$$

\subsection{Strong normalization}

We define the $\mathit{Ateb}$ function as follows:

$$
\begin{array}{llll}
  \mathit{Ateb}(x) & = & x & \\
  \mathit{Ateb}(\alpha) & = & \alpha & \\
  \mathit{Ateb}(\coupe{v}{e}) & = & \coupe{\mathit{Ateb}(v)}{\mathit{Ateb}(e)} & \\
  \mathit{Ateb}(\lambda x.v) & = & \lambda x.\mathit{Ateb}(v) & \\
  \mathit{Ateb}(\alpha\lambda.e) & = & \alpha\lambda.\mathit{Ateb}(e) & \\
  \mathit{Ateb}(\mu\alpha.c) & = & \mu\alpha.\mathit{Ateb}(c) & \\
  \mathit{Ateb}(\tilde\mu x.c) & = & \tilde\mu x.\mathit{Ateb}(c) & \\
  \mathit{Ateb}(e\cdot v) & = & \mathit{Ateb}(e)\cdot\mathit{Ateb}(v) & \\
  \mathit{Ateb}(v\cdot e) & = & \mathit{Ateb}(v)\cdot\mathit{Ateb}(e) & \\ \\

  \mathit{Ateb}(c[x\la v]) & = & \coupe{\mathit{Ateb}(v)}{\tilde\mu x.\mathit{Ateb}(c)} & \\
  \mathit{Ateb}(c[\alpha\la e]) & = & \coupe{\mu\alpha.\mathit{Ateb}(c)}{\mathit{Ateb}(e)} & \\
  \mathit{Ateb}(v[x\la v']) & = & \mu\alpha.\coupe{\lambda x.\mathit{Ateb}(v)}{\mathit{Ateb}(v')\cdot\alpha} &
  \mbox{with } \alpha \mbox{ fresh} \\
  \mathit{Ateb}(v[\alpha\la e]) & = & \mu\beta.\coupe{\mu\alpha.\coupe{\mathit{Ateb}(v)}{\beta}}{\mathit{Ateb}(e)} &
  \mbox{with } \beta \mbox{ fresh} \\
  \mathit{Ateb}(e[x\la v]) & = & \tilde\mu y.\coupe{\mathit{Ateb}(v)}{\tilde\mu x.\coupe{y}{\mathit{Ateb}(e)}} &
  \mbox{with } y \mbox{ fresh} \\
  \mathit{Ateb}(e[\alpha\la e']) & = & \tilde\mu x.\coupe{\mathit{Ateb}(e')\cdot x}{\alpha\lambda.\mathit{Ateb}(e)} &
  \mbox{with } x \mbox{ fresh} \\
\end{array}
$$

It is obvious that for all $t$,
$\mathit{Ateb}(t)$ does not contain substitutions. We must check firstly that
the returned term is typeable, and secondly that it reduces to the original term.

\begin{lemma}
  $$\Gamma\vdash t:A \;\;\Ra\;\; \Gamma\vdash \mathit{Ateb}(t):A$$
\end{lemma}

\begin{proof}
  By induction of the typing derivation of $t$. The only interesting cases are those of
  substitutions.

  \begin{itemize}
    \item We type $c[x\la v]$

      $$
      \infer
        {c[x\la v]:(\Gamma\vdash\Delta)}
        {c:(\Gamma,x:A\vdash\Delta) &
          \infer
            {[x\la v]:(\Gamma,x:A\vdash\Delta)\Ra(\Gamma\vdash\Delta)}
            {\Gamma\vdash v:A|\Delta}}
      $$

      By induction hypothesis, we have $\mathit{Ateb}(c):(\Gamma,x:A\vdash\Delta)$ and
      $\Gamma\vdash \mathit{Ateb}(v):A|\Delta$. We can type $\mathit{Ateb}(c[x\la v])=
      \coupe{\mathit{Ateb}(v)}{\tilde\mu x.\mathit{Ateb}(c)}$ as follows

      $$
      \infer
      {\coupe{\mathit{Ateb}(v)}{\tilde\mu x.\mathit{Ateb}(c)}:(\Gamma\vdash\Delta)}
      {\Gamma\vdash \mathit{Ateb}(v):A|\Delta &
        \infer
        {\Gamma|\tilde\mu x.\mathit{Ateb}(c):A\vdash\Delta}
        {\mathit{Ateb}(c):(\Gamma,x:A\vdash\Delta)}}
      $$
    \item The case $c[\alpha\la e]$ is similar to the previous one by symmetry.

    \item We type $v[x\la v']$

      $$
      \infer
        {\Gamma\vdash v[x\la v']:A|\Delta}
        {\Gamma,x:B\vdash v:A|\Delta &
          \infer
            {[x\la v']:(\Gamma,x:B\vdash\Delta)\Ra(\Gamma\vdash\Delta)}
            {\Gamma\vdash v':B|\Delta}}
      $$

      By induction hypothesis, we have $\Gamma,x:B\vdash \mathit{Ateb}(v):A|\Delta$ and
      $\Gamma\vdash \mathit{Ateb}(v'):B|\Delta$. We can type $\mathit{Ateb}(v[x\la v'])=
      \mu\alpha.\coupe{\lambda x.\mathit{Ateb}(v)}{\mathit{Ateb}(v')\cdot\alpha}$
      as follows

      $$
      \infer
        {\Gamma\vdash\mu\alpha.\coupe{\lambda x.\mathit{Ateb}(v)}{\mathit{Ateb}(v')\cdot\alpha}:A|\Delta}
        {\infer
          {\coupe{\lambda
              x.\mathit{Ateb}(v)}{\mathit{Ateb}(v')\cdot\alpha}:(\Gamma\vdash\Delta,\alpha:A)}
          {\infer
            {\Gamma\vdash\lambda x.\mathit{Ateb}(v):B\ra A|\Delta,\alpha:A}
            {\infer
              {\Gamma,x:B\vdash\mathit{Ateb}(v):A|\Delta,\alpha:A}
              {\Gamma,x:B\vdash \mathit{Ateb}(v):A|\Delta}}&
            \infer
              {\Gamma|\mathit{Ateb}(v')\cdot\alpha:B\ra A\vdash\Delta,\alpha:A}
              {\infer
                {\Gamma\vdash\mathit{Ateb}(v'):B|\Delta,\alpha:A}
                {\Gamma\vdash\mathit{Ateb}(v'):B|\Delta} &
                \Gamma|\alpha:A\vdash\Delta,\alpha:A}}}
      $$
    \item We type $v[\alpha\la e]$

      $$
      \infer
        {\Gamma\vdash v[\alpha\la e]:A|\Delta}
        {\Gamma\vdash v:A|\Delta,\alpha:B &
          \infer
            {[\alpha\la e]:(\Gamma\vdash\Delta,\alpha:B)\Ra(\Gamma\vdash\Delta)}
            {\Gamma\vdash e:B|\Delta}}
      $$

      By induction hypothesis, we have $\Gamma\vdash \mathit{Ateb}(v):A|\Delta,\alpha:B$ and
      $\Gamma\vdash \mathit{Ateb}(e):B|\Delta$. We can type $\mathit{Ateb}(v[\alpha\la e])=
      \mu\beta.\coupe{\mu\alpha.\coupe{\mathit{Ateb}(v)}{\beta}}{\mathit{Ateb}(e)}$
      as follows

      $$
      \infer
        {\Gamma\vdash\mu\beta.\coupe{\mu\alpha.\coupe{\mathit{Ateb}(v)}{\beta}}{\mathit{Ateb}(e)}:A|\Delta}
        {\infer
          {\coupe{\mu\alpha.\coupe{\mathit{Ateb}(v)}{\beta}}{\mathit{Ateb}(e)}:(\Gamma\vdash\Delta,\beta:A)}
          {\infer
            {\Gamma\vdash\mu\alpha.\coupe{\mathit{Ateb}(v)}{\beta}:B|\Delta,\beta:A}
            {\infer
              {\coupe{\mathit{Ateb}(v)}{\beta}:(\Gamma\vdash\Delta,\beta:A,\alpha:B)}
              {\infer
                {\Gamma\vdash\mathit{Ateb}(v):A|\Delta,\beta:A,\alpha:B}
                {\Gamma\vdash \mathit{Ateb}(v):A|\Delta,\alpha:B} &
               \Gamma|\beta:A\vdash\Delta,\beta:A,\alpha:B}} &
           \infer
             {\Gamma\vdash \mathit{Ateb}(e):B|\Delta,\beta:A}
             {\Gamma\vdash \mathit{Ateb}(e):B|\Delta}}}
      $$

    \item The cases for $e[*\la t]$ are similar to the previous ones by symmetry.
    \end{itemize}
\end{proof}

\begin{lemma}
  $$\mathit{Ateb}(t)\ra^*t$$
\end{lemma}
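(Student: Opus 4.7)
The plan is to prove $\mathit{Ateb}(t)\ra^* t$ by structural induction on $t$, simultaneously on terms, contexts and commands. The non-substitution cases (variables, $\lambda$, $\alpha\lambda$, $\mu$, $\tilde\mu$, $\cdot$, $\coupe{\cdot}{\cdot}$) are immediate: $\mathit{Ateb}$ is homomorphic there, so the induction hypothesis applied to each immediate subterm yields the conclusion by contextual closure of the reduction. All the real work lies in the six substitution cases, and each is handled by the same recipe: fire the top-level $(\beta)$, $(\tilde\beta)$, $(\mu)$ or $(\tilde\mu)$ redexes that $\mathit{Ateb}$ introduces, propagate the resulting substitution with $(c\tau)$, delete the spurious applications of the substitution on fresh variables using $(x\tau 2)$ or $(\alpha\tau 2)$, eliminate the fresh outer $\mu$/$\tilde\mu$ abstraction with $(sv)$ or $(se)$, and finally invoke the induction hypothesis on the substituend and body.

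More concretely, for the command cases $c[x\la v]$ and $c[\alpha\la e]$ a single $(\tilde\mu)$ or $(\mu)$ step produces $\mathit{Ateb}(c)[x\la \mathit{Ateb}(v)]$ or $\mathit{Ateb}(c)[\alpha\la \mathit{Ateb}(e)]$, after which IH finishes the job. For $v[x\la v']$ the sequence is $(\beta)$, then $(\tilde\mu)$, then $(c\tau)$, then $(\alpha\tau 2)$ on the fresh $\alpha$, then $(sv)$; this exposes $\mathit{Ateb}(v)[x\la \mathit{Ateb}(v')]$, to which IH applies. The cases $v[\alpha\la e]$, $e[x\la v]$ and $e[\alpha\la e']$ are symmetric variants: $v[\alpha\la e]$ uses $(\mu)$, $(c\tau)$, $(\alpha\tau 2)$ on the fresh $\beta$, then $(sv)$; $e[x\la v]$ uses $(\tilde\mu)$, $(c\tau)$, $(x\tau 2)$ on the fresh $y$, then $(se)$; $e[\alpha\la e']$ uses $(\tilde\beta)$, $(\mu)$, $(c\tau)$, $(x\tau 2)$ on the fresh $x$, then $(se)$. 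In every case the freshness of the introduced variable is what justifies the $(\ast\tau 2)$ step, and the simplification rule $(sv)$ or $(se)$ is applicable precisely because the same fresh variable does not occur in $\mathit{Ateb}(v)$, $\mathit{Ateb}(e)$, etc.

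The main obstacle, such as it is, is bookkeeping rather than mathematical depth: one must verify in each case that the freshness side-conditions of $(x\tau 2)$, $(\alpha\tau 2)$, $(sv)$ and $(se)$ are actually met, and that the critical-pair ambiguity between $(\mu)$ and $(\tilde\mu)$ never bites us, since we are only exhibiting \emph{some} reduction sequence, not claiming confluence. Since $\mathit{Ateb}$ by construction introduces the auxiliary variables $\alpha$, $\beta$, $x$, $y$ as fresh, these side-conditions are discharged automatically. Once this is laid out case by case, contextual closure of $\ra^*$ combined with the induction hypotheses on the components $v$, $v'$, $e$, $e'$, $c$ gives $\mathit{Ateb}(t)\ra^* t$ uniformly.
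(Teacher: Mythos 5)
Your proposal is correct and follows essentially the same route as the paper: induction on $t$, with the substitution cases discharged by exactly the reduction sequences the paper uses ($(\tilde\mu)$ or $(\mu)$ alone for commands; $(\beta)/(\tilde\beta)$, $(\mu)/(\tilde\mu)$, $(c\tau)$, $(x\tau 2)/(\alpha\tau 2)$ on the fresh variable, then $(sv)/(se)$ for terms and contexts), followed by the induction hypothesis. The only difference is that you spell out the $e[\ast\la t]$ cases that the paper dispatches ``by symmetry,'' and you make the freshness side-conditions explicit, which is a welcome precision.
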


\begin{proof}
  By induction on $t$. The only interesting cases are those of
  substitutions.

  \begin{itemize}
    \item We have $\mathit{Ateb}(c[x\la v])=\coupe{\mathit{Ateb}(v)}{\tilde\mu
        x.\mathit{Ateb}(c)}$ and

      $$
      \coupe{\mathit{Ateb}(v)}{\tilde\mu x.\mathit{Ateb}(c)}\ra_\mu
      \mathit{Ateb}(c)[x\la \mathit{Ateb}(v)]
      $$

      We conclude by induction hypothesis.
    \item The case $c[\alpha\la e]$ is similar to the previous one by symmetry.

    \item We have $\mathit{Ateb}(v[x\la v'])=
      \mu\alpha.\coupe{\lambda x.\mathit{Ateb}(v)}{\mathit{Ateb}(v')\cdot\alpha}$ and

      $$
      \begin{array}{c}
        \mu\alpha.\coupe{\lambda x.\mathit{Ateb}(v)}{\mathit{Ateb}(v')\cdot\alpha} \\
        \da\beta \\
        \mu\alpha.\coupe{\mathit{Ateb}(v')}{\tilde\mu x.\coupe{\mathit{Ateb}(v)}{\alpha}} \\
        \da\tilde\mu \\
        \mu\alpha.(\coupe{\mathit{Ateb}(v)}{\alpha}[x\la \mathit{Ateb}(v')]) \\
        \da c\tau \\
        \mu\alpha.\coupe{\mathit{Ateb}(v)[x\la\mathit{Ateb}(v')]}{\alpha[x\la \mathit{Ateb}(v')]} \\
        \da \alpha\tau2 \\
        \mu\alpha.\coupe{\mathit{Ateb}(v)[x\la \mathit{Ateb}(v')]}{\alpha} \\
        \da sv \\
        \mathit{Ateb}(v)[x\la \mathit{Ateb}(v')]
      \end{array}
      $$

      We conclude by induction hypothesis.

    \item We have $\mathit{Ateb}(v[\alpha\la e])=
      \mu\beta.\coupe{\mu\alpha.\coupe{\mathit{Ateb}(v)}{\beta}}{\mathit{Ateb}(e)}$ and

      $$
      \begin{array}{c}
        \mu\beta.\coupe{\mu\alpha.\coupe{\mathit{Ateb}(v)}{\beta}}{\mathit{Ateb}(e)} \\
        \da\mu \\
        \mu\beta.(\coupe{\mathit{Ateb}(v)}{\beta}[\alpha\la \mathit{Ateb}(e)]) \\
        \da c\tau \\
        \mu\beta.\coupe{\mathit{Ateb}(v)[\alpha\la \mathit{Ateb}(e)]}{\beta[\alpha\la
          \mathit{Ateb}(e)]} \\
        \da \alpha\tau2 \\
        \mu\beta.\coupe{\mathit{Ateb}(v)[\alpha\la \mathit{Ateb}(e)]}{\beta} \\
        \da sv \\
        \mathit{Ateb}(v)[\alpha\la \mathit{Ateb}(e)]
      \end{array}
      $$

      We conclude by induction hypothesis.

    \item The cases for $e[*\la t]$ are similar to the previous ones by symmetry.
    \end{itemize}
\end{proof}

\section{Conclusion}

The technique formalized here gives a new tool to prove strong normalization of
calculi with explicit substitutions. As we have seen, the principle of the proof technique
is simple, and the difficulties arise in the definition of the reverse rewriting rule
that must satisfy precise criteria.

We applied this technique to several calculi, yielding the following results:
\begin{itemize}
\item $\lambda{\tt x}$: there is here no novelty since it is this case that
  originally inspired the technique.
\item $\lambda\upsilon$: we gives here the first strong normalization proof for
  this calculus.
\item $\ls$: this calculus does not enjoy PSN, but we showed that no further objection
  relies to prove strong normalization.
\item $\lsn$: as above.
\item $\lws$: the technique seems to fail due to the presence of labels. Further investigations
  would be necessary to find how this can be fixed..
\item $\lwsn$: the technique can be used, even if this calculus has currently no proof
  of PSN.
\end{itemize}

\bibliography{biblio}

\end{document}